\newcommand*{\EnsSig}{\mathit{\Sigma}}
\newcommand*{\EnsFrm}[1][]{\Psi_{\ifthenelse{\isempty{#1}}{\EnsSig}{#1}}}
\newcommand*{\CEAActs}[1][]{\mathcal{A}_{\ifthenelse{\isempty{#1}}{\EnsSig}{#1}}}
\newcommand*{\CEEActs}[1][]{\mathcal{E}_{\ifthenelse{\isempty{#1}}{\EnsSig}{#1}}}
\newcommand*{\NZ}{\mathbb{N}}
\newcommand*{\powerset}{\mathnormal{\wp}}
\newcommand*{\restrict}[2]{#1\mathnormal{\upharpoonright}#2}
\newcommand*{\truefrm}{\mathrm{true}}
\newcommand*{\falsefrm}{\mathrm{false}}
\newcommand{\code}[1]{\normalfont\texttt{\spaceskip=3pt\frenchspacing\def\{{\char123 }\def\}{\char125 }\def\^{\char94}\def\_{\char95}#1}}
\newcommand{\stacked}[2][1]{\bgroup\renewcommand{\arraystretch}{#1}\begin{array}[t]{@{}l@{}}#2\end{array}\egroup}
\newcommand{\stackedtext}[1]{\text{\begin{tabular}[c]{@{}l@{}}#1\end{tabular}}}
\mathchardef\cln\mathcode`\:
\newcommand*{\Paths}{\mathscr{P}}
\newcommand*{\St}{S}
\newcommand*{\Trans}{T}
\newcommand*{\RS}[1][]{\ifthenelse{\isempty{#1}}{S_{\omega}}{S_{#1}}}
\newcommand*{\RT}[1][]{\ifthenelse{\isempty{#1}}{T_{\omega}}{T_{#1}}}
\newcommand*{\obseq}[1][O]{\mathrel{\sim_{#1}}}
\newcommand*{\interp}[2]{{#1}^{#2}}
\newcommand*{\Prop}[1][]{P\ifthenelse{\isempty{#1}}{}{_{\mathit{#1}}}}
\newcommand*{\prop}[1][]{\ifthenelse{\isempty{#1}}{p}{\mathrm{#1}}}
\newcommand*{\Ag}[1][]{A\ifthenelse{\isempty{#1}}{}{_{\mathit{#1}}}}
\newcommand*{\ag}[1][]{\ifthenelse{\isempty{#1}}{a}{\mathrm{#1}}}
\newcommand*{\ESig}[1][]{(\Prop[#1], \Ag[#1])}
\newcommand*{\esig}[1][]{\Sigma\ifthenelse{\isempty{#1}}{}{_{\mathit{#1}}}}
\newcommand*{\ebas}[1][]{\mathsf{B}\ifthenelse{\isempty{#1}}{}{_{\mathit{#1}}}}
\newcommand*{\ETS}[1][\Prop, \Ag]{\mathscr{M}_{#1}}
\newcommand*{\EMCTS}[1][\Prop, \Ag]{\mathscr{Y}_{#1}}
\newcommand*{\estr}{K}
\newcommand*{\EFrm}[1][]{\Phi_{\ifthenelse{\isempty{#1}}{\Props, \Ags}{#1}}}
\newcommand*{\DEFrm}[1][]{\Psi_{\ifthenelse{\isempty{#1}}{\Props, \Ags}{#1}}}
\newcommand*{\Props}[1][P]{#1}
\newcommand*{\Ags}[1][A]{#1}
\newcommand*{\agg}[1][]{\ifthenelse{\isempty{#1}}{G}{\mathrm{#1}}}
\newcommand*{\stt}[1][]{\ifthenelse{\isempty{#1}}{s}{\mathrm{#1}}}
\newcommand*{\CEActs}[1][]{\mathcal{A}_{\ifthenelse{\isempty{#1}}{\Props, \Ags}{#1}}}
\newcommand*{\ChEActs}[1][]{\mathcal{A}^{{?}+}_{\ifthenelse{\isempty{#1}}{\Props, \Ags}{#1}}}
\newcommand*{\ptdEU}[1][]{\mathfrak{\ifthenelse{\isempty{#1}}{u}{#1}}}
\newcommand*{\pnt}[1][]{\ifthenelse{\isempty{#1}}{q}{\mathsf{#1}}}
\newcommand*{\TEMIC}{\textsc{tEmIc}\xspace}
\newcommand*{\CTLK}{CTLK\xspace}
\newcommand*{\gact}[2]{#1 \supset #2}
\DeclareMathOperator{\nnf}{nnf}
\newcommand*{\nmodels}{\models_{\text{\normalfont n}}}
\newcommand*{\pmodels}{\models_{\text{\normalfont p}}}
\newcommand*{\cmodels}{\models}
\newcommand*{\negat}{\mathop{\boldsymbol{\nshortparallel}}}
\newcommand*{\Kop}{\mathsf{K}}
\newcommand{\K}[2]{\mathnormal{\Kop}_{#1}\!\mathop{}#2}
\newcommand*{\Mop}{\mathsf{M}}
\newcommand{\M}[2]{\mathnormal{\Mop}_{#1}\!\mathop{}#2}
\newcommand*{\Eop}{\mathsf{E}}
\newcommand*{\testact}[1]{#1{?}}
\newcommand*{\Th}[1][]{\mathit{Th}_{\ifthenelse{\isempty{#1}}{\EFrm}{#1}}}
\newcommand*{\Aop}{\mathsf{A}}
\newcommand*{\Xop}{\mathsf{X}}
\newcommand*{\Gop}{\mathsf{G}}
\newcommand*{\Fop}{\mathsf{F}}
\newcommand*{\Uop}{\mathsf{U}}
\newcommand*{\Rop}{\mathsf{R}}
\newcommand*{\EX}[1]{\mathnormal{\Eop\Xop}\!\mathop{}#1}
\newcommand*{\AX}[1]{\mathnormal{\Aop\Xop}\!\mathop{}#1}
\newcommand*{\EG}[1]{\mathnormal{\Eop\Gop}\!\mathop{}#1}
\newcommand*{\AG}[1]{\mathnormal{\Aop\Gop}\!\mathop{}#1}
\newcommand*{\EF}[1]{\mathnormal{\Eop\Fop}\!\mathop{}#1}
\newcommand*{\AF}[1]{\mathnormal{\Aop\Fop}\!\mathop{}#1}
\newcommand*{\EU}[2]{\mathnormal{\Eop}[#1\mathbin{\Uop}#2]}
\newcommand*{\AR}[2]{\mathnormal{\Aop}[#1\mathbin{\Rop}#2]}
\newcommand{\structrule}[2]{%
  \dfrac{%
    \renewcommand{\arraystretch}{1.1}%
    \begin{array}{@{}c@{}}#1\end{array}%
  }{%
    \renewcommand{\arraystretch}{1.1}%
    \begin{array}{@{}l@{}}#2\end{array}%
  }%
}
\newcommand*{\trans}[2][M]{\xrightarrow{#2}\ifthenelse{\isempty{#1}}{}{_{#1}}}
\newcommand*{\transMone}[2][M_1]{\xrightarrow{#2}\ifthenelse{\isempty{#1}}{}{_{#1}}}
\newcommand*{\transMtwo}[2][M_2]{\xrightarrow{#2}\ifthenelse{\isempty{#1}}{}{_{#1}}}
\newcommand*{\nof}[1]{\text{\ensuremath{#1}}}
\newcommand*{\LEFrm}[1][]{\mathcal{K}\ifthenelse{\equal{#1}{}}{}{_{#1}}}
\newcommand{\splitatcommas}[1]{%
  \begingroup
  \ifnum\mathcode`,="8000
  \else
    \begingroup\lccode`~=`, \lowercase{\endgroup
      \edef~{\mathchar\the\mathcode`, \penalty0 \noexpand\hspace{0pt plus .1em}}%
    }\mathcode`,="8000
  \fi
  #1%
  \endgroup
}
\newcommand*{\Seqop}{\mathrel{\Rightarrow}}
\newcommand*{\Seq}[2]{%
\ifthenelse{\equal{#1}{}\AND\equal{#2}{}}%
  {\vphantom{,\Delta}\mathnormal{\Seqop}\vphantom{,\Delta}}%
  {\ifthenelse{\equal{#1}{}}%
    {\mathnormal{\Seqop}\;\vphantom{,\Delta}\splitatcommas{#2}}%
    {\ifthenelse{\equal{#2}{}}%
      {\splitatcommas{#1}\vphantom{,\Delta}\;\mathnormal{\Seqop}}%
      {\vphantom{,\Delta}\splitatcommas{#1}\Seqop\splitatcommas{#2}\vphantom{,\Delta}}}}%
}
\newcounter{rulecounter}
\NewDocumentCommand{\ruletag}{+m}{%
  \protected@edef\rulet@g{\ensuremath{(\normalfont\text{#1})}}%
  \refstepcounter{rulecounter}%
  \cref@constructprefix{rule}{\cref@result}%
  \protected@edef\@currentlabel{\rulet@g}%
  \protected@edef\@currentlabelname{\rulet@g}%
  \protected@edef\cref@currentlabel{%
    [rule][][\cref@result]%
    \rulet@g%
  }%
  \rulet@g%
}
\crefname{rule}{}{}
\Crefname{rule}{}{}
\newcommand{\ie}{i.\kern1pt e.\relax\xspace}
\newcommand{\eg}{e.\kern1pt g.\relax\xspace}
\newcommand{\wrt}{w.\kern1pt r.\kern1pt t.\relax\xspace}
\title{Interpreting Knowledge-based Programs (Extended Version with Proofs)\thanks{A short version without proofs has been submitted to ESOP 23.}}
\author{%
  Alexander Knapp\inst{1} \and
  Heribert Mühlberger\inst{1} \and
  Bernhard Reus\inst{2}
}
\institute{
  Universität Augsburg, Germany\\
  \email{$\{$knapp$,\;$muehlber$\}$@informatik.uni-augsburg.de}\\[.5ex]
\and 
  University of Sussex, U.K.\\
  \email{bernhard@sussex.ac.uk}
}
\begin{document}

\maketitle

\begin{abstract}
Knowledge"=based programs specify multi"=agent protocols with epistemic guards
that abstract from how agents learn and record facts or information about other
agents and the environment.  Their interpretation involves a non"=monotone
mutual dependency between the evaluation of epistemic guards over the reachable
states and the derivation of the reachable states depending on the evaluation of
epistemic guards.  We apply the technique of a must/cannot analysis invented for
synchronous programming languages to the interpretation problem of
knowledge"=based programs and demonstrate that the resulting constructive
interpretation is monotone and has a least fixed point.  We relate our approach
with existing interpretation schemes for both synchronous and asynchronous
programs.  Finally, we describe an implementation of the constructive
interpretation and illustrate the procedure by several examples and an
application to the Java memory model.
\end{abstract}

\section{Introduction}\label{sec:intro}

Knowledge"=based programs~\cite{fagin-et-al:2003} describe multi"=agent systems
based on explicit knowledge tests on what an agent knows or does not know about
itself, other agents, and the environment: Extending standard programs, an agent
may look beyond what it can directly observe by reasoning about the possible
states of the other agents and the environment in all possible program
executions.  Such non-local, epistemic conditions abstract from how an agent
may learn and record particular environmental facts or information about other
agents.
Thus knowledge"=based programs rather are specifications of (multi"=agent)
protocols that may be implemented by standard, directly executable programs.
For being implementable in the first place, however, it has to be ensured that
the knowledge guards can be resolved consistently given all possible program
executions.

Consider for example a bit transmission~\cite[Ex.~4.1.1,
Ex.~7.1.1]{fagin-et-al:2003}, where a sender $\ag[S]$ has to transmit a bit
$\prop[sbit]$ over a lossy channel to a receiver $\ag[R]$ who has to acknowledge
the reception, again over a lossy channel.
This can be modelled by a knowledge"=based program over the state variables
$\prop[sbit] \in \{ 0, 1 \}$, $\prop[rval] \in \{ \bot, 0, 1 \}$, and
$\prop[ack] \in \{ 0, 1 \}$ as follows: $\ag[S]$ can only directly observe
(read) $\prop[sbit]$ and $\prop[ack]$, and $\ag[R]$ only $\prop[rval]$ (but both
may write all variables); $(\K{\ag[R]}{\prop[sbit] = 0}) \lor
(\K{\ag[R]}{\prop[sbit] = 1})$ expresses that $\ag[R]$ knows $\prop[sbit]$'s
value and is abbreviated by $\K{\ag[R]}{\mathit{sbit}}$.  The behaviour
description consists of a looping guarded command with two branches (where
\code{or} means a non"=deterministic choice and \code{skip} doing nothing) that
is started with $\prop[rval] = \bot$ and $\prop[ack] = 0$, but $\prop[sbit]$
left undetermined:
%
\begin{equation*}
\begin{array}[t]{@{}r@{\ }l@{\ }l@{\quad}l@{}}
\textbf{\texttt{do}} & \neg\K{\ag[S]}{\K{\ag[R]}{\mathit{sbit}}} \rightarrowtriangle (\prop[rval] \gets \prop[sbit]\ \code{or}\ \code{skip}) & & \text{--- }\ag[S]\\
\talloblong & \K{\ag[R]}{\mathit{sbit}} \land \neg\K{\ag[R]}{\K{\ag[S]}{\K{\ag[R]}{\mathit{sbit}}}} \rightarrowtriangle (\prop[ack] \gets 1\ \code{or}\ \code{skip}) & \textbf{\texttt{od}} & \text{--- }\ag[R]
\end{array}
\end{equation*}
The epistemic formulæ $\K{\ag}{\varphi}$ in the program are to be interpreted as
in classical Kripke semantics: $\varphi$ holds in all states (or worlds) that
agent $\ag$ currently deems possible.  Which states these are is regulated on
the one hand by what state information $\ag$ can observe: if the current state
is indistinguishable from another on account of these available observations,
both are possible for the agent.  In the example above only $\ag[S]$ can
observe $\prop[sbit]$, though, due to the protocol, it should be possible that
eventually $\ag[R]$ knows its value.  On the other hand, the possible states
depend on which runs of the knowledge"=based program may actually happen, \ie,
which states are reachable when evaluating the epistemic guards and taking
transitions: If only the actions of the program are taken, it is impossible to
reach a state satisfying both $\prop[rval] \neq \bot$ and $\prop[rval] \neq
\prop[sbit]$, which, however, is present in the global state space; but it is
decisive that it is not reachable in any execution in order to have some
execution where $\K{\ag[R]}{\mathit{sbit}}$ can become true.  Conversely, there
is an execution where $\K{\ag[R]}{\mathit{sbit}}$ (and
$\K{\ag[S]}{\K{\ag[R]}{\mathit{sbit}}}$) remains false.

The interpretation of knowledge"=based programs  hinges precisely on this mutual
dependency between the evaluation of epistemic guards over the reachable states
and the derivation of the reachable states depending on the evaluation of the
epistemic guards. This implicit definition of the epistemic state of the agents
by the observables and the reachable states of the commonly known protocol is in
stark contrast to Baltag's epistemic action
models~\cite{baltag-moss:synthese:2004,van-ditmarsch-van-der-hoek-kooi:2008},
where the epistemic state is given and manipulated explicitly.  In many cases,
including the bit transmission protocol, the reachable state space may be
computed using static analysis techniques without taking into account the
epistemic nature of the guards.  However, the interplay between knowledge and
reachability may sometimes become more intricate: The more states are reachable
the less is known definitely, and the guards will in turn influence what is
reachable positively or negatively.

Consider, for another example, a variable setting
problem~\cite[Exc.~7.5]{fagin-et-al:2003} involving a single agent $\ag[a]$ and
a single state variable $\prop[x] \in \{ 0, 1, 2, 3 \}$, where $\ag[a]$ cannot
observe $\prop[x]$ directly.  The agent executes the following guarded command
starting with $\prop[x] = 0$:
\begin{equation*}
\begin{array}[t]{@{}r@{\ }l@{\ }r@{}}
  \textbf{\texttt{if}} & \K{\ag[a]}{\prop[x] \neq 1} \rightarrowtriangle \prop[x] \gets 3
\\
  \talloblong          & \K{\ag[a]}{\prop[x] \neq 3} \rightarrowtriangle \prop[x] \gets 1 & \textbf{\texttt{fi}}
\end{array}
\end{equation*}
Clearly, $\prop[x] = 0$ definitely is reachable and $\prop[x] = 2$ definitely is
not reachable.  However, two different sets of reachable states make for a
consistent interpretation of the knowledge guards for the remaining values: $\{
\prop[x] = 0, \prop[x] = 1 \}$, where $\K{\ag[a]}{\prop[x] \neq 1}$ is false and
$\K{\ag[a]}{\prop[x] \neq 3}$ is true, and $\{ \prop[x] = 0, \prop[x] = 3 \}$,
with the opposite results.  The singleton set $\{ \prop[x] = 0 \}$ is ruled out,
since both guards would be true such that $\prop[x] = 3$ and $\prop[x] = 1$ are
reachable; and $\{ \prop[x] = 0, \prop[x] = 1, \prop[x] = 3 \}$ is impossible,
since both guards are false and thus neither $\prop[x] = 1$ nor $\prop[x] = 3$
are reachable.
Breaking this cycle by making one of the transitions unconditional on knowledge as, \eg, in
\begin{equation*}
\begin{array}[t]{@{}r@{\ }l@{\ }l@{}}
  \textbf{\texttt{if}} & \K{\ag[a]}{\prop[x] \neq 1} \rightarrowtriangle \prop[x] \gets 3\\
  \talloblong & \K{\ag[a]}{\prop[x] \neq 3} \rightarrowtriangle \prop[x] \gets 2\\
  \talloblong & \truefrm \rightarrowtriangle \prop[x] \gets 1 & \textbf{\texttt{fi}}
\end{array}
\end{equation*}
yields a knowledge"=based program with the unique solution $\{ \prop[x] = 1,
\prop[x] = 2 \}$.  For computing its behaviour, however, several steps are
needed, first reasoning that $\prop[x] = 1$ is reachable, then that $\prop[x] =
3$ is not reachable, and, finally, that $\prop[x] = 2$ is reachable.

\paragraph{Related Work.}
In their introduction and seminal treatise on knowledge"=based
programs~\cite{fagin-et-al:dc:1997,fagin-et-al:2003}, Fagin et al.\ characterise
the unique interpretability of such programs by their ``dependence on the past''
\wrt some non-empty class of transition systems: It has to be ensured that the
evaluation of knowledge guards in a state coincides for all interpretations in
the class that share a common past of the current state.  A sufficient condition
for dependence on the past is that the program ``provides epistemic witnesses''
for all interpretations of the class such that not knowing something at some
point in time has a counter example in the past.  A sufficient condition for the
provision of epistemic witnesses, in turn, is that the program is
``synchronous'', \ie, that all agents can determine the global time from their
local states.  For example, the bit transmission protocol does provide epistemic
witnesses and thus is uniquely interpretable; but it is not synchronous.  The
cycle-breaking variable setting program is also uniquely interpretable, but does
not provide epistemic witnesses and is thus not synchronous, despite having only
one agent.  For ``asynchronous'' knowledge"=based programs, De Haan et
al.~\cite{de-haan-et-al:fundinf:2004} suggest to rely on classical iteration of
the non"=monotone reachability functional that interprets the knowledge
modalities according to what currently is assumed to be reachable.  The
computation process is started with all states assumed to be reachable and stops
when some set of states is repeated.  This approach fixes some semantics for all
knowledge"=based programs, also for those which are cyclic and contradictory or
only self-fulfilling.

The problem of mutual dependence of guard evaluation and reachability has also
occurred in the design of synchronous programming
languages~\cite{beneviste-et-al:ieee:2003} for embedded systems, like
Esterel~\cite{berry:milner:2000} or Lustre~\cite{halbwachs-et-al:ieee:1991},
which rely on ``perfect synchrony'': a step for reacting to some inputs takes
zero time and output signals are produced at exactly the same time as the input
signals.  Since thus the status of a signal to be produced can be queried at the
same time, this requires ``logical coherence'' saying that a (non-input) signal
is present in a step of execution if, and only if, a command emitting this
signal is executed in this step.  Whereas Lustre forbids cyclic programs on a
syntactic basis, Berry's approach to the semantics of
Esterel~\cite{berry:esterel} singles out ``reactive'' --- at least one execution
--- and ``determinate'' --- at most one execution --- programs using a static
executability analysis: It is computed which signals \emph{must} be present,
\ie, have to occur inevitably, and which signals \emph{cannot} be present, \ie,
have no emitting execution.  This is also referred to as must/cannot analysis
and has to be performed several times for finding a fixed point of all the
signal statuses.  Notably, it does not require witnesses in the past.

\paragraph{Contributions.}
We apply the principles of the must/cannot analysis to the interpretability
problem of knowledge"=based programs.  After recalling some basic notions of
epistemic logic and epistemic transition structures (\cref{sect:epistemic}), we
first recapitulate the approaches by Fagin et al.~\cite{fagin-et-al:2003} and De
Haan et al.~\cite{de-haan-et-al:fundinf:2004} in terms of epistemically guarded
transition systems, a syntax-agnostic format for knowledge"=based programs
(\cref{sect:kbp}).  In contrast to those designs our account is state"=based
rather than run"=based which offers a more direct analysis.  We demonstrate the
results and the limits of both interpretation schemes by several examples that
illustrate (a"=)synchronicity and non"=monotone interpretation for cyclic,
contradictory, or self"=fulfilling programs.  The latter behaviour is the main
motivation for our reformulation of the interpretation problem in terms of
epistemic must/can transition structures which offer lower and upper bounds on
the behaviour of a knowledge"=based program (\cref{sect:interp}).  We show that
this constructive interpretation is always monotone and yields a least fixed
point.  However, lower and upper bound of the fixed point need not always
coincide and we relate decided fixed points with the notions of ``providing
epistemic witnesses'' and synchronicity.  We then generalise the must/can
approximation technique to inference systems with positive and negative
premisses (\cref{sect:rules}).  The behaviour of a knowledge"=based program can
be represented by a rule system that also allows to express that some fact
cannot be deduced; the monotonicity and fixed point properties can be directly
transferred to these general rule systems.  We finally describe an
implementation of our constructive interpretation approach in the ``Temporal
Epistemic Model Interpreter and Checker'' (\TEMIC, \cref{sect:temic}).  For
model checking interpreted knowledge"=based programs, the tool supports \CTLK,
the combination of ``Computational Tree Logic'' (CTL) with epistemic logic.
Moreover, this logic can also be used in program guards; the interpretation of
such temporal"=epistemic programs extends the previous approaches.  We give some
applications to the analysis of the Java memory model.

\section{Epistemic Logic and Epistemic Transition Structures}\label{sect:epistemic}

We briefly summarise the basic notions of epistemic logic for expressing
knowledge
guards~\cite{van-ditmarsch-van-der-hoek-kooi:2008,van-ditmarsch-et-al:hel:2015}.
We then define epistemic transition structures as the domain of interpretation
of knowledge"=based programs.  These transition structures combine the temporal
dimension of executing a program with the epistemic dimension for evaluating
what agents know.  Both the logic and the transition structures are built over
an \emph{epistemic signature} $\esig = \ESig$ that consists of a set of
\emph{propositions} $\Prop$ and a set of \emph{agents} $\Ag$.

\subsection{Epistemic Logic}

An \emph{epistemic structure} $K = (W, R, L)$ over $\ESig$ is given by a set of
\emph{worlds} $W$, an $\Ag$-family of epistemic \emph{accessibility relations}
$R = (R_{\ag} \subseteq W \times W)_{\ag \in \Ag}$, and a \emph{labelling} $L : W
\to \powerset \Prop$.  In concrete examples, we will require $R_{\ag}$ to be an
equivalence relation such that if $(w_1, w_2) \in R_{\ag}$, then agent $\ag$
cannot distinguish between the two worlds $w_1$ and $w_2$.  The \emph{epistemic
  formulæ} $\varphi \in \EFrm$ over $\ESig$ are defined by the following
grammar:
\begin{align*}
  \varphi
&{\;\cln\cln=\;}\begin{array}[t]{@{}l@{}}
   \prop \;\mid\; \falsefrm \;\mid\; \neg\varphi \;\mid\; \varphi_1 \land \varphi_2\;\mid\; \K{\ag}{\varphi}
\end{array}
\end{align*}
where $\prop \in \Prop$ and $\ag \in \Ag$.  The epistemic formula
$\K{\ag}{\varphi}$ is to be read as ``agent $\ag$ \emph{knows $\varphi$}''.  We
use the usual propositional abbreviations $\truefrm$ for $\neg\falsefrm$ and
$\varphi_1 \lor \varphi_2$ for $\neg(\neg\varphi_1 \land \neg\varphi_2)$.
Furthermore, we consider the epistemic modality $\Mop$ as the dual of $\Kop$,
such that $\M{\ag}{\varphi}$ abbreviates $\neg\K{\ag}{\neg\varphi}$ and is to be
read as ``agent $\ag$ \emph{deems $\varphi$ possible}''.  The \emph{satisfaction
  relation} of an epistemic formula $\varphi \in \EFrm$ over an epistemic
structure $K = (W, R, L)$ over $\ESig$ at a world $w \in W$, written $K, w
\models \varphi$, is inductively defined by
\begin{align*}
  K, w &\models \prop \iff \prop \in L(w)
\\
  K, w &\not\models \falsefrm
\\
  K, w &\models \neg\varphi \iff K, w \not\models \varphi
\\
  K, w &\models \varphi_1 \land \varphi_2 \iff \text{$K, w \models  \varphi_1$ and $K, w \models \varphi_2$}
\\
  K, w &\models \K{\ag}{\varphi} \iff K, w' \models \varphi \text{ f.\,a. } w' \in W \text{ with } (w, w') \in R_{\ag}
\end{align*}

\subsection{Epistemic Transition Structures}\label{sect:ets}

An epistemic transition structure combines a temporal transition relation with an
epistemic accessibility relation over a common set of states.  The transitions
describe which states can be reached from a set of initial states, the
accessibilities which states are indistinguishable.  Knowledge formulæ are
evaluated over the associated global epistemic structure.  This derived
structure has the reachable states as its worlds and reuses the accessibility
relation but restricted to the reachable states.

Formally, an \emph{epistemic transition structure} $M = (S, E, L, S_0, T)$ over
$\ESig$ is given by an epistemic structure $(S, E, L)$, a set of temporally
\emph{initial states} $S_0 \subseteq S$, and a temporal \emph{transition
  relation} $T \subseteq S \times S$.  We write $\St(M)$ for $S$, $\Trans(M)$
for $T$, etc.  The (temporally) \emph{reachable states} $\RS(M) = \bigcup_{0
  \leq k} \RS[k](M)$ and \emph{transition relation} $\RT(M) = \bigcup_{0 \leq k}
\RT[k](M)$ of $M$ are inductively defined by
\begin{gather*}
\RS[0](M) = S_0
\text{,}\quad
\RS[k+1](M) = \RS[k](M) \cup \{ s' \mid \text{ex.\ $s \in \RS[k](M)$ s.\,t.\ $(s, s') \in T$} \}
\\
\RT[0](M) = \emptyset
\text{,}\quad
\RT[k+1](M) = \RT[k](M) \cup \{ (s, s') \in T \mid s \in \RS[k](M) \}
\ \text{.}
\end{gather*}
The associated \emph{epistemic structure} of $M$ is given by $\estr(M) =
(\RS(M),\allowbreak E \cap \RS(M)^2,\allowbreak \restrict{L}{\RS(M)})$.  The
\emph{satisfaction relation} of an epistemic formula $\varphi \in \EFrm$ over
$M$ at an $s \in \RS(M)$, written $M, s \models \varphi$, is defined as
\begin{equation*}
M, s \models \varphi \iff \estr(M), s \models \varphi
\ \text{.}
\end{equation*}
The set of epistemic transition structures over $\esig = \ESig$ sharing the same
\emph{epistemic state basis} $\ebas = (S,\allowbreak E,\allowbreak L,\allowbreak
S_0)$ is denoted by $\ETS[\esig](\ebas)$.  We say that $M_1 \subseteq M_2$ for
$M_1, M_2 \in \ETS[\esig](\ebas)$ if $\Trans(M_1) \subseteq \Trans(M_2)$ and
similarly extend union and intersection from transition relations to epistemic
transition structures.


\section{Knowledge-based Programs}\label{sect:kbp}

Knowledge"=based programs extend standard programs by explicit knowledge tests.
Their interpretation involves a cycle: the evaluation of the epistemic guards
depends on the program's reachable states, the derivation of the reachable
states on the evaluation of the program's epistemic guards.

We render knowledge"=based programs in a syntax-agnostic format as epistemically
guarded transition systems.  Like epistemic transition structures, these guarded
systems operate on a global set of states with epistemic accessibilities and a
propositional labelling.  All program steps are represented as
knowledge"=guarded actions of the form $\gact{\varphi}{B}$ with $\varphi$ an
epistemic formula and $B$ a relation on the semantic states.
Knowledge"=independent decisions are obtained by choosing $\varphi = \truefrm$,
and any kind of program control structure can be expressed by a judicious choice
of guarded actions.

Breaking up the cyclic step of assigning meaning to a knowledge"=based program,
an epistemically guarded transition system $\Gamma$ is interpreted over an
epistemic transition structure $M$ yielding another epistemic transition
structure $\Gamma^M$.  A guarded action $\gact{\varphi}{B}$ of $\Gamma$
contributes those $(s, s') \in B$ for which $M, s \models \varphi$, where, in
particular, $s$ is reachable in $M$.  What is sought for is a consistent
interpretation with $\Gamma^M = M$ such that reachability and knowledge are
mutually justified.  Finding such a balanced structure is complicated by the
fact that the interpretation functional is not monotone in general: The more is
reachable the less is known and this may make more or less states reachable.

After introducing and illustrating our format of knowledge"=based programs we
summarise and adapt two existing approaches to their interpretation that have
been devised for run"=based rather than state"=based systems: De Haan et
al.~\cite{de-haan-et-al:fundinf:2004} propose to iterate the interpretation
functional starting from an epistemic transition structure where all states are
reachable.  Iteration stops when either a fixed point is reached or, due to
non-monotonicity, a contradiction is found.  In this way all knowledge"=based
programs are assigned some semantics and there is no distinction between
meaningful and contradictory or just self"=fulfilling programs.  The original
approach by Fagin et al.~\cite{fagin-et-al:dc:1997,fagin-et-al:2003}
characterises knowledge"=based programs that admit a unique consistent
interpretation by the notion of dependence on the past.  A sufficient condition
of providing epistemic witnesses is developed which, in particular, applies to
the subclass of synchronous knowledge"=based programs.

\subsection{Epistemically Guarded Transition Systems}

An \emph{epistemically guarded transition system} $\Gamma = (S, E, L, S_0,
\mathcal{T})$ over $\ESig$ is given by an epistemic state basis $(S, E, L, S_0)$
over $\ESig$ and a set $\mathcal{T}$ of \emph{epistemically guarded actions}
$\gact{\varphi}{B}$ consisting of an epistemic formula $\varphi \in \EFrm$ as
\emph{guard} and a transition relation $B \subseteq S \times S$.

\begin{example}\label{ex:egts}
\begin{refpars}
  \item\label{it:ex:egts:bt} Consider the bit transmission problem of the
introduction:
\newcommand*{\Kr}{\mathsf{k}_{\mathit{r}}}
\newcommand*{\Ksr}{\mathsf{k}_{\mathit{sr}}}
\newcommand*{\Krsr}{\mathsf{k}_{\mathit{rsr}}}
\begin{equation*}
\begin{array}[t]{@{}r@{\ }l@{\ }l@{}}
\textbf{\texttt{do}} & \neg\K{\ag[S]}{\K{\ag[R]}{\mathit{sbit}}} \rightarrowtriangle (\prop[rval] \gets \prop[sbit]\ \code{or}\ \code{skip}) &\\
\talloblong & \K{\ag[R]}{\mathit{sbit}} \land \neg\K{\ag[R]}{\K{\ag[S]}{\K{\ag[R]}{\mathit{sbit}}}} \rightarrowtriangle (\prop[ack] \gets 1\ \code{or}\ \code{skip}) & \textbf{\texttt{od}}
\end{array}
\end{equation*}
A sender agent $\ag[S]$ sends a bit $\prop[sbit] \in \{ 0, 1 \}$ to a receiver
agent $\ag[R]$ over an unreliable channel by setting $\prop[rval] \in \{ \bot,
0, 1 \}$; and $\ag[R]$ acknowledges the reception over an unreliable channel by
setting $\prop[ack] \in \{ 0, 1 \}$.  Again, we abbreviate
$(\K{\ag[R]}{\neg\prop[sbit]}) \lor (\K{\ag[R]}{\prop[sbit]})$ expressing that
the receiver knows the bit to be sent by $\K{\ag[R]}{\mathit{sbit}}$.  We
concretise the problem into an epistemically guarded transition system
$\Gamma_{\mathit{bt}} = (\ebas_{\mathit{bt}}, \mathcal{T}_{\mathit{bt}})$ with
$\ebas_{\mathit{bt}} = (S_{\mathit{bt}}, E_{\mathit{bt}}, L_{\mathit{bt}},
S_{\mathit{bt}, 0})$ over $\esig[bt] = \ESig[bt]$ with $\Prop[bt] = \{
\prop[sbit], \prop[rbit], \prop[snt], \prop[ack] \}$ and $\Ag[bt] = \{ \ag[S],
\ag[R] \}$.  Since we use a propositional encoding, we represent $\prop[rval]
\in \{ \bot, 0, 1 \}$ by a proposition $\prop[rbit]$ for the transmitted bit and
a proposition $\prop[snt]$ for the validity of $\prop[rbit]$.  Further
abbreviating the knowledge guards $\K{\ag[R]}{\mathit{sbit}}$ by $\Kr$,
$\K{\ag[S]}{\K{\ag[R]}{\mathit{sbit}}}$ by $\Ksr$, and
$\K{\ag[R]}{\K{\ag[S]}{\K{\ag[R]}{\mathit{sbit}}}}$ by $\Krsr$, the
transition system $\Gamma_{\mathit{bt}}$ is graphically given by
\begin{center}
\begin{tikzpicture}[scale=.95, transform shape, every loop/.style={-latex'}, every label/.style={label distance=0cm, inner ysep=0pt}, font={\fontsize{9pt}{9pt}\selectfont}]
\tikzstyle{state}=[rounded rectangle, draw, outer sep=0pt, inner xsep=4pt, inner ysep=0pt, minimum height=16pt, minimum width=32pt]
\tikzstyle{transition}=[draw,-latex']
\tikzstyle{annotation}=[outer sep=0pt, inner sep=0pt, font={\fontsize{9pt}{9pt}\selectfont}]
\node[state, label={[xshift=2pt, anchor=west]above right:$\stt[z_0]$}] (g0) at (-1.5, 0) {};
\node[state, label={[xshift=-2pt, anchor=east]above left:$\stt[z_1]$}] (g1) at (-3.0, -1.5) {$\prop[snt]$};
\node[state, label={[xshift=2pt, anchor=west]above right:$\stt[z_2]$}] (g2) at (0, -1.5) {$\prop[ack]$};
\node[state, label={[xshift=8pt, anchor=west]above right:$\stt[z_3]$}] (g3) at (-1.5, -3.0) {$\prop[snt], \prop[ack]$};
\node[state, label={[xshift=-2pt, anchor=east]above left:$\stt[z_4]$}] (g0') at (4.4, 0) {$\prop[sbit]$};
\node[state, label={[xshift=-18pt, anchor=east]above left:$\stt[z_5]$}] (g1') at (2.4, -1.5) {$\prop[sbit], \prop[rbit], \prop[snt]$};
\node[state, label={[xshift=10pt, anchor=west]above right:$\stt[z_6]$}] (g2') at (6.4, -1.5) {$\prop[sbit], \prop[ack]$};
\node[state, label={[xshift=-26pt, anchor=east]above left:$\stt[z_7]$}] (g3') at (4.4, -3.0) {$\prop[sbit], \prop[rbit], \prop[snt], \prop[ack]$};
\coordinate (W0) at ($ (g0.north) + (0, 0.3) $);
\coordinate (W0') at ($ (g0'.north) + (0, 0.3) $);
%
\node[annotation, align=left] at ($ 0.5*(g0) + 0.5*(g0') + (0, 0.4) $) {$O_{\mathit{bt}, \ag[S]} = \{ \prop[sbit], \prop[ack] \}$\\[.25ex] $O_{\mathit{bt}, \ag[R]} = \{ \prop[rbit], \prop[snt] \}$};
\path[transition]
  (W0) -- (g0)
  (g0) edge
       node[anchor=east, yshift=2pt] {$\testact{\neg\Ksr}$}
  (g1)
  (g0) edge
       node[anchor=west, yshift=2pt] {$\testact{\Kr \land \neg\Krsr}$}
  (g2)
  (g1) edge
       node[anchor=east, yshift=-2pt] {$\testact{\Kr \land \neg\Krsr}$}
  (g3)
  (g2) edge
       node[anchor=west, yshift=-2pt] {$\testact{\neg\Ksr}$}
  (g3)
  (g0) edge[loop left, distance=12] (g0) edge[loop left, distance=16]
       node[anchor=east, align=right] {$\testact{\neg\Ksr}$\\[0pt]$\testact{\Kr \land \neg\Krsr}$}
  (g0)
  (g1) edge[loop right, distance=12] (g1) edge[loop right, distance=16]
       node[anchor=west, align=center, xshift=-6pt, yshift=-4pt] {$\testact{\neg\Ksr}$\\[0pt]$\testact{\Kr \land \neg\Krsr}$}
  (g1)
  (g2) edge[loop left, distance=12] (g2) edge[loop left, distance=16]
  (g2)
  (g3) edge[loop left, distance=12] (g3) edge[loop left, distance=16]
       node[anchor=east, align=right] {$\testact{\neg\Ksr}$\\[0pt]$\testact{\Kr \land \neg\Krsr}$}
  (g3)
;
\path[transition]
  (W0') -- (g0')
  (g0') edge
        node[anchor=east, yshift=2pt] {$\testact{\neg\Ksr}$}
  (g1')
  (g0') edge
        node[anchor=west, yshift=2pt] {$\testact{\Kr \land \neg\Krsr}$}
  (g2')
  (g1') edge
        node[anchor=east, yshift=-2pt] {$\testact{\Kr \land \neg \Krsr}$}
  (g3')
  (g2') edge
        node[anchor=west, yshift=-2pt] {$\testact{\neg \Ksr}$}
  (g3')
  (g0') edge[loop right, distance=12] (g0') edge[loop right, distance=16]
        node[anchor=west, align=left] {$\testact{\neg\Ksr}$\\[0pt]$\testact{\Kr \land \neg\Krsr}$}
  (g0')
  (g1') edge[loop, out=8, in=-8, distance=12] (g1') edge[loop, out=8, in=-8, distance=16]
        node[anchor=west, align=center, xshift=-6pt, yshift=-4pt] {$\testact{\neg\Ksr}$\\[0pt]$\testact{\Kr \land \neg\Krsr}$}
  (g1')
  (g2') edge[loop, out=188, in=172, distance=12] (g2') edge[loop, out=188, in=172, distance=16]
  (g2')
  (g3') edge[loop, out=8, in=-8, distance=12] (g3') edge[loop, out=8, in=-8, distance=16]
        node[anchor=west, align=left] {$\testact{\neg\Ksr}$\\[0pt]$\testact{\Kr \land \neg\Krsr}$}
  (g3')
;
\end{tikzpicture}
\end{center}
The states $S_{\mathit{bt}}$ comprise of $\{ \stt[z_0], \stt[z_1], \ldots,
\stt[z_7] \}$ with $L_{\mathit{bt}}(\stt[z_0]) = \emptyset$,
$L_{\mathit{bt}}(\stt[z_1]) = \{ \prop[snt] \}$, \ldots,
$L_{\mathit{bt}}(\stt[z_7]) = \{ \prop[sbit], \prop[rbit], \prop[snt],
\prop[ack] \}$ as outlined in the graph above; the set of initial states is
$S_{\mathit{bt}, 0} = \{ \stt[z_0], \stt[z_4] \}$.  The epistemic
accessibility relations $E_{\mathit{bt}, \ag}$ for $\ag \in \Ag[bt]$ are given
by \emph{observability sets} $O_{\mathit{bt}, \ag}$ that declare two states
$s_1, s_2 \in S_{\mathit{bt}}$ to be \emph{$O_{\mathit{bt},
    \ag}$-indistinguishable}, written as $s_1 \obseq[O_{\mathit{bt}, \ag}] s_2$,
if for all $p \in O_{\mathit{bt}, \ag}$ it holds that $p \in
L_{\mathit{bt}}(s_1) \iff p \in L_{\mathit{bt}}(s_2)$, and consequently
$E_{\mathit{bt}, \ag} = {\obseq[O_{\mathit{bt}, \ag}]}$, such that
$E_{\mathit{bt}, \ag}$ forms an equivalence relation.  Due to $\prop[sbit]
\notin O_{\mathit{bt}, \ag[R]}$, the receiver $\ag[R]$ cannot ``see''
$\prop[sbit]$ and hence cannot distinguish between states $\stt[z_0]$ and
$\stt[z_4]$, but $\ag[S]$ can.  On the other hand, $\ag[R]$ can distinguish
between $\stt[z_1]$ and $\stt[z_5]$ as $\ag[R]$ has access to $\prop[rbit]$.
Finally, $\mathcal{T}_{\mathit{bt}}$ consists of two epistemically guarded
actions
\begin{gather*}
\gact{\neg \K{\ag[S]}{\K{\ag[R]}{\mathit{sbit}}}}{\{ (\stt[z_{\nof{i}}], \stt[z_{\nof{i}}]) \mid 0 \leq i \leq 7 \} \cup \{ (\stt[z_0], \stt[z_1]), (\stt[z_2], \stt[z_3]), (\stt[z_4], \stt[z_5]), (\stt[z_6], \stt[z_7]) \}}
\ \text{,}\\
\gact{\K{\ag[R]}{\mathit{sbit}} \land \neg \K{\ag[R]}{\K{\ag[S]}{\K{\ag[R]}{\mathit{sbit}}}}}{\stacked{\{ (\stt[z_{\nof{i}}], \stt[z_{\nof{i}}]) \mid 0 \leq i \leq 7 \} \cup{}\\ \{ (\stt[z_0], \stt[z_2]), (\stt[z_1], \stt[z_3]), (\stt[z_4], \stt[z_6]), (\stt[z_5], \stt[z_7]) \}\ \text{,}}}
\end{gather*}
which directly reflect the sending and acknowledging actions of the bit
transmission problem: The system can only advance from $\stt[z_0]$ to
$\stt[z_1]$ (and $\stt[z_4]$ to $\stt[z_5]$), where sending has been done
successfully, if $\ag[S]$ does not know that $\ag[R]$ knows the bit; but it need
not make such progress, \ie, sending can be unsuccessful.  Similarly, the system
can only advance from $\stt[z_1]$ to $\stt[z_3]$ (and $\stt[z_5]$ to
$\stt[z_7]$), where an acknowledgement has been sent successfully, if $\ag[R]$
knows the bit and $\ag[R]$ does not know that $\ag[S]$ knows that $\ag[R]$ knows
the bit.


  \item\label{it:ex:egts:vs}
Consider the variable setting problem of the introduction for a single agent $\ag[a]$:
\begin{equation*}
\begin{array}[t]{@{}r@{\ }l@{\ }r@{}}
  \textbf{\texttt{if}} & \K{\ag[a]}{\prop[x] \neq 1} \rightarrowtriangle \prop[x] \gets 3
\\
  \talloblong          & \K{\ag[a]}{\prop[x] \neq 3} \rightarrowtriangle \prop[x] \gets 1 & \textbf{\texttt{fi}}
\end{array}
\end{equation*}
Encoding the integer $\prop[x] \in \{ 0, 1, 2, 3 \}$ by two bits $\prop[q_1]$
and $\prop[q_2]$, we model the problem as the following epistemically guarded
transition system $\Gamma_{\mathit{vs}} = (\ebas[vs],
\mathcal{T}_{\mathit{vs}})$ with $\ebas[vs] = (S_{\mathit{vs}},\allowbreak
E_{\mathit{vs}},\allowbreak L_{\mathit{vs}},\allowbreak S_{\mathit{vs}, 0})$
over $\esig[vs] = \ESig[vs]$ with $\Prop[vs] = \{ \prop[q_1], \prop[q_2] \}$ and
$\Ag[vs] = \{ \ag[a] \}$:
\begin{center}
\begin{tikzpicture}[scale=.95, transform shape, every loop/.style={-latex'}, every label/.style={label distance=0cm}, font={\fontsize{9pt}{9pt}\selectfont}], font={\fontsize{9pt}{9pt}\selectfont}]
\tikzstyle{state}=[rounded rectangle,draw,inner sep=4pt,minimum size=6pt]
\tikzstyle{transition}=[draw,-latex']
\tikzstyle{annotation}=[]
\node[state, label={[xshift=12pt, anchor=west]above right:$\stt[s_0]$}] (g0) at (0, 0) {$\neg\prop[q_1], \neg\prop[q_2]$};
\node[state, label={[xshift=10pt, anchor=west]above right:$\stt[s_1]$}] (g1) at (-4, -1.2) {$\neg\prop[q_1], \prop[q_2]$};
\node[state, label={[xshift=10pt, anchor=west]above right:$\stt[s_2]$}] (g2) at (0, -1.2) {$\prop[q_1], \neg\prop[q_2]$};
\node[state, label={[xshift=6pt, anchor=west]above right:$\stt[s_3]$}] (g3) at (4, -1.2) {$\prop[q_1], \prop[q_2]$};
%
\coordinate (W) at ($ (g0) + (-0.5, 0.5) $);
\node[annotation, inner sep=0pt, outer sep=0pt, anchor=south west, xshift=4pt] at (W) {$O_{\mathit{vs}, \ag[a]} = \emptyset$};
\path[transition]
  (W) -- (g0)
  (g0) edge[bend right=15] node[anchor=south east] {$\testact{\K{\ag[a]}{\neg(\prop[q_1] \land \neg\prop[q_2])}}$}
  (g1)
  (g0) edge node[anchor=west] {$\testact{\K{\ag[a]}{\neg(\neg\prop[q_1] \land \prop[q_2])}}$}
  (g2)
;
\end{tikzpicture}
\end{center}
$O_{\mathit{vs}, \ag[a]}$ represents a ``blind'' agent $\ag[a]$ that deems all
states equally accessible.  State $\stt[s_3]$ is definitely not
reachable. $\mathcal{T}_{\mathit{vs}}$ consists of two epistemically guarded
actions:
\begin{gather*}
\gact{\K{\ag[a]}{\neg(\prop[q_1] \land \neg\prop[q_2])}}{\{ (\stt[s_0], \stt[s_1]) \}}
\ \text{,}\\
\gact{\K{\ag[a]}{\neg(\neg\prop[q_1] \land \prop[q_2])}}{\{ (\stt[s_0],
  \stt[s_2]) \}}
\ \text{.}\qedhere
\end{gather*}
\end{refpars}
\end{example}

\subsection{Interpreting Epistemically Guarded Transition Systems}

An epistemically guarded transition system $\Gamma = (S, E, L, S_0,
\mathcal{T})$ over $\ESig$ is \emph{interpreted} over an epistemic transition
structure $M \in \ETS(S, E, L, S_0)$ by interpreting each guarded action
$(\gact{\varphi}{B}) \in \mathcal{T}$ \wrt to $M$ as
\begin{equation*}\textstyle
  \interp{(\gact{\varphi}{B})}{M} = \{ (s, s') \in B \mid s \in \RS(M) \text{ and } M, s \models \varphi \}
\ \text{,}
\end{equation*}
and combining these interpretations into the epistemic transition structure
\begin{equation*}\textstyle
  \interp{\Gamma}{M} = (S, E, L, S_0, \bigcup_{\tau \in \mathcal{T}} \interp{\tau}{M})
\ \text{.}
\end{equation*}
We call $M$ a \emph{solution} for $\Gamma$ if $\interp{\Gamma}{M} = M$.

\begin{example}\label{ex:interp-bt}
For the bit transmission problem as described in
\cref{ex:egts}\cref{it:ex:egts:bt}, the epistemic transition structure
$M_{\mathit{bt}} = (\ebas_{\mathit{bt}}, T_{\mathit{bt}})$ with $T_{\mathit{bt}}
= \{ (\stt[z_{\nof{i}}], \stt[z_{\nof{i}}]) \mid i \in \{ 0,\allowbreak 1,\allowbreak
3,\allowbreak 4,\allowbreak 5,\allowbreak 7 \} \} \cup \{ (\stt[z_0],
\stt[z_1]),\allowbreak (\stt[z_1], \stt[z_3]),\allowbreak (\stt[z_4],
\stt[z_5]),\allowbreak (\stt[z_5], \stt[z_7]) \}$ satisfies
$\interp{\Gamma_{\mathit{bt}}}{M_{\mathit{bt}}} = M_{\mathit{bt}}$.  This structure
just omits the states $\stt[z_2]$ and $\stt[z_6]$ with
$L_{\mathit{bt}}(\stt[z_2]) = \{ \prop[ack] \}$ and $L_{\mathit{bt}}(\stt[z_6])
= \{ \prop[sbit], \prop[ack] \}$ which are definitely not reachable, as
$\K{\ag[R]}{\mathit{sbit}}$ is false in $\stt[z_0] \sim_{O_{\mathit{bt},
    \ag[R]}} \stt[z_4]$.  Indeed,
\begin{gather*}
M_{\mathit{bt}}, s \models \neg\K{\ag[S]}{\K{\ag[R]}{\mathit{sbit}}}
\iff
s \in \{ \stt[z_0], \stt[z_1], \stt[z_4], \stt[z_5] \}
\\
M_{\mathit{bt}}, s \models \K{\ag[R]}{\mathit{sbit}}
\iff
s \in \{ \stt[z_1], \stt[z_3], \stt[z_5], \stt[z_7] \}
\\
M_{\mathit{bt}}, s \models \neg\K{\ag[R]}{\K{\ag[S]}{\K{\ag[R]}{\mathit{sbit}}}}
\iff
s \in \{ \stt[z_0], \stt[z_1], \stt[z_3], \stt[z_4], \stt[z_5], \stt[z_7] \}
\qedhere
\end{gather*}
\end{example}

However, finding a solution is complicated by the fact that the functional of
interpreting an epistemically guarded transition system over an epistemic
transition structure is not monotone, in general, as illustrated by the following
examples.

\begin{example}\label{ex:interp}
\begin{refpars}
  \item\label{it:ex:interp:vs} Continuing \cref{ex:egts}\cref{it:ex:egts:vs} for
the variable setting problem $\Gamma_{\mathit{vs}}$, consider the epistemic
transition structure $M_{\mathit{vs}, 0} \in \ETS[{\esig[vs]}](\ebas[vs])$ with
the empty transition relation $\Trans(M_{\mathit{vs}, 0}) = \emptyset$, and
hence $\RS[0](M_{\mathit{vs}, 0}) = \{ \stt[s_0] \}$.  Setting $M_{\mathit{vs},
  i+1} = \interp{\Gamma_{\mathit{vs}}}{M_{\mathit{vs}, i}}$ for $0 \leq i \leq
2$ we obtain successively
\begin{center}\fontsize{9.5pt}{11pt}\selectfont
\begin{tabular}{@{}c@{\quad}c@{\quad}c@{\quad}c@{}}\toprule
\multicolumn{1}{c}{$\tau$} & $\interp{\tau}{M_{\mathit{vs}, 0}}$ & $\interp{\tau}{M_{\mathit{vs}, 1}}$ & $\interp{\tau}{M_{\mathit{vs}, 2}}$
\\\midrule
$\gact{\K{\ag[a]}{\neg(\prop[q_1] \land \neg\prop[q_2])}}{\{ (\stt[s_0], \stt[s_1]) \}}$
&
$\{ (\stt[s_0], \stt[s_1]) \}$
&
$\emptyset$
&
$\{ (\stt[s_0], \stt[s_1]) \}$
\\
$\gact{\K{\ag[a]}{\neg(\neg\prop[q_1] \land \prop[q_2])}}{\{ (\stt[s_0], \stt[s_2]) \}}$
&
$\{ (\stt[s_0], \stt[s_2]) \}$
&
$\emptyset$
&
$\{ (\stt[s_0], \stt[s_2]) \}$
\\\bottomrule
\end{tabular}
\end{center}
In particular, $M_{\mathit{vs}, 2} =
\interp{\Gamma_{\mathit{vs}}}{M_{\mathit{vs}, 1}} =
\interp{\Gamma_{\mathit{vs}}}{\interp{\Gamma_{\mathit{vs}}}{M_{\mathit{vs}, 0}}}
= M_{\mathit{vs}, 0}$.  However, for $M_{\mathit{vs}, 4}, M_{\mathit{vs}, 5} \in
\ETS[{\esig[vs]}](\ebas[vs])$ with $\Trans(M_{\mathit{vs}, 4}) = \{
(\mathrm{s}_0, \mathrm{s}_1) \}$ and $\Trans(M_{\mathit{vs}, 5}) = \{
(\mathrm{s}_0, \mathrm{s}_2) \}$ we obtain that
$\interp{\Gamma_{\mathit{vs}}}{M_{\mathit{vs}, 4}} = M_{\mathit{vs}, 4}$ and
$\interp{\Gamma_{\mathit{vs}}}{M_{\mathit{vs}, 5}} = M_{\mathit{vs}, 5}$.

  \item\label{it:ex:interp:vsb} For capturing the cycle"=breaking variable
setting of the introduction consider the following epistemically guarded
transition system $\Gamma_{\mathit{vsb}} = (\ebas_{\mathit{vs}},
\mathcal{T}_{\mathit{vsb}})$ over $\esig[vs]$ that shares the epistemic state
basis $\ebas[vs]$ with \cref{ex:egts}\cref{it:ex:egts:vs}:
\begin{center}
\begin{tikzpicture}[scale=.95,transform shape,every loop/.style={-latex'}, every label/.style={label distance=0cm}, font={\fontsize{9pt}{9pt}\selectfont}], font={\fontsize{9pt}{9pt}\selectfont}]
\tikzstyle{state}=[rounded rectangle,draw,inner sep=4pt,minimum size=6pt]
\tikzstyle{transition}=[draw,-latex']
\tikzstyle{annotation}=[]
\node[state, label={[xshift=12pt, anchor=west]above right:$\stt[s_0]$}] (g0) at (0, 0) {$\neg\prop[q_1], \neg\prop[q_2]$};
\node[state, label={[xshift=10pt, anchor=west]above right:$\stt[s_1]$}] (g1) at (-4, -1.2) {$\neg\prop[q_1], \prop[q_2]$};
\node[state, label={[xshift=10pt, anchor=west]above right:$\stt[s_2]$}] (g2) at (0, -1.2) {$\prop[q_1], \neg\prop[q_2]$};
\node[state, label={[xshift=6pt, anchor=west]above right:$\stt[s_3]$}] (g3) at (4, -1.2) {$\prop[q_1], \prop[q_2]$};
\coordinate (W) at ($ (g0) + (-0.5, 0.5) $);
\node[annotation, inner sep=0pt, outer sep=0pt, anchor=south west, xshift=4pt] at (W) {$O_{\mathit{vs}, \ag[a]} = \emptyset$};
\path[transition]
  (W) -- (g0)
  (g0) edge[bend right=20] node[anchor=south east] {$\testact{\K{\ag[a]}{\neg(\prop[q_1] \land \neg\prop[q_2])}}$} (g1)
  (g0) edge (g2)
  (g0) edge[bend left=20] node[anchor=south west] {$\testact{\K{\ag[a]}{\neg(\neg\prop[q_1] \land \prop[q_2])}}$} (g3)
;
\end{tikzpicture}
\end{center}
For $M_{\mathit{vsb}, 0} = (\ebas[vs], \emptyset)$ with $\RS[0](M_{\mathit{vsb},
  0}) = \{ \stt[s_0] \}$, and setting $M_{\mathit{vsb}, i+1} =
\interp{\Gamma_{\mathit{vsb}}}{M_{\mathit{vsb}, i}}$ for $0 \leq i \leq 3$ we
obtain successively
\begin{center}\fontsize{9.5pt}{11pt}\selectfont
\begin{tabular}{@{}r@{\quad}c@{\quad}c@{\quad}c@{\quad}c@{}}\toprule
\multicolumn{1}{c}{$\tau$} & $\interp{\tau}{M_{\mathit{vsb}, 0}}$ & $\interp{\tau}{M_{\mathit{vsb}, 1}}$ & $\interp{\tau}{M_{\mathit{vsb}, 2}}$ & $\interp{\tau}{M_{\mathit{vsb}, 3}}$
\\\midrule
$\gact{\K{\ag[a]}{\neg(\prop[q_1] \land \neg\prop[q_2])}}{\{ (\stt[s_0], \stt[s_1]) \}}$
&
$\{ (\stt[s_0], \stt[s_1]) \}$
&
$\emptyset$
&
$\emptyset$
&
$\emptyset$
\\
$\gact{\truefrm}{\{ (\stt[s_0], \stt[s_2]) \}}$
&
$\{ (\stt[s_0], \stt[s_2]) \}$
&
$\{ (\stt[s_0], \stt[s_2]) \}$
&
$\{ (\stt[s_0], \stt[s_2]) \}$
&
$\{ (\stt[s_0], \stt[s_2]) \}$
\\
$\gact{\K{\ag[a]}{\neg(\neg\prop[q_1] \land \prop[q_2])}}{\{ (\stt[s_0], \stt[s_3]) \}}$
&
$\{ (\stt[s_0], \stt[s_3]) \}$
&
$\emptyset$
&
$\{ (\stt[s_0], \stt[s_3]) \}$
&
$\{ (\stt[s_0], \stt[s_3]) \}$
\\\bottomrule
\end{tabular}
\end{center}
For this epistemic transition structure
$M_{\mathit{vsb}, 3}$ with $\RS(M_{\mathit{vsb}, 3}) = \{ \stt[s_0], \stt[s_1],
\stt[s_3] \}$ it finally holds that
$\interp{\Gamma_{\mathit{vsb}}}{M_{\mathit{vsb}, 3}} = M_{\mathit{vsb}, 3}$.\qedhere
\end{refpars}
\end{example}


\subsection{Iteration Semantics}\label{sect:iteration}

For illustrating the non-monotonicity of the interpretation functional we have
started the interpretation sequence for $\Gamma$ with the smallest epistemic
transition structure which suggests to look for a smallest fixed point --- which
need not exist.  De Haan et al.~\cite{de-haan-et-al:fundinf:2004} argue that a
substitute consisting of the greatest fixed point would be more liberal.  They
construct a transfinite approximation sequence starting from an $N_0$ having all
states reachable.  For a successor ordinal $\alpha + 1$, the approximation
$N_{\alpha+1}$ is just the interpretation of $\Gamma$ in $N_{\alpha}$; for a
limit ordinal $\lambda$, the approximation $N_{\lambda}$ is ``the intersection
of unions of approximations that are sufficiently close to the
limit''~\cite[p.~7]{de-haan-et-al:fundinf:2004}.  The latter is preferred over a
union of intersections as it includes more states which implies less knowledge,
such that ``agents [know] facts only when there are good reasons for
them''~(ibid.).

More formally, define the following sequence of interpretations for an
epistemically guarded transition system $\Gamma = (S, E, L, S_0, \mathcal{T})$
over $\ESig$ by transfinite induction:
\begin{gather*}
N_0 = (S, E, L, S_0, S \times S)
\\
N_{\alpha + 1} = \interp{\Gamma}{N_{\alpha}}
\quad\text{for any successor ordinal $\alpha+1$}\\\textstyle
N_{\lambda} = \bigcap_{\alpha < \lambda} \bigcup_{\alpha \leq \beta < \lambda} N_{\beta}
\quad\text{for any limit ordinal $\lambda$}
\end{gather*}
Intersection and union refer to the transition relations of the epistemic
transition structures, cf.~\cref{sect:ets}.  Due to cardinality reasons some
epistemic transition structure has to occur repeatedly, that is, the ordinal
$\eta_{\Gamma} = \inf \{ \alpha \mid \text{ex.\ }\beta\text{ s.\,t.\ }\alpha <
\beta\text{ and }N_{\alpha} = N_{\beta} \}$ is well defined.  If $N_{\alpha+1}
\subseteq N_{\alpha}$ for all $\alpha \geq \eta_{\Gamma}$, then
$N_{\eta_{\Gamma}+1} = N_{\eta_{\Gamma}}$; otherwise there is some ordinal
$\alpha \geq \eta_{\Gamma}$ such that $N_{\alpha+1} \not\subseteq N_{\alpha}$.
Thus the ordinal $\alpha_{\Gamma} = \inf \{ \alpha \mid \eta_{\Gamma} \leq
\alpha \text{ and } (N_{\alpha} = N_{\alpha+1} \text{ or } N_{\alpha+1}
\not\subseteq N_{\alpha}) \}$ is well defined.  The \emph{iteration semantics}
of $\Gamma$ is defined as $N_{\alpha_{\Gamma}}$.

The iteration starts with all states reachable, leading to the greatest fixed
point if the interpretation functional is monotone.

\begin{example}\label{ex:iter}
\begin{refpars}
  \item\label{it:ex:iter:vs} For the variable setting problem
$\Gamma_{\mathit{vs}}$ of \cref{ex:egts}\cref{it:ex:egts:vs} the interpretation
sequence $(N_{\mathit{vs}, \alpha})_{0 \leq \alpha}$ starts with
$N_{\mathit{vs}, 0}$ showing $\Trans(N_{\mathit{vs}, 0}) = S_{\mathit{vs}}
\times S_{\mathit{vs}}$.  Using the epistemic transition structures from
\cref{ex:interp}\cref{it:ex:interp:vs} it holds that $N_{\mathit{vs}, k+1} =
\interp{\Gamma_{\mathit{vs}}}{N_{\mathit{vs}, k}} = M_{\mathit{vs}, 2}$ for $k$
even and $N_{\mathit{vs}, k+1} = M_{\mathit{vs}, 1}$ for $k \geq 1$ odd.  Thus,
$N_{\mathit{vs}, 1} = N_{\mathit{vs}, 3}$ such that $\eta_{\Gamma_{\mathit{vs}}}
= 1 = \alpha_{\Gamma_{\mathit{vs}}}$, since $\Trans(N_{\mathit{vs}, 2}) = \{
(\stt[s_0], \stt[s_0]), (\stt[s_0], \stt[s_1]), (\stt[s_0], \stt[s_2]) \}
\not\subseteq \emptyset = \Trans(N_{\mathit{vs}, 1})$.  Hence the iteration
semantics of $\Gamma_{\mathit{vs}}$ is given by $N_{\mathit{vs}, 1} =
M_{\mathit{vs}, 2}$; since its transition relation is empty,
$\Gamma_{\mathit{vs}}$ has the same iteration semantics as an epistemically
guarded transition system without any guarded actions.

  \item\label{it:ex:iter:vsb} Computing the iteration semantics sequence
$(N_{\mathit{vsb}, \alpha})_{0 \leq k}$ of the cycle"=breaking variable setting
$\Gamma_{\mathit{vsb}}$ of \cref{ex:interp}\cref{it:ex:interp:vsb} proceeds as
$N_{\mathit{vsb}, k} = M_{\mathit{vsb}, k+1}$.  Since this time the functional
is monotone from $\alpha = 1$ onwards, the iteration semantics is
$N_{\mathit{vsb}, 2}$.

  \item\label{it:ex:iter:nc} Consider the following epistemically guarded
transition system $\Gamma_{\mathit{nc}} = (\ebas_{\mathit{vs}},\allowbreak
\mathcal{T}_{\mathit{nc}})$ over $\esig_{\mathit{vs}}$ that shares the epistemic
basis $\ebas[vs]$ with the variable setting problem $\Gamma_{\mathit{vs}}$ of
\cref{it:ex:iter:vs} and only adds the guarded action
$\gact{\K{\ag[a]}{\neg\prop[q_2]}}{\{ (\stt[s_0], \stt[s_3]) \}}$:
\begin{center}
\begin{tikzpicture}[scale=.95, transform shape, every loop/.style={-latex'}, every label/.style={label distance=0cm, inner sep=2pt, outer sep=0pt}, font={\fontsize{9pt}{9pt}\selectfont}], font={\fontsize{9pt}{9pt}\selectfont}]
\tikzstyle{state}=[rounded rectangle,draw,inner sep=4pt,minimum size=6pt]
\tikzstyle{transition}=[draw,-latex']
\tikzstyle{annotation}=[]
\node[state, label={[anchor=south west]10:$\stt[s_0]$}] (g0) at (0, 0) {$\neg\prop[q_1], \neg\prop[q_2]$};
\node[state, label={[anchor=south west]10:$\stt[s_1]$}] (g1) at (-4, -1.2) {$\neg\prop[q_1], \prop[q_2]$};
\node[state, label={[anchor=south west]10:$\stt[s_2]$}] (g2) at (0, -1.2) {$\prop[q_1], \neg\prop[q_2]$};
\node[state, label={[anchor=south west]10:$\stt[s_3]$}] (g3) at (4, -1.2) {$\prop[q_1], \prop[q_2]$};
\coordinate (W) at ($ (g0) + (-0.5, 0.5) $);
\node[annotation, inner sep=0pt, outer sep=0pt, anchor=south west, xshift=4pt] at (W) {$O_{\mathit{vs}, \ag[a]} = \emptyset$};
\path[transition]
  (W) -- (g0)
  (g0) edge[bend right=20] node[anchor=south east] {$\testact{\K{\ag[a]}{\neg(\prop[q_1] \land \neg\prop[q_2])}}$} (g1)
  (g0) edge node[anchor=west] {$\testact{\K{\ag[a]}{\neg(\neg\prop[q_1] \land \prop[q_2])}}$} (g2)
  (g0) edge[bend left=20] node[anchor=south west] {$\testact{\K{\ag[a]}{\neg\prop[q_2]}}$} (g3)
;
\end{tikzpicture}
\end{center}
The interpretation process runs as for $\Gamma_{\mathit{vs}}$, and the empty
transition relation is also the iteration semantics of $\Gamma_{\mathit{nc}}$.
There is, however, a unique non"=empty interpretation: the transition structure
consisting only of $(\stt[s_0], \stt[s_1])$.  Finding this solution, is not
constructive and some speculation is necessary: there is no solution with
$\stt[s_2]$ reachable, but not $\stt[s_1]$, since then the (non-)reachability of
$\stt[s_3]$ leads to a contradiction.  Thus only the possibility of $\stt[s_0]$
and $\stt[s_1]$ reachable, but $\stt[s_2]$ and $\stt[s_3]$ not reachable
remains.

  \item\label{it:ex:iter:maybe} For the epistemically guarded transition system
$\Gamma_{\mathit{may}}$ over $(\{ \prop[p] \}, \{ \ag[a] \})$ given by
\begin{center}
\begin{tikzpicture}[scale=.95, transform shape, every loop/.style={-latex'}, every label/.style={label distance=0cm}, font={\fontsize{9pt}{9pt}\selectfont}], font={\fontsize{9pt}{9pt}\selectfont}]
\tikzstyle{state}=[rounded rectangle,draw,inner sep=4pt,minimum width=32pt]
\tikzstyle{transition}=[draw,-latex']
\tikzstyle{annotation}=[]
\node[state, label={[anchor=south]above:$\stt[u_0]$}] (g1) at (0, 0) {$\neg\prop[p]$};
\node[state, label={[anchor=south]above:$\stt[u_1]$}] (g2) at (2, 0) {$\prop[p]$};
\coordinate (W) at ($ (g1) + (-.75, 0) $);
\node[annotation, inner sep=0pt, outer sep=0pt, xshift=-4pt, anchor=east] at (W) {$O_{\mathit{may}, \ag[a]} = \emptyset$};
\path[transition]
  (W) -- (g1)
  (g1) edge node[anchor=south] {$\testact{\M{\ag[a]}{\prop[p]}}$}
  (g2)
;
\end{tikzpicture}
\end{center}
the iteration process when started with $N_{\mathit{may}, 0}$ having
$\Trans(N_{\mathit{may}, 0}) = \{ \stt[u_0], \stt[u_1] \} \times \{ \stt[u_0],
\stt[u_1] \}$ evaluates $\M{\ag[a]}{\prop[p]}$ to true and we obtain
$N_{\mathit{may}, 1}$ with $\Trans(N_{\mathit{may}, 1}) = \{ (\stt[u_0],
\stt[u_1]) \}$ which in turn is confirmed by the next iteration yielding a fixed
point.  This iteration semantics, however, has a touch of a ``vaticinium ex
eventu'': $\prop[p]$ can be reached since $\prop[p]$ may be reached.\qedhere
\end{refpars}
\end{example}

\subsection{Unique Interpretation Solutions}\label{sect:unique}

A knowledge"=based program can be executed reliably just step by step if each
knowledge guard can be stably decided based on what has been computed up to the
current point of execution.  In particular, in order to obtain a solution by
execution knowledge must not be invalidated by information only to be gained
later on.  Conversely, if all knowledge guards can be decided by just looking to
the past, there is at most a single solution.

Based on this observation, Fagin et
al.~\cite{fagin-et-al:dc:1997,fagin-et-al:2003} develop a formal
characterisation of unique interpretability by capturing the notion that
solutions ``depend on the past''.  They then show that ``providing epistemic
witnesses'' is a sufficient criterion for ``dependence on the past'', which in
turn always holds for ``synchronous'' programs.  We briefly summarise their main
line of argument adapting the demonstration from their run"=based account for
knowledge"=based programs to our state"=based epistemically guarded transition
systems.  The details of the adaptation are given in \cref{app:sect:kbp}.

An epistemic formula $\varphi \in \EFrm$ is said to \emph{depend on the past}
\wrt a class of epistemic transition structures $\mathcal{M} \subseteq \ETS(\ebas)$
if for all $M_1, M_2\in \mathcal{M}$ and all $k \in \NZ$ it holds that
$\RT[k](M_1) = \RT[k](M_2)$ implies $M_1, s \models \varphi \mathrel{\iff} M_2,
s \models \varphi$ for all $s \in \RS[k](M_1)\cap \RS[k](M_2)$; an epistemically
guarded transition system $\Gamma = (\ebas, \mathcal{T})$ over $\ESig$ is
\emph{depending on the past} \wrt $\mathcal{M}$ if every $\varphi$ in
$(\gact{\varphi}{B}) \in \mathcal{T}$ depends on the past \wrt $\mathcal{M}$.

\begin{example}
For \cref{ex:interp}\cref{it:ex:interp:vs} neither $\K{\ag[a]}{\neg(\prop[q_1
]  \land \neg\prop[q_2])}$ nor $\K{\ag[a]}{\neg(\neg\prop[q_1] \land
  \prop[q_2])}$ depends on the past \wrt $\{ M_{\mathit{vs}, 0}, M_{\mathit{vs},
  1} \}$.  In particular, $\RT[0](M_{\mathit{vs}, 0}) = \emptyset =
\RT[0](M_{\mathit{vs}, 1})$ and $\RS[0](M_{\mathit{vs}, 0}) = \{ \stt[s_0] \} =
\RS[0](M_{\mathit{vs}, 1})$, but $M_{\mathit{vs}, 0}, \stt[s_0] \models
\K{\ag[a]}{\neg(\prop[q_1] \land \neg\prop[q_2])}$ and $M_{\mathit{vs}, 1},
\stt[s_0] \not\models \K{\ag[a]}{\neg(\prop[q_1] \land \neg\prop[q_2])}$.
Similarly for \cref{ex:interp}\cref{it:ex:interp:vsb}, these two formulæ do not
depend on the past \wrt $\{ M_{\mathit{vsb}, 0}, M_{\mathit{vsb}, 1},
M_{\mathit{vsb}, 2}, M_{\mathit{vsb}, 3} \}$, but they do depend on the past
\wrt $\{ M_{\mathit{vsb}, 1}, M_{\mathit{vsb}, 2}, M_{\mathit{vsb}, 3} \}$.
\end{example}

An epistemically guarded transition system $\Gamma$ has at most one solution if,
and only if, it depends on the past \wrt all its solutions.  Due to the
dependence on the past the successive reachable transition relations $\RT[k](M)$
of all solutions $M = \interp{\Gamma}{M}$, \ie, their pasts, coincide.

\begin{restatable}{proposition}{uniqueness}\label{prop:unique}
Let $\Gamma = (\ebas, \mathcal{T})$ be an epistemically guarded transition
system over $\esig$.  Then $\Gamma$ has at most one solution if, and only if,
there is an $\mathcal{M} \subseteq \ETS[\esig](\ebas)$ with $\{ M \in
\ETS[\esig](\ebas) \mid \interp{\Gamma}{M} = M \} \subseteq \mathcal{M}$ such
that $\Gamma$ depends on the past \wrt $\mathcal{M}$.
\end{restatable}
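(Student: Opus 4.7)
The plan is to prove the two implications separately, with the reverse direction being the substantial one. For the forward direction $(\Rightarrow)$, I would simply take $\mathcal{M}$ to be the set of solutions itself. If $\Gamma$ has at most one solution, then $\mathcal{M}$ has at most one element, and dependence on the past becomes vacuous: any two $M_1, M_2 \in \mathcal{M}$ are equal, so $M_1, s \models \varphi \iff M_2, s \models \varphi$ holds trivially for every guard $\varphi$ appearing in $\mathcal{T}$.

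For the reverse direction $(\Leftarrow)$, I would pick two arbitrary solutions $M_1, M_2$, both of which lie in $\mathcal{M}$ by hypothesis, and show $M_1 = M_2$. Since both share the epistemic state basis $\ebas$, only the transition relation can differ, and my approach is a simultaneous induction on $k$ establishing $\RS[k](M_1) = \RS[k](M_2)$ and $\RT[k](M_1) = \RT[k](M_2)$. The base case is immediate from $\RS[0](M_i) = S_0$ and $\RT[0](M_i) = \emptyset$. For the inductive step, I would take $(s, s') \in \Trans(M_1)$ with $s \in \RS[k](M_1) = \RS[k](M_2)$; since $M_1$ is a solution, $(s, s')$ arises from some guarded action $\gact{\varphi}{B} \in \mathcal{T}$ with $M_1, s \models \varphi$. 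Dependence on the past, applied with the inductive equality $\RT[k](M_1) = \RT[k](M_2)$ and $s \in \RS[k](M_1) \cap \RS[k](M_2)$, then transfers satisfaction to $M_2$; combined with $s \in \RS[k](M_2) \subseteq \RS(M_2)$, this places $(s, s')$ in $\interp{(\gact{\varphi}{B})}{M_2} \subseteq \Trans(M_2)$, and symmetry closes the step at level $k+1$.

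The only subtle point I expect is lifting the level-wise equalities $\RT[k](M_1) = \RT[k](M_2)$ for all $k$ to equality of the full transition relations $\Trans(M_1) = \Trans(M_2)$, since a priori $\Trans(M_i)$ could contain transitions whose source is not in $\RS(M_i)$. The key observation is that for any fixed point $M = \interp{\Gamma}{M}$ every transition in $\Trans(M)$ has its source in $\RS(M)$ by construction of the interpretation functional, so $\Trans(M) = \RT(M) = \bigcup_k \RT[k](M)$; this closes the gap and yields $M_1 = M_2$.
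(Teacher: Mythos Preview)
Your proposal is correct and follows essentially the same approach as the paper. The paper factors your inductive step into a separate lemma (showing that $\RT[k](M_1)=\RT[k](\interp{\Gamma}{M_1})=\RT[k](\interp{\Gamma}{M_2})=\RT[k](M_2)$ implies the same at level $k{+}1$) and then closes with a lemma that $\RT[k](M_1)=\RT[k](M_2)$ for all $k$ implies $\interp{\Gamma}{M_1}=\interp{\Gamma}{M_2}$, whence $M_1=\interp{\Gamma}{M_1}=\interp{\Gamma}{M_2}=M_2$; your final observation $\Trans(M)=\RT(M)$ for fixed points is an equivalent way to close the argument.
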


In order to obtain a solution of $\Gamma$ by execution, the system is
interpreted repeatedly to construct the approximations $(M_k)_{0 \leq k}$ with
$M_{k+1} = \interp{\Gamma}{M_k}$ for $k \geq -1$ starting with some $M_{-1}$.
Each approximation $M_k$ with $k \geq 0$ contributes a transition relation
$\RT[k](M_k)$ which can be combined into a limit $M_{\omega}$.  If $\Gamma$
depends on the past \wrt the class of epistemic transition structures from which
the approximants are constructed and which also contains the limit, then the
interpretation of the limit $M_{\omega}$ yields a fixed point.

\begin{restatable}{proposition}{existence}\label{prop:ex}
Let $\Gamma = (\ebas, \mathcal{T})$ be an epistemically guarded transition
system over $\esig$.  Let $\mathcal{M} \subseteq \ETS[\esig](\ebas)$ such that
$\interp{\Gamma}{M} \in \mathcal{M}$ for every $M \in \mathcal{M}$ and $(\ebas,
\bigcup_{0 \leq k} \RT[k](M_k)) \in \mathcal{M}$ for all $(M_k)_{0 \leq k}
\subseteq \mathcal{M}$ with $\RT[k](M_{k'}) = \RT[k](M_k)$ for all $k' \geq k
\geq 0$.  Let $\Gamma$ depend on the past \wrt $\mathcal{M}$ and $M_{-1} \in
\mathcal{M}$, $M_{i+1} = \interp{\Gamma}{M_i}$ for all $i \geq -1$, and
$M_{\omega} = (\ebas, \bigcup_{0 \leq k} \RT[k](M_k))$.  Then
$\interp{\Gamma}{M_{\omega}} = \interp{\Gamma}{\interp{\Gamma}{M_{\omega}}}$.
\end{restatable}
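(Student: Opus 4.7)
The plan is to establish that $\interp{\Gamma}{M_\omega}$ is a fixed point of $\interp{\Gamma}{\cdot}$ via two chains of stabilisation: first across the iteration sequence itself, and then between $M_\omega$ and $M_{\omega+1} := \interp{\Gamma}{M_\omega}$. The recurring principle is that two structures in $\mathcal{M}$ sharing the same $\RT[k]$ may be swapped inside a guard evaluation at any $s \in \RS[k]$ of one of them, by dependence on the past.

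First I would show by induction on $k$ that $\RT[k](M_i) = \RT[k](M_k)$ for all $i \geq k$. The base $k = 0$ is trivial. For the step, the IH yields $\RS[k-1](M_{i-1}) = \RS[k-1](M_{k-1}) =: S^*_{k-1}$; for any $s \in S^*_{k-1}$ and any $(\gact{\varphi}{B}) \in \mathcal{T}$, dependence on the past (applied using $\RT[k-1](M_{i-1}) = \RT[k-1](M_{k-1})$) gives $M_{i-1}, s \models \varphi \iff M_{k-1}, s \models \varphi$, and $s$ is automatically reachable in both $M_{i-1}$ and $M_{k-1}$, so the level-$k$ reachable transitions of $M_i$ and $M_k$ coincide. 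The resulting stabilisation triggers the closure hypothesis on $\mathcal{M}$, placing $M_\omega$ and consequently $M_{\omega+1}$ inside $\mathcal{M}$.

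A second induction on $k$ shows $\RT[k](M_\omega) = \RT[k](M_k)$: since $\Trans(M_\omega) = \bigcup_{0 \leq j} \RT[j](M_j)$, the edges from $S^*_{k-1}$ contributed by $j \geq k$ are governed by guards whose truth value at $s \in S^*_{k-1}$ is independent of $M_{j-1}$ (again by dependence on the past), while the $j < k$ contributions already lie in $\RT[k-1](M_{k-1})$. A near-identical induction delivers $\RT[k](M_{\omega+1}) = \RT[k](M_k) = \RT[k](M_\omega)$, now invoking dependence on the past between $M_\omega$ and $M_{k-1}$.

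Taking unions in $k$ now gives $\RS(M_\omega) = \RS(M_{\omega+1})$; a final application of dependence on the past, this time between $M_\omega$ and $M_{\omega+1}$, shows that at any commonly reachable state the guard of each $(\gact{\varphi}{B}) \in \mathcal{T}$ has the same truth value. Expanding the definition of $\interp{\Gamma}{\cdot}$ therefore yields $\interp{\Gamma}{M_\omega} = \interp{\Gamma}{M_{\omega+1}} = \interp{\Gamma}{\interp{\Gamma}{M_\omega}}$, as required. The main technical hurdle is the bookkeeping of the nested inductions, so that at every invocation of dependence on the past the two structures involved genuinely share $\RT[k-1]$ and the relevant state lies in both of their $\RS[k-1]$.
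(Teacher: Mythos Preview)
Your proposal is correct and follows essentially the same route as the paper's proof: the paper first establishes the stabilisation $\RT[k](M_{k'}) = \RT[k](M_k)$ for $k' \geq k$ (your first induction), then uses \cref{lem:limit-ex} to obtain $\RT[k](M_\omega) = \RT[k](M_k)$ (your second induction), then \cref{lem:step}\cref{it:lem:step:1} to get $\RT[k](\interp{\Gamma}{M_\omega}) = \RT[k](M_k)$ and likewise for $\interp{\Gamma}{\interp{\Gamma}{M_\omega}}$ (your third induction), and finally \cref{lem:unique} to conclude. One small remark: in your second induction you invoke dependence on the past, but that step is purely structural (as in \cref{lem:limit-ex}) --- once stabilisation is known, an edge $(s,s') \in \RT[j](M_j)$ with $s \in \RS[k-1](M_{k-1}) = \RS[k-1](M_j)$ already lies in $\RT[k](M_j) = \RT[k](M_k)$, no guard evaluation needed.
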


A sufficient criterion for obtaining a comprehensive class of epistemic
transition structures $\mathcal{M}$ such that $\Gamma$ depends on the past \wrt
$\mathcal{M}$ is provided by epistemic witnesses: If some knowledge formula
$\K{\ag}{\varphi}$ of $\Gamma$ does not hold at some state of an interpreting
epistemic transition structure there is evidence in the past of this structure
why it does not hold.  Formally, a structure $M \in \ETS(\ebas)$ \emph{provides
  epistemic witnesses} for a formula $\K{\ag}{\varphi} \in \EFrm$ if for all $k
\geq 0$, $s \in \RS[k](M)$ it holds that if $M, s \not\models \K{\ag}{\varphi}$,
then there is an $s' \in \RS[k](M)$ with $(s, s') \in E_{\ag}$ and $M, s'
\not\models \varphi$.

\begin{restatable}{lemma}{providing}\label{prop:prov-wit}
Let $\Gamma = (\ebas, \mathcal{T})$ be an epistemically guarded
transition system over $\ESig$ and let $\mathcal{M} \subseteq \ETS(S, E, L,
S_0)$ such that all $M \in \mathcal{M}$ provide epistemic witnesses for all
knowledge guards in $\Gamma$.  Then $\Gamma$ is depending on the past \wrt
$\mathcal{M}$.
\end{restatable}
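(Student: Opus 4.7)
The plan is to prove the statement by structural induction on the epistemic formula $\varphi$, ranging over all subformulas that occur within the guards of $\Gamma$. The target of the induction is the implication: if $M_1, M_2 \in \mathcal{M}$ satisfy $\RT[k](M_1) = \RT[k](M_2)$, then for every $s \in \RS[k](M_1) \cap \RS[k](M_2)$ we have $M_1, s \models \varphi \iff M_2, s \models \varphi$.

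The first preparatory step is an observation about the reachable state sets: by unfolding the inductive definitions of $\RS[k]$ and $\RT[k]$, one shows $\RS[k](M) = S_0 \cup \{s' \mid \exists s. (s, s') \in \RT[k](M)\}$. Because $S_0$ is fixed by the shared epistemic state basis $\ebas$, the assumption $\RT[k](M_1) = \RT[k](M_2)$ immediately yields $\RS[k](M_1) = \RS[k](M_2)$. This will be essential in the knowledge case, as it lets me transfer a witness state found in one model into the other.

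The base cases $\varphi = \prop$ and $\varphi = \falsefrm$ are immediate from the common labelling $L$ in $\ebas$, and the Boolean cases $\neg\psi$ and $\psi_1 \land \psi_2$ follow from the induction hypothesis by routine unfolding of the satisfaction relation. The interesting case is $\varphi = \K{\ag}{\psi}$. Here I proceed contrapositively, assuming without loss of generality $M_1, s \models \K{\ag}{\psi}$ while $M_2, s \not\models \K{\ag}{\psi}$. Since $M_2$ provides epistemic witnesses for $\K{\ag}{\psi}$ and $s \in \RS[k](M_2)$, there exists $s' \in \RS[k](M_2)$ with $(s, s') \in E_{\ag}$ and $M_2, s' \not\models \psi$. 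By the preparatory observation $s' \in \RS[k](M_1) \cap \RS[k](M_2)$, so the induction hypothesis applied to the subformula $\psi$ at $s'$ gives $M_1, s' \not\models \psi$. But $s, s' \in \RS(M_1)$ and $(s,s') \in E_{\ag}$, so in the induced epistemic structure $\estr(M_1)$ the pair $(s, s')$ lies in the accessibility of $\ag$, contradicting $M_1, s \models \K{\ag}{\psi}$. The reverse direction is symmetric.

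The main obstacle I expect is precisely the $\K{\ag}{\psi}$ step, more specifically the need to identify the correct hinge that connects a witness produced inside $M_2$ with the semantics of $\K{\ag}{\psi}$ in $M_1$. The providing-witnesses assumption is the tailored device for exactly this: it guarantees that a counter-world to a knowledge claim already exists within the $k$-th reachable layer, which together with the equality $\RS[k](M_1) = \RS[k](M_2)$ keeps the induction hypothesis applicable at the same index $k$. Since $k$ does not decrease across the induction on $\varphi$, no separate induction on $k$ is required; everything is localised at the same layer.
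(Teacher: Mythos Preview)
Your proposal is correct and follows essentially the same approach as the paper's own proof: a structural induction on the guard formula, with the knowledge case handled by invoking the epistemic-witness hypothesis to obtain a counterexample $s'$ in $\RS[k]$, transferring it across via $\RS[k](M_1)=\RS[k](M_2)$ (your preparatory observation is precisely the paper's \cref{lem:reachable}\cref{it:lem:reachable:1}), and applying the induction hypothesis to $\psi$ at $s'$. The only cosmetic difference is that the paper phrases the $\Kop$-case directly (from $M_1,s\not\models\K{\ag}{\psi}$ derive $M_2,s\not\models\K{\ag}{\psi}$) rather than by contradiction, and it mentions an $\Mop$-case separately, which is redundant since $\Mop$ is an abbreviation already covered by your $\neg$ and $\Kop$ cases.
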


A sufficient criterion, in turn, for a structure $M \in \ETS(S,\allowbreak
E,\allowbreak L,\allowbreak S_0)$ to provide epistemic witnesses is $M$ being
\emph{synchronous}: if for all $\ag \in \Ag$ and all reachable $s_1 \in
\RS[k_1](M)$ and $s_2 \in \RS[k_2](M)$ with $(s_1, s_2) \in E_{\ag}$ it holds
that $s_1, s_2 \in \RS[\min\{ k_1, k_2 \}](M)$.  In a synchronous structure the
temporal and the epistemic dimension for each agent are hence tightly coupled
and no agent can access but also does not need to know the future.

\begin{example}
The interpretation $M_{\mathit{bt}}$ of the bit transmission problem given in
\cref{ex:interp-bt} provides epistemic witnesses, but is not synchronous: the
sender $\ag[S]$ cannot distinguish $\stt[z_0]$ reachable at depth $0$ of
$M_{\mathit{bt}}$ from $\stt[z_1]$ that is only reachable at depth $1$, and
similarly the receiver $\ag[R]$ cannot distinguish $\stt[z_1]$ from $\stt[z_3]$
at the respective depths of $1$ and $2$.
\end{example}

An epistemically guarded transition system $\Gamma = (\ebas, \mathcal{T})$ over
$\esig$ \emph{provides epistemic witnesses} if for each $M \in
\ETS[\esig](\ebas)$ the interpretation $\interp{\Gamma}{M}$ provides epistemic
witnesses for all knowledge formulæ occuring in some of the action guards of
$\Gamma$; $\Gamma$ is \emph{synchronous} if each $\interp{\Gamma}{M}$ is
synchronous.  Moreover, $\Gamma$ can syntactically be seen to be synchronous
(cf.~\cite[p.~135]{fagin-et-al:2003}) if it is round"=based where all agents
perform some action in each round and record locally which actions they have
taken.

\section{(Re-)Interpreting Knowledge"=based Programs}\label{sect:interp}

The results by Fagin et al.~\cite{fagin-et-al:dc:1997,fagin-et-al:2003}
guarantee a unique interpretation for all synchronous knowledge"=based programs;
the approach by De Haan et al.~\cite{de-haan-et-al:fundinf:2004} aims at
extending the interpretation to asynchronous programs, but assigns semantics
also to contradictory or self"=fulfilling programs.

The necessity of avoiding contradictory or self"=fulfilling behaviour already
occurs in the design of synchronous programming
languages~\cite{beneviste-et-al:ieee:2003}: Their underlying principle is
``perfect synchrony'', that any reaction of a program takes zero time and that
thus whatever is output in reaction to some input is already present at the same
time as the input.  Since the presence or absence of signals can be tested, this
requires ``logical coherence'' saying that a (non-input) signal is present in a
reaction if, and only if, this signal is emitted in this very reaction.  A
program needs to be both \emph{reactive} in the sense of leading to some
logically coherent signal status, and \emph{determinate}, \ie, not showing
several such statuses.  For example, in Esterel \cite{berry:milner:2000}, the
program fragment
\begin{lstlisting}[language=Esterel]
present S then nothing else emit S end
\end{lstlisting}
is not reactive, but contradictory: signal \code{S} is only emitted if it is not
emitted; and
\begin{lstlisting}[language=Esterel]
present S then emit S else nothing end
\end{lstlisting}
is not determinate, but self"=fulfilling: \code{S} is emitted if it is emitted,
and it is not emitted if it is not.  Such programs can be revealed by using a
cycle-detecting static analysis, as is done in
Lustre~\cite{halbwachs-et-al:ieee:1991}, or, for including more intricate cases,
by Berry's ``constructive semantics'' as for Esterel~\cite{berry:esterel}.
Building on a ``logical semantics'' recording what is emitted in each step of
execution, a \emph{must}/\emph{cannot} analysis is performed: what must/cannot
be emitted, which branch must/cannot be executed.  It is then required that for
each signal it can be decided whether it must be present or it cannot be
present.  For example, in the parallel execution
\begin{lstlisting}[language=Esterel]
   [ present S1 then emit S1 end ]
|| [ present S1 then present S2 then nothing else emit S2 end end ]
\end{lstlisting}
both signals can be emitted --- if \code{S1} is assumed to be present, and
\code{S2} absent ---, but none must be emitted.  Thus the constructive semantics
does not reach a decision of what must/cannot be present and the program is not
constructive.  Intriguingly, however, there is exactly one coherent signal
status that can be reached by execution: \code{S1} and \code{S2} absent.

We adapt Berry's constructive semantics approach to knowledge"=based programs.
In fact, the first, non"=reactive Esterel program fragment resembles the
variable setting problem described in \cref{ex:interp}\cref{it:ex:interp:vs},
the second, non"=determinate fragment directly corresponds to
\cref{ex:iter}\cref{it:ex:iter:maybe}, and the last, combined fragment is
essentially the same as \cref{ex:iter}\cref{it:ex:iter:nc}.  We first define a
must/can version of epistemic transition structures with a lower (must) and an
upper bound (can).  Based on a positive (must) and negative (cannot)
satisfaction relation of epistemic formulæ over these structures we show how an
epistemically guarded transition system can be interpreted yielding another
epistemic must/can transition structure.  For uniformity, we rephrase this
interpretation in terms of the negation normal form of formulæ and demonstrate
that the constructive interpretation is always monotone and leads to a least
fixed point.  For any knowledge"=based program, this fixed point soundly shows
which executions are necessary and which are possible.  However, the fixed point
need not be decided, and more can be possible than is necessary.  We show that
synchronous programs always lead to decided fixed points.  The proofs of all
steps are contained in \cref{app:sect:interp}.

\subsection{Epistemic Must/Can Transition Structures}

An \emph{epistemic must/can transition structure} $Y = (S, E, L, S_0, (T_{\mu},
T_{\nu}))$ over $\esig = \ESig$ is given by an epistemic state basis $\ebas =
(S, E, L, S_0)$ and two \emph{lower} and \emph{upper} \emph{transition
  relations} $T_{\mu}, T_{\nu} \subseteq S \times S$ with $T_{\mu} \subseteq
T_{\nu}$.  In particular, $Y_{\mu} = (\ebas, T_{\mu})$ and $Y_{\nu} =
(\ebas,\allowbreak T_{\nu})$ are epistemic transition structures over $\esig$
with $Y_{\mu} \subseteq Y_{\nu}$.

The \emph{positive} and \emph{negative satisfaction relations} of an epistemic
formula $\varphi \in \EFrm$ over the epistemic must/can transition structure $Y$
at a state $s \in \RS(Y_{\nu})$, written $Y, s \pmodels \varphi$ and $Y, s
\nmodels \varphi$, are defined as follows:
\begin{align*}
Y, s &\pmodels p \iff p \in L(s)
&
Y, s &\nmodels p \iff p \notin L(s)
\\
Y, s &\not\pmodels \falsefrm
&
Y, s &\nmodels \falsefrm
\\
Y, s &\pmodels \neg\varphi \iff Y, s \nmodels \varphi
&
Y, s &\nmodels \neg\varphi \iff Y, s \pmodels \varphi
\\
Y, s &\pmodels \varphi_1 \land \varphi_2 \iff{}
&
Y, s &\nmodels \varphi_1 \land \varphi_2 \iff{}
\\[-.3ex]
&\quad Y, s \pmodels \varphi_1 \text{ and } Y, s \pmodels \varphi_2
&
&\quad Y, s \nmodels \varphi_1 \text{ or } Y, s \nmodels \varphi_2
\\
Y, s &\pmodels \K{\ag}{\varphi} \iff Y, s' \pmodels \varphi
&
Y, s &\nmodels \K{\ag}{\varphi} \iff Y, s' \nmodels \varphi
\\[-.3ex]
&\quad\stacked{\text{for all } s' \in \RS(Y_{\nu})\\ \text{with } (s, s') \in E_{\ag}}
&
&\quad\stacked{\text{for some } s' \in \RS(Y_{\mu})\\ \text{with } (s, s') \in E_{\ag}}
\end{align*}
A formula is positively satisfied over $Y$ if it must be true given the upper
bound $Y_{\nu}$ of possible behaviour, it is negatively satisfied if it cannot
be true given the lower bound $Y_{\mu}$ of necessary behaviour.  In fact, it
holds that what must be true can also be true:

\begin{restatable}{lemma}{LEMposnegconstructive}\label{lem:pos-neg-epistemic}
Let $Y = (S, E, L, S_0, (T_{\mu}, T_{\nu}))$ be an epistemic must/can transition
structure over $\ESig$ and $\varphi \in \EFrm$.  Then for all $s \in
\RS(Y_{\nu})$, $Y, s \pmodels \varphi$ implies $Y, s \not\nmodels \varphi$.
\end{restatable}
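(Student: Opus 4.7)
I will prove the claim by structural induction on $\varphi$, simultaneously showing the contrapositive form that is useful for the negation step, namely that $Y, s \nmodels \varphi$ implies $Y, s \not\pmodels \varphi$. The two statements are logically equivalent by contraposition, so it suffices to carry one of them through the induction, but keeping both in mind clarifies how the negation clause closes.

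\textbf{Base cases.} For an atomic proposition $p$, positive satisfaction gives $p \in L(s)$, which directly rules out $p \notin L(s)$, i.e.\ $Y, s \not\nmodels p$. For $\falsefrm$, the hypothesis $Y, s \pmodels \falsefrm$ is never satisfied, so the implication is vacuous.

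\textbf{Inductive step for propositional connectives.} For $\neg\varphi$, unfold: $Y, s \pmodels \neg\varphi$ means $Y, s \nmodels \varphi$, and we must derive $Y, s \not\nmodels \neg\varphi$, i.e.\ $Y, s \not\pmodels \varphi$. This is precisely the contrapositive of the induction hypothesis applied to $\varphi$. For $\varphi_1 \land \varphi_2$, suppose $Y, s \pmodels \varphi_1 \land \varphi_2$, hence $Y, s \pmodels \varphi_i$ for both $i$. By the IH on each conjunct, $Y, s \not\nmodels \varphi_i$ for both $i$, so the disjunctive clause for negative satisfaction of a conjunction fails, yielding $Y, s \not\nmodels \varphi_1 \land \varphi_2$.

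\textbf{Knowledge case (main obstacle).} This is where the containment $T_\mu \subseteq T_\nu$ is essential. A straightforward induction on $k$ shows $\RS[k](Y_\mu) \subseteq \RS[k](Y_\nu)$ for all $k$, hence $\RS(Y_\mu) \subseteq \RS(Y_\nu)$. Now assume $Y, s \pmodels \K{\ag}{\varphi}$, so $Y, s' \pmodels \varphi$ for every $s' \in \RS(Y_\nu)$ with $(s, s') \in E_{\ag}$. For contradiction, suppose $Y, s \nmodels \K{\ag}{\varphi}$; then there exists $s'' \in \RS(Y_\mu)$ with $(s, s'') \in E_{\ag}$ and $Y, s'' \nmodels \varphi$. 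By the inclusion $\RS(Y_\mu) \subseteq \RS(Y_\nu)$, we have $s'' \in \RS(Y_\nu)$, so the positive premise forces $Y, s'' \pmodels \varphi$. The induction hypothesis at $s''$ then yields $Y, s'' \not\nmodels \varphi$, contradicting the choice of $s''$.

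\textbf{Where the difficulty lies.} The only non-routine point is the knowledge case, and the subtlety is exactly the mismatch between the universal quantification over $\RS(Y_\nu)$ in the positive clause and the existential quantification over $\RS(Y_\mu)$ in the negative one. Without $T_\mu \subseteq T_\nu$, a witness of negative satisfaction need not be a state where we have positive information, and the induction could not close. Once this inclusion is observed, the argument reduces to the IH together with the definitional unfolding of $\pmodels$ and $\nmodels$.
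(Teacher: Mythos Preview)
Your proof is correct and follows essentially the same structural induction as the paper's own proof, including the key use of $\RS(Y_{\mu}) \subseteq \RS(Y_{\nu})$ in the knowledge case and the contrapositive of the induction hypothesis in the negation case. The only cosmetic difference is that you phrase the $\Kop$-case as a proof by contradiction whereas the paper argues directly, but the underlying reasoning is identical.
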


The set of epistemic must/can transition structures over $\esig$ and the
epistemic state basis $\ebas$ is denoted by $\EMCTS[\esig](\ebas)$.  We say that
$Y_1 \sqsubseteq Y_2$ for $Y_1, Y_2 \in \EMCTS[\esig](\ebas)$ if $Y_{1, \mu}
\subseteq Y_{2, \mu}$ and $Y_{1, \nu} \supseteq Y_{2, \nu}$: an \emph{extension}
raises the lower bound and reduces the upper bound.

As with epistemic transition structures, an epistemically guarded transition system
$\Gamma = (S, E, L, S_0, \mathcal{T})$ over $\ESig$ can be interpreted over an
epistemic must/can transition structure $Y \in \EMCTS(S, E, L, S_0)$: The
\emph{interpretation} of a guarded action $(\gact{\varphi}{B}) \in \mathcal{T}$
\wrt to $Y$ is given by the pair $\interp{(\gact{\varphi}{B})}{Y} =
(\interp{(\gact{\varphi}{B})}{Y, \mu}, \interp{(\gact{\varphi}{B})}{Y, \nu})$
with
\begin{gather*}
  \interp{(\gact{\varphi}{B})}{Y, \mu} = \{ (s, s') \in B \mid s \in \RS(Y_{\mu}) \text{ and } Y, s \pmodels \varphi \}
\ \text{,}\\
  \interp{(\gact{\varphi}{B})}{Y, \nu} = \{ (s, s') \in B \mid s \in \RS(Y_{\nu}) \text{ and } Y, s \not\nmodels \varphi \}
\ \text{.}
\end{gather*}
By \cref{lem:pos-neg-epistemic} it holds that $\interp{\tau}{Y, \mu} \subseteq
\interp{\tau}{Y, \nu}$ for each $\tau \in \mathcal{T}$.  The \emph{constructive
  interpretation} of $\Gamma$ \wrt $Y$ is given by the epistemic must/can
transition structure
\begin{equation*}\textstyle
  \interp{\Gamma}{Y} = (S, E, L, S_0, (\bigcup_{\tau \in \mathcal{T}} \interp{\tau}{Y, \mu}, \bigcup_{\tau \in \mathcal{T}} \interp{\tau}{Y, \nu}))
\ \text{.}
\end{equation*}
This is well defined, \ie, $(\interp{\Gamma}{Y})_{\mu} \subseteq
(\interp{\Gamma}{Y})_{\nu}$.  We call $Y$ a \emph{constructive solution} for
$\Gamma$ if $\interp{\Gamma}{Y} = Y$; a constructive solution is \emph{decided}
if $Y_{\mu} = Y_{\nu}$.

Again as with epistemic transition structures, this interpretation over
epistemic must/can transition structures can be iterated for finally reaching a
stable structure --- and this time interpretation turns out to be monotone.

\begin{example}\label{ex:pn}
\begin{refpars}
  \item\label{it:ex:pn:vsb} Re-consider the cycle"=breaking variable setting
problem of \cref{ex:interp}\cref{it:ex:interp:vsb}.  We start the interpretation
in $Y_{\mathit{vsb}, 0} = (\ebas[vs], (\emptyset, S_{\mathit{vs}}^2))$ and
successively obtain the following epistemic must/can transition structures:
\begin{center}\fontsize{9.5pt}{11pt}\selectfont
\begin{tabular}{@{}r@{\quad}c@{\quad}c@{\quad}c@{\quad}c@{}}\toprule
\multicolumn{1}{c}{$\tau$} & $\interp{\tau}{Y_{\mathit{vsb}, 0}}$ & $\interp{\tau}{Y_{\mathit{vsb}, 1}}$ & $\interp{\tau}{Y_{\mathit{vsb}, 2}}$ & $\interp{\tau}{Y_{\mathit{vsb}, 3}}$
\\\midrule
\multirow{2}{*}{$\gact{\K{\ag[a]}{\neg(\prop[q_1] \land \neg\prop[q_2])}}{\{ (\stt[s_0], \stt[s_1]) \}}$}
&
$\emptyset$
&
$\emptyset$
&
$\emptyset$
&
$\emptyset$
\\
&
$\{ (\stt[s_0], \stt[s_1]) \}$
&
$\emptyset$
&
$\emptyset$
&
$\emptyset$
\\[1ex]
\multirow{2}{*}{$\gact{\truefrm}{\{ (\stt[s_0], \stt[s_2]) \}}$}
&
$\{ (\stt[s_0], \stt[s_2]) \}$
&
$\{ (\stt[s_0], \stt[s_2]) \}$
&
$\{ (\stt[s_0], \stt[s_2]) \}$
&
$\{ (\stt[s_0], \stt[s_2]) \}$
\\
&
$\{ (\stt[s_0], \stt[s_2]) \}$
&
$\{ (\stt[s_0], \stt[s_2]) \}$
&
$\{ (\stt[s_0], \stt[s_2]) \}$
&
$\{ (\stt[s_0], \stt[s_2]) \}$
\\[1ex]
\multirow{2}{*}{$\gact{\K{\ag[a]}{\neg(\neg\prop[q_1] \land \prop[q_2])}}{\{ (\stt[s_0], \stt[s_3]) \}}$}
&
$\emptyset$
&
$\emptyset$
&
$\{ (\stt[s_0], \stt[s_3]) \}$
&
$\{ (\stt[s_0], \stt[s_3]) \}$
\\
&
$\{ (\stt[s_0], \stt[s_3]) \}$
&
$\{ (\stt[s_0], \stt[s_3]) \}$
&
$\{ (\stt[s_0], \stt[s_3]) \}$
&
$\{ (\stt[s_0], \stt[s_3]) \}$
\\\bottomrule
\end{tabular}
\end{center}
Not only does it hold that $\interp{\Gamma_{\mathit{vsb}}}{Y_{\mathit{vsb}, 3}}
= Y_{\mathit{vsb}, 3}$, but the interpretations indeed evolve monotonically \wrt
$\sqsubseteq$.  Moreover, the structure $Y_{\mathit{vsb}, 3}$ is decided and
everything what can happen also must happen, \ie, $(Y_{\mathit{vsb}, 3})_{\mu} =
(Y_{\mathit{vsb}, 3})_{\nu}$.

  \item\label{it:ex:pn:vs} For the cyclic variable setting problem, see
\cref{ex:egts}\cref{it:ex:egts:vs} and \cref{ex:interp}\cref{it:ex:interp:vs},
the interpretation process is monotone, but only yields
\begin{center}\fontsize{9.5pt}{11pt}\selectfont
\begin{tabular}{@{}r@{\quad}c@{\quad}c@{}}\toprule
\multicolumn{1}{c}{$\tau$} & $\interp{\tau}{Y_{\mathit{vs}, 0}}$ & $\interp{\tau}{Y_{\mathit{vs}, 1}}$
\\\midrule
$\gact{\K{\ag[a]}{\neg(\prop[q_1] \land \neg\prop[q_2])}}{\{ (\stt[s_0], \stt[s_1]) \}}$
&
$(\emptyset, \{ (\stt[s_0], \stt[s_1]) \})$
&
$(\emptyset, \{ (\stt[s_0], \stt[s_1]) \})$
\\[.75ex]
$\gact{\K{\ag[a]}{\neg(\neg\prop[q_1] \land \prop[q_2])}}{\{ (\stt[s_0], \stt[s_2]) \}}$
&
$(\emptyset, \{ (\stt[s_0], \stt[s_2]) \})$
&
$(\emptyset, \{ (\stt[s_0], \stt[s_2]) \})$
\\\bottomrule
\end{tabular}
\end{center}
The epistemic must/can transition structure $Y_{\mathit{vs}, 1}$ is not decided,
and indeed there are two solutions of $\Gamma_{\mathit{vs}}$ in terms of
epistemic transition structures.  However, the same undecidedness holds true for
the non-constructive $\Gamma_{\mathit{nc}}$ of
\cref{ex:iter}\cref{it:ex:iter:nc}, that is, the unique solution is also missed
by the constructive interpretation.\qedhere
\end{refpars}
\end{example}

\subsection{Constructive Interpretation}\label{sect:constr}

The separated positive (must) and negative (cannot) satisfaction relations over
an epistemic must/can transition structure $Y \in \EMCTS(S, E, L, S_0)$ can be
merged into a single, uniform satisfaction relation relying on the
\emph{negation normal form} of epistemic formulæ where negation only occurs in
front of propositions.  For an arbitrary $\varphi \in \EFrm$ there exists an
equivalent $\nnf(\varphi) \in \EFrm$ in negation normal form, such that, in
particular
\begin{align*}&
  \nnf(\neg \prop) = \neg \prop
&&
  \nnf(\neg\neg\varphi) = \nnf(\varphi)
\\&
  \nnf(\neg\falsefrm) = \truefrm
&&
  \nnf(\neg(\varphi_1 \land \varphi_2)) = \nnf(\neg\varphi_1) \lor \nnf(\neg\varphi_2)
\\&
&&
  \nnf(\neg\K{\ag}{\varphi}) = \M{\ag}{\nnf(\neg\varphi)}
\end{align*}
The \emph{constructive satisfaction relation} $Y, s \cmodels \varphi$ for a
state $s \in \RS(Y_{\nu})$ and an epistemic formula $\varphi \in \EFrm$ in
negation normal form is defined just as for arbitrary epistemic formulæ, but
using the upper bound $Y_{\nu}$ for the universal quantifier of $\K{\ag}{}$ and
the lower bound $Y_{\mu}$ for the existential quantifier of $\M{\ag}{}$; in
particular,
\begin{gather*}
  Y, s \cmodels \neg p \iff p \notin L(s)
\\
  Y, s \cmodels \K{\ag}{\varphi} \iff \text{$Y, s' \cmodels \varphi$ f.\,a.\ $s' \in \RS(Y_{\nu})$ with $(s, s') \in E_{\ag}$}
\\
  Y, s \cmodels \M{\ag}{\varphi} \iff \text{ex.\ $s' \in \RS(Y_{\mu})$ s.\,t.\ $(s, s') \in E_{\ag}$ and $Y, s' \cmodels \varphi$}
\end{gather*}
The constructive satisfaction relation indeed combines $\pmodels$ and
$\nmodels$:

\begin{restatable}{lemma}{LEMposnegnnf}\label{lem:pos-neg-constructive}
Let $Y \in \EMCTS(S, E, L, S_0)$, $\varphi \in \EFrm$, and $s \in \RS(Y_{\nu})$.
Then $Y, s \pmodels \varphi$ iff $Y, s \cmodels \nnf(\varphi)$ and $Y, s
\nmodels \varphi$ iff $Y, s \cmodels \nnf(\neg\varphi)$.
\end{restatable}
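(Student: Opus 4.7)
The plan is to prove both equivalences simultaneously by structural induction on $\varphi \in \EFrm$, because the two parts feed into each other in the negation case: showing $Y, s \pmodels \neg\psi \iff Y, s \cmodels \nnf(\neg\psi)$ unfolds via the definition of $\pmodels$ to $Y, s \nmodels \psi$, which in turn needs the second part of the induction hypothesis for the subformula $\psi$, and symmetrically. A single induction proving only one direction cannot close without the other.

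For the base cases $\varphi = p$ and $\varphi = \falsefrm$, the two $\cmodels$ clauses for $p$ and $\neg p$ have been stated explicitly and match the $\pmodels$/$\nmodels$ definitions; and for $\falsefrm$, $\nnf(\falsefrm) = \falsefrm$ together with $\nnf(\neg\falsefrm) = \truefrm$ give both equivalences directly (using $Y, s \not\pmodels \falsefrm$ and $Y, s \nmodels \falsefrm$). The conjunction case is a routine propositional check using De Morgan, given $\nnf(\neg(\varphi_1 \land \varphi_2)) = \nnf(\neg\varphi_1) \lor \nnf(\neg\varphi_2)$ (I would rely on the analogous and-clause that must sit in the general $\cmodels$ definition, of which only the $\neg p$, $\Kop$, and $\Mop$ cases were spelled out in the excerpt). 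The negation case is where the two parts of the induction talk to each other: $Y, s \pmodels \neg\varphi \iff Y, s \nmodels \varphi \iff Y, s \cmodels \nnf(\neg\varphi)$ by the inductive hypothesis on $\varphi$ (second part), and conversely $Y, s \nmodels \neg\varphi \iff Y, s \pmodels \varphi \iff Y, s \cmodels \nnf(\varphi) = \nnf(\neg\neg\varphi)$ by the inductive hypothesis on $\varphi$ (first part).

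The core inductive case is $\varphi = \K{\ag}{\psi}$. For the positive direction, $Y, s \pmodels \K{\ag}{\psi}$ unfolds to ``$Y, s' \pmodels \psi$ for all $s' \in \RS(Y_{\nu})$ with $(s,s') \in E_{\ag}$''. Every such $s'$ lies in $\RS(Y_{\nu})$, so the induction hypothesis (first part) applies to give $Y, s' \cmodels \nnf(\psi)$, which is exactly the clause for $Y, s \cmodels \K{\ag}{\nnf(\psi)} = \nnf(\K{\ag}{\psi})$. For the negative direction, $Y, s \nmodels \K{\ag}{\psi}$ unfolds to ``ex.\ $s' \in \RS(Y_{\mu})$ with $(s,s') \in E_{\ag}$ and $Y, s' \nmodels \psi$''. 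Here we need the induction hypothesis (second part) at $s'$, and since $Y_{\mu} \subseteq Y_{\nu}$ gives $\RS(Y_{\mu}) \subseteq \RS(Y_{\nu})$, the hypothesis applies; this yields $Y, s' \cmodels \nnf(\neg\psi)$, and the definition of $\cmodels$ on $\M{\ag}{\nnf(\neg\psi)} = \nnf(\neg\K{\ag}{\psi})$ matches.

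The only subtle point, and therefore the step I expect to need most care, is the invariant that the induction hypothesis is only stated for states in $\RS(Y_{\nu})$; the existential clauses that drive both $\nmodels \K{\ag}{\psi}$ and $\cmodels \M{\ag}{\nnf(\neg\psi)}$ range over $\RS(Y_{\mu})$, so the inclusion $\RS(Y_{\mu}) \subseteq \RS(Y_{\nu})$ (which follows from $T_{\mu} \subseteq T_{\nu}$ by a trivial induction on the approximation indices $\RS[k]$) must be invoked explicitly to legitimise applying the hypothesis. Everything else is symbol-pushing governed by the $\nnf$ equations and the parallel $\pmodels$/$\nmodels$/$\cmodels$ clauses.
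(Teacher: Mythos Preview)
Your proposal is correct and follows essentially the same approach as the paper: a simultaneous structural induction on $\varphi$ handling $p$, $\falsefrm$, $\neg\psi$, $\varphi_1 \land \varphi_2$, and $\K{\ag}{\psi}$ exactly as you describe. If anything you are more careful than the paper's own proof, which applies the induction hypothesis at $s' \in \RS(Y_{\mu})$ in the $\nmodels \K{\ag}{\psi}$ case without explicitly invoking the inclusion $\RS(Y_{\mu}) \subseteq \RS(Y_{\nu})$ that you correctly flag as the only subtle point.
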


It follows that if $Y_{\mu} = Y_{\nu}$, then $Y, s \models \varphi$ if, and only
if, $Y_{\mu}, s \models \varphi$ or, equivalently, $Y_{\nu}, s \models \varphi$.
We also obtain that constructive satisfaction is preserved when extending
epistemic must/can transition structures:

\begin{restatable}{lemma}{LEMextepistemic}\label{lem:ext-epistemic}
Let $Y, Y' \in \EMCTS(S, E, L, S_0)$ with $Y \sqsubseteq Y'$ and let $\varphi
\in \EFrm$.  Then $Y, s \cmodels \nnf(\varphi)$ implies $Y', s \cmodels
\nnf(\varphi)$ for all $s \in \RS(Y'_{\nu})$.
\end{restatable}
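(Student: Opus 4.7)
The plan is to do a straightforward structural induction on $\psi = \nnf(\varphi)$, which, since $\nnf$ pushes negations to the propositions, is built from literals $p$ and $\neg p$, the constants $\truefrm, \falsefrm$, the binary connectives $\land, \lor$, and the modalities $\K{\ag}{\cdot}, \M{\ag}{\cdot}$. The key preparatory observation is a monotonicity fact for reachability: since $Y$ and $Y'$ share the same epistemic state basis (in particular the same $S_0$) and $Y_\mu \subseteq Y'_\mu$ in terms of transitions, the inductive definition of $\RS$ yields $\RS(Y_\mu) \subseteq \RS(Y'_\mu)$; dually, from $Y'_\nu \subseteq Y_\nu$ we obtain $\RS(Y'_\nu) \subseteq \RS(Y_\nu)$, so the premise $Y, s \cmodels \nnf(\varphi)$ is indeed meaningful for every $s \in \RS(Y'_\nu)$.

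The base cases are immediate: $Y, s \cmodels p$ and $Y, s \cmodels \neg p$ depend only on $L(s)$, which is shared; $\truefrm$ is trivial and $\falsefrm$ is never satisfied. The Boolean cases $\psi_1 \land \psi_2$ and $\psi_1 \lor \psi_2$ follow by applying the induction hypothesis to each subformula at the same state $s \in \RS(Y'_\nu) \subseteq \RS(Y_\nu)$.

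The two interesting cases are the modalities, and here the whole proof turns on the fact that the definition of $\sqsubseteq$ is calibrated so that the universal quantifier in $\K{\ag}{\cdot}$ ranges over $\RS(Y_\nu)$ (which \emph{shrinks} under extension) while the existential quantifier in $\M{\ag}{\cdot}$ ranges over $\RS(Y_\mu)$ (which \emph{grows} under extension). Concretely: for $\K{\ag}{\psi'}$, take any $s' \in \RS(Y'_\nu)$ with $(s, s') \in E_\ag$; by $\RS(Y'_\nu) \subseteq \RS(Y_\nu)$ we have $s' \in \RS(Y_\nu)$, so the hypothesis gives $Y, s' \cmodels \psi'$, and the IH at $s'$ upgrades this to $Y', s' \cmodels \psi'$. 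For $\M{\ag}{\psi'}$, the hypothesis provides some $s' \in \RS(Y_\mu)$ with $(s, s') \in E_\ag$ and $Y, s' \cmodels \psi'$; since $\RS(Y_\mu) \subseteq \RS(Y'_\mu) \subseteq \RS(Y'_\nu)$, the very same $s'$ witnesses $\M{\ag}{\psi'}$ in $Y'$, and the IH applied at $s'$ yields $Y', s' \cmodels \psi'$.

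There is no real obstacle beyond keeping the inclusions straight; the only point that demands a little care is checking that every $s'$ at which we wish to invoke the induction hypothesis lies in $\RS(Y'_\nu)$ (so that the statement applies), which is exactly what the two reachability-monotonicity inclusions above guarantee. The induction also implicitly relies on the fact that $\nnf$ commutes with the constructors in the natural way, so structural induction on $\psi \in \nnf(\EFrm)$ is well-founded.
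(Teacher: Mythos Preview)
Your proposal is correct and follows essentially the same structural induction as the paper's proof, with the same treatment of the two modal cases via the inclusions $\RS(Y'_\nu) \subseteq \RS(Y_\nu)$ and $\RS(Y_\mu) \subseteq \RS(Y'_\mu) \subseteq \RS(Y'_\nu)$. If anything, you are slightly more explicit than the paper in spelling out the reachability-monotonicity facts up front and in checking that each $s'$ where the induction hypothesis is invoked lies in $\RS(Y'_\nu)$.
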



This preservation of satisfaction yields that constructive interpretation is
monotone.

\begin{restatable}{proposition}{PROPconstructivemonotone}\label{prop:constr-monotone}
Let $\Gamma = (S, E, L, S_0, \mathcal{T})$ be an epistemically guarded
transition system over $\ESig$ and $Y, Y' \in \EMCTS(S,\allowbreak E, L, S_0)$
such that $Y \sqsubseteq Y'$.  Then $\interp{\Gamma}{Y} \sqsubseteq
\interp{\Gamma}{Y'}$.
\end{restatable}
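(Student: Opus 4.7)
The plan is to unfold the definitions of $\sqsubseteq$ and of the constructive interpretation, and then reduce everything to the preservation lemma \cref{lem:ext-epistemic} via the equivalence \cref{lem:pos-neg-constructive} between the positive/negative satisfaction relations and the constructive satisfaction relation on negation normal forms. Recall that $Y \sqsubseteq Y'$ unfolds to $Y_\mu \subseteq Y'_\mu$ and $Y_\nu \supseteq Y'_\nu$; the goal $\interp{\Gamma}{Y} \sqsubseteq \interp{\Gamma}{Y'}$ therefore splits into $(\interp{\Gamma}{Y})_\mu \subseteq (\interp{\Gamma}{Y'})_\mu$ and $(\interp{\Gamma}{Y})_\nu \supseteq (\interp{\Gamma}{Y'})_\nu$. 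Since both sides are defined as the union over $\tau \in \mathcal{T}$ of the per-action interpretations, it suffices to establish the two inclusions for a single guarded action $\tau = \gact{\varphi}{B}$.

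For the lower bound, I take $(s, s') \in \interp{\tau}{Y, \mu}$, so $s \in \RS(Y_\mu)$ and $Y, s \pmodels \varphi$. Monotonicity of $\RS$ gives $s \in \RS(Y'_\mu)$, and since $Y'_\mu \subseteq Y'_\nu$ we also have $s \in \RS(Y'_\nu)$, which is the hypothesis needed to invoke \cref{lem:ext-epistemic}. By \cref{lem:pos-neg-constructive}, $Y, s \pmodels \varphi$ is equivalent to $Y, s \cmodels \nnf(\varphi)$; \cref{lem:ext-epistemic} then propagates this to $Y', s \cmodels \nnf(\varphi)$, and a second application of \cref{lem:pos-neg-constructive} yields $Y', s \pmodels \varphi$. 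Hence $(s, s') \in \interp{\tau}{Y', \mu}$.

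For the upper bound, I dualise the argument. Let $(s, s') \in \interp{\tau}{Y', \nu}$, so $s \in \RS(Y'_\nu)$ and $Y', s \not\nmodels \varphi$. From $Y'_\nu \subseteq Y_\nu$ I get $s \in \RS(Y_\nu)$. To show $Y, s \not\nmodels \varphi$, I argue contrapositively: assume $Y, s \nmodels \varphi$. By \cref{lem:pos-neg-constructive} this gives $Y, s \cmodels \nnf(\neg\varphi)$, and \cref{lem:ext-epistemic} (applicable because $s \in \RS(Y'_\nu)$) upgrades this to $Y', s \cmodels \nnf(\neg\varphi)$, whence $Y', s \nmodels \varphi$, contradicting the choice of $(s, s')$. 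Thus $Y, s \not\nmodels \varphi$, so $(s, s') \in \interp{\tau}{Y, \nu}$.

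The work is essentially routine once the two lemmas are in hand; the only point requiring attention is the contravariance of $\sqsubseteq$ in the upper bound, which flips the direction of inclusion in the $\nu$-component and forces the contrapositive argument above. The side condition $s \in \RS(Y'_\nu)$ of \cref{lem:ext-epistemic} must be verified in each case, but in both it follows immediately from the chain $Y_\mu \subseteq Y'_\mu \subseteq Y'_\nu$ respectively $Y'_\nu \subseteq Y'_\nu$, so no further technicalities arise.
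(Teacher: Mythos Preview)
Your proof is correct and follows essentially the same approach as the paper's own proof: both work per guarded action, reduce the two inclusions to \cref{lem:ext-epistemic}, and handle the $\nu$-component contrapositively. The only difference is presentational---you make the bridge via \cref{lem:pos-neg-constructive} between the $\pmodels/\nmodels$ relations (used in the definition of $\interp{\tau}{Y,\mu}$, $\interp{\tau}{Y,\nu}$) and the constructive relation $\cmodels$ on negation normal forms (used in \cref{lem:ext-epistemic}) explicit, whereas the paper's terser proof leaves that translation implicit.
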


Finally, we can observe that $\EMCTS[\esig](\ebas)$ for $\ebas = (S, E, L, S_0)$
with the ordering $\sqsubseteq$ is an \emph{inductive partial order}: each
directed subset $\Delta \subseteq \EMCTS[\esig](\ebas)$ has a least upper bound
$\bigsqcup \Delta$ \wrt $\sqsubseteq$, where \emph{directed} means that every
two $Y_1, Y_2 \in \Delta$ have an upper bound $Y \in \Delta$ such that $Y_1
\sqsubseteq Y$ and $Y_2 \sqsubseteq Y$; and there is also a \emph{bottom} or
least element $\bot_{\esig, \ebas} = (S, E, L, S_0, (\emptyset, S \times S)) \in
\EMCTS[\esig](\ebas)$.

\begin{restatable}{proposition}{PROPconstructiveinductive}\label{prop:constr-inductive}
$(\EMCTS[\esig](\ebas), {\sqsubseteq}, \bot_{\esig, \ebas})$ is an inductive
partial order.
\end{restatable}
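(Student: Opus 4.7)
The plan is to verify three things in turn: that $\sqsubseteq$ is a partial order on $\EMCTS[\esig](\ebas)$, that $\bot_{\esig, \ebas}$ is the least element, and that every directed subset has a least upper bound.

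The order-theoretic properties follow coordinatewise from the corresponding properties of $\subseteq$ (resp.\ $\supseteq$) applied to the lower and upper transition relations. Reflexivity and transitivity are immediate. For antisymmetry, observe that members of $\EMCTS[\esig](\ebas)$ all share the same basis $\ebas$, so if $Y_1 \sqsubseteq Y_2$ and $Y_2 \sqsubseteq Y_1$ then $T_\mu(Y_1) = T_\mu(Y_2)$ and $T_\nu(Y_1) = T_\nu(Y_2)$, hence $Y_1 = Y_2$. For the bottom element, $\emptyset \subseteq T_\mu(Y)$ and $S \times S \supseteq T_\nu(Y)$ for every $Y \in \EMCTS[\esig](\ebas)$, giving $\bot_{\esig, \ebas} \sqsubseteq Y$; and $\bot_{\esig, \ebas}$ itself is well-defined since $\emptyset \subseteq S \times S$.

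For a directed subset $\Delta \subseteq \EMCTS[\esig](\ebas)$, I would define
\begin{equation*}
Y^\star = \bigl(S, E, L, S_0, \bigl(\textstyle\bigcup_{Y \in \Delta} T_\mu(Y),\; \bigcap_{Y \in \Delta} T_\nu(Y)\bigr)\bigr)
\end{equation*}
and show it is the least upper bound. The main (only) subtle point is that $Y^\star$ must actually lie in $\EMCTS[\esig](\ebas)$, i.e.\ its lower bound must be contained in its upper bound: this is precisely where directedness is needed. Given $(s, s') \in \bigcup_{Y \in \Delta} T_\mu(Y)$, pick $Y_0 \in \Delta$ witnessing membership, and for any $Y' \in \Delta$ use directedness to obtain $Y'' \in \Delta$ with $Y_0, Y' \sqsubseteq Y''$; then $(s, s') \in T_\mu(Y_0) \subseteq T_\mu(Y'') \subseteq T_\nu(Y'') \subseteq T_\nu(Y')$, so $(s, s') \in \bigcap_{Y \in \Delta} T_\nu(Y)$.

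That $Y^\star$ is an upper bound of $\Delta$ is immediate from the defining inclusions of union and intersection. For leastness, if $Y'$ is any upper bound of $\Delta$, then $T_\mu(Y) \subseteq T_\mu(Y')$ for all $Y \in \Delta$, so $T_\mu(Y^\star) \subseteq T_\mu(Y')$; dually $T_\nu(Y^\star) \supseteq T_\nu(Y')$. Hence $Y^\star \sqsubseteq Y'$, completing the proof. I do not anticipate any real obstacle beyond carefully invoking directedness in the well-formedness step.
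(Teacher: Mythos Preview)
Your proof is correct and follows essentially the same approach as the paper: define the candidate supremum by taking the union of the lower transition relations and the intersection of the upper ones, use directedness to establish well-formedness via $Y_{\mu} \subseteq Y''_{\mu} \subseteq Y''_{\nu} \subseteq Y'_{\nu}$, and observe that union and intersection give the coordinatewise suprema. The paper is terser (it omits the partial-order axioms and argues the containment at the set level rather than elementwise), but the substance is the same.
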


Pataraia's fixed"=point theorem~\cite[§8.22]{davey-priestley:2002} now
guarantees that the monotone operator $Y \mapsto \interp{\Gamma}{Y}$ for each
epistemically guarded transition system $\Gamma = (\ebas, \mathcal{T})$ has a
least fixed point in the inductive partial order.  It can be computed by,
possibly transfinite, iterated application of constructive interpretation to
$\bot_{\esig, \ebas}$, that is, $Y_0 = \bot_{\esig, \ebas}$, $Y_{\alpha+1} =
\interp{\Gamma}{Y_{\alpha}}$ for a successor ordinal $\alpha+1$, and
$Y_{\lambda} = \bigsqcup_{\alpha < \lambda} Y_{\alpha}$ until
equality~\cite[Exc.~8.19]{davey-priestley:2002}.  Compared to the iteration
semantics of \cref{sect:iteration}, the computation of the constructive
semantics thus does not have to record all previous approximations in order to
find a repetition.



\subsection{(Un-)Decided Constructive Fixed Points}\label{sect:decided}

If any constructive fixed point $Y = \interp{\Gamma}{Y}$ with $Y \in
\EMCTS[\esig](\ebas)$ is decided, then there is the solution $Y_{\mu} =
\interp{\Gamma}{Y_{\mu}} = \interp{\Gamma}{Y_{\nu}} = Y_{\nu}$ in terms of
epistemic transition structures, and $\Gamma$ is not contradictory.  Even if it
is not decided, the must/can structures $Y_{\mu\mu} = (\ebas,\allowbreak
(\Trans(Y_{\mu}), \Trans(Y_{\mu}))) \in \EMCTS[\esig](\ebas)$ and $Y_{\nu\nu} =
(\ebas,\allowbreak (\Trans(Y_{\nu}), \Trans(Y_{\nu}))) \in \EMCTS[\esig](\ebas)$
satisfy $Y \sqsubseteq Y_{\mu\mu}$ and $Y \sqsubseteq Y_{\nu\nu}$, such that by
\cref{prop:constr-monotone} we obtain $Y = \interp{\Gamma}{Y} \sqsubseteq
\interp{\Gamma}{Y_{\mu\mu}}, \interp{\Gamma}{Y_{\nu\nu}}$ which yields
%
$Y_{\mu} \subseteq \interp{\Gamma}{Y_{\mu}}$ and
$\interp{\Gamma}{Y_{\nu}} \subseteq Y_{\nu}$,
%
but not equality, in general.  For the least constructive fixed point
$\mu\Gamma$, any solution $M = \interp{\Gamma}{M}$ thus satisfies
$(\mu\Gamma)_{\mu} \subseteq M \subseteq (\mu\Gamma)_{\nu}$, always giving sound
lower and upper bounds and, if $\mu\Gamma$ is decided, moreover unique
solvability:


\begin{restatable}{proposition}{PROPconstructiveunique}\label{prop:constr-unique}
Let $\Gamma = (\ebas, \mathcal{T})$ be an epistemically guarded transition
system over $\esig$ and let $\mu\Gamma \in \EMCTS[\esig](\ebas)$ be decided.
Then $\Gamma$ has a unique solution in $\ETS[\esig](\ebas)$.
\end{restatable}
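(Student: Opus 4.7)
The plan is to exhibit a solution directly from $\mu\Gamma$ and then use the sandwich inequality mentioned just before the proposition to force uniqueness.

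First, I would show that on a decided must/can transition structure $Y$ with $Y_{\mu} = Y_{\nu}$, the positive and negative satisfaction relations both coincide with the classical satisfaction of the common underlying epistemic transition structure $M := Y_{\mu} = Y_{\nu}$. This should be a routine structural induction on $\varphi \in \EFrm$: the clauses for $\K{\ag}{}$ in $\pmodels$ and $\nmodels$ quantify over $\RS(Y_{\nu})$ resp.\ $\RS(Y_{\mu})$, which become the same set, and the epistemic accessibility $E_{\ag}$ is shared; cases for $\falsefrm$, propositions, negation, and conjunction are immediate, with \cref{lem:pos-neg-epistemic} confirming consistency. Consequently, $\interp{(\gact{\varphi}{B})}{Y, \mu} = \interp{(\gact{\varphi}{B})}{Y, \nu} = \interp{(\gact{\varphi}{B})}{M}$ for every guarded action, so $\interp{\Gamma}{Y}$ is again decided with both components equal to $\Trans(\interp{\Gamma}{M})$.

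For existence, set $M^{\ast} := (\mu\Gamma)_{\mu} = (\mu\Gamma)_{\nu}$. By the preceding observation applied to $Y = \mu\Gamma$, the fixed-point equation $\interp{\Gamma}{\mu\Gamma} = \mu\Gamma$ collapses on both components to $\interp{\Gamma}{M^{\ast}} = M^{\ast}$, so $M^{\ast} \in \ETS[\esig](\ebas)$ is a solution of $\Gamma$.

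For uniqueness, let $M \in \ETS[\esig](\ebas)$ be any solution, $\interp{\Gamma}{M} = M$. Form the decided epistemic must/can transition structure $Y_M = (\ebas, (\Trans(M), \Trans(M)))$. By the same collapse argument, $\interp{\Gamma}{Y_M}$ is the decided structure with both bounds equal to $\Trans(\interp{\Gamma}{M}) = \Trans(M)$, i.e., $\interp{\Gamma}{Y_M} = Y_M$. Hence $Y_M$ is a constructive fixed point, and by leastness $\mu\Gamma \sqsubseteq Y_M$, which unfolds to $(\mu\Gamma)_{\mu} \subseteq \Trans(M) \subseteq (\mu\Gamma)_{\nu}$. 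Since $\mu\Gamma$ is decided, both ends of this sandwich coincide with $\Trans(M^{\ast})$, forcing $M = M^{\ast}$.

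The main obstacle is the collapse lemma on decided structures. It does not follow purely formally from \cref{lem:pos-neg-epistemic} and \cref{lem:pos-neg-constructive}, because those lemmata relate $\pmodels$/$\nmodels$ to the constructive relation $\cmodels$ on negation normal forms, not directly to the classical $\models$ of \cref{sect:ets}. The cleanest route is a direct induction on $\varphi$ that exploits $\RS(Y_{\mu}) = \RS(Y_{\nu}) = \RS(M)$; once this is in hand, the rest of the argument is bookkeeping.
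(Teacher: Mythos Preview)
Your proposal is correct and follows essentially the same approach as the paper's proof: both establish existence via the decided fixed point collapsing to a classical solution, and uniqueness by embedding an arbitrary solution $M$ as the decided must/can structure $Y_M$, observing that $Y_M$ is a constructive fixed point, and then applying $\mu\Gamma \sqsubseteq Y_M$ together with decidedness to sandwich $M$. The only difference is that the paper treats your ``collapse lemma'' as already in hand---it is stated right after \cref{lem:pos-neg-constructive} (``It follows that if $Y_{\mu} = Y_{\nu}$, then $Y, s \models \varphi$ if, and only if, $Y_{\mu}, s \models \varphi$\ldots'') and invoked again at the start of \cref{sect:decided}---so the paper's proof is terser, but your explicit induction is the honest way to discharge it.
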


Still, even for epistemically guarded transition systems that
provide epistemic witnesses it is not guaranteed that the least constructive
fixed point is decided:

\begin{example}\label{ex:undecided}
Consider the following epistemically guarded transition system
$\Gamma_{\mathit{nd}} = (\ebas[\mathit{nd}],\allowbreak
\mathcal{T}_{\mathit{nd}})$ over $\esig[nd] = (\Prop[nd], \Ag[nd])$ with
$\Prop[nd] = \{ \prop[p], \prop[q] \}$ and $\Ag[nd] = \{ \ag[a], \ag[b] \}$:
\begin{center}\vspace*{-.15ex}
\begin{tikzpicture}[scale=.95, transform shape, every loop/.style={-latex'}, every label/.style={label distance=0cm}, font={\fontsize{9pt}{9pt}\selectfont}], font={\fontsize{9pt}{9pt}\selectfont}]
\tikzstyle{state}=[rounded rectangle,draw,inner sep=4pt,minimum width=40pt]
\tikzstyle{transition}=[draw,-latex']
\tikzstyle{annotation}=[]
\node[state, label={[anchor=south]above:$\stt[u_0]$}] (g1) at (0, 0) {$\prop[p], \neg\prop[q]$};
\node[state, label={[anchor=south]above:$\stt[u_1]$}] (g2) at (3, 0) {$\prop[p], \prop[q]$};
\coordinate (W) at ($ (g1.west) + (-.3, 0) $);
\node[annotation, inner sep=0pt, outer sep=0pt, xshift=-4pt, align=left, anchor=east] at (W) {$O_{\mathit{nd}, \ag[a]} = \{ \prop[q] \}$\\[.5ex] $O_{\mathit{nd}, \ag[b]} = \emptyset$};
\path[transition]
  (W) -- (g1)
  (g1) edge node[anchor=south] {$\testact{\K{\ag[b]}{\M{\ag[a]}{\prop[p]}}}$}
  (g2)
;
\end{tikzpicture}\vspace*{-.15ex}
\end{center}
Constructive interpretation yields the non"=decided fixed point
$Y_{\mathit{nd}}$ with $\Trans(Y_{\mathit{nd}, \mu}) = \emptyset$ and
$\Trans(Y_{\mathit{nd}, \nu}) = \{ (\stt[u_0], \stt[u_1]) \}$, as
$Y_{\mathit{nd}}, \stt[u_0] \not\cmodels \K{\ag[b]}{\M{\ag[a]}{\prop[p]}}$, but
also $Y_{\mathit{nd}}, \stt[u_0] \not\cmodels
\M{\ag[b]}{\K{\ag[a]}{\neg\prop[p]}}$: the states $\stt[u_0]$ and $\stt[u_1]$
can be distinguished by agent $\ag[a]$, and agent $\ag[b]$ cannot tell whether a
step has been taken.  In $\stt[u_0]$ the formula $\M{\ag[a]}{\prop[p]}$ holds
\wrt $Y_{\mathit{nd}}$, but $\stt[u_1]$, does not, since
$(\stt[u_1],\stt[u_0])\not\in E_{\mathit{nd},a}$.  On the other hand,
$\Gamma_{\mathit{nd}}$ provides epistemic witnesses pathologically, since
$\interp{\Gamma_{\mathit{nd}}}{M}, s \models \K{\ag[b]}{\M{\ag[a]}{\prop[p]}}$
for any $M \in \ETS[{\esig[nd]}](\ebas[nd])$ and any $s \in
\RS(\interp{\Gamma_{\mathit{nd}}}{M})$, and hence has a unique interpretation,
which in this case is $\interp{\Gamma_{\mathit{nd}}}{Y_{\mathit{nd}, \mu}} =
Y_{\mathit{nd}, \nu} = \interp{\Gamma_{\mathit{nd}}}{Y_{\mathit{nd}, \nu}}$.
\end{example}

For synchronous epistemically guarded transition systems, however, the least
fixed point is decided, since all knowledge refers to a past that must have
happened:

\begin{restatable}{lemma}{LEMconstructivesync}\label{lem:constr-sync}
Let $\Gamma = (\ebas, \mathcal{T})$ be an epistemically guarded transition
system over $(P, A)$ that is synchronous.  Let $Y \in \EMCTS(\ebas)$ satisfy
$\interp{\Gamma}{Y} = Y$.  Then $Y$ is decided.
\end{restatable}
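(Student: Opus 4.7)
The plan is to prove $Y_\mu = Y_\nu$ by induction on $k \in \NZ$ showing jointly
\[
\RS[k](Y_\mu) = \RS[k](Y_\nu) \quad\text{and}\quad \RT[k](Y_\mu) = \RT[k](Y_\nu);
\]
taking the union over all $k$ then yields $\Trans(Y_\mu) = \Trans(Y_\nu)$, so $Y$ is decided. The base $k = 0$ is immediate from $\RS[0] = S_0$ and $\RT[0] = \emptyset$ in both components.

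For the inductive step, since $\Trans(Y_\mu) \subseteq \Trans(Y_\nu)$ holds automatically, it suffices to show that every edge $(s, s') \in \Trans(Y_\nu)$ with $s \in \RS[k](Y_\nu)$ already lies in $\Trans(Y_\mu)$. By the fixed"=point property $Y = \interp{\Gamma}{Y}$ such an edge is contributed by some $\tau = (\gact{\varphi}{B}) \in \mathcal{T}$ with $(s, s') \in B$ and $Y, s \not\nmodels \varphi$; what needs to be shown is the positive counterpart $Y, s \pmodels \varphi$, for then the outer IH giving $s \in \RS[k](Y_\mu) = \RS[k](Y_\nu)$ places $(s, s') \in \interp{\tau}{Y, \mu}$, whence $\RT[k+1](Y_\nu) \subseteq \RT[k+1](Y_\mu)$ and consequently $\RS[k+1](Y_\mu) = \RS[k+1](Y_\nu)$.

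The passage from $\not\nmodels$ to $\pmodels$ is closed by the \emph{decidedness subclaim} proved, at each fixed $k$, by structural induction on $\psi \in \EFrm$:
\[
Y, s \pmodels \psi \iff Y, s \not\nmodels \psi \quad\text{for all } s \in \RS[k](Y_\mu) = \RS[k](Y_\nu).
\]
Atoms, $\falsefrm$, negation, and conjunction are immediate from the defining clauses and the structural IH. The essential case is $\psi = \K{\ag}{\chi}$: positive satisfaction universally quantifies over $\{s' \in \RS(Y_\nu) \mid (s, s') \in E_\ag\}$, while $\not\nmodels$ does so over $\{s' \in \RS(Y_\mu) \mid (s, s') \in E_\ag\}$. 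Synchronicity of the two epistemic transition structures $Y_\mu$ and $Y_\nu$ forces any reachable $E_\ag$"=accessible successor of a state at depth $k$ to itself lie at depth $\leq k$, so both sets collapse to $\{s' \in \RS[k](Y_\mu) \mid (s, s') \in E_\ag\} = \{s' \in \RS[k](Y_\nu) \mid (s, s') \in E_\ag\}$ by the outer IH; the structural IH applied to $\chi$ at each such successor then equates the two quantified formulas.

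The main obstacle is the appeal to synchronicity of $Y_\mu$ and $Y_\nu$ themselves, since $\Gamma$"=synchronicity is phrased only for standard interpretations $\interp{\Gamma}{M}$ with $M \in \ETS(\ebas)$, whereas $Y$ is a must/can fixed point. I would address this by an auxiliary lemma stating that if $\Gamma$ is synchronous then for every must/can $Y$ with $\interp{\Gamma}{Y} = Y$ both components $Y_\mu$ and $Y_\nu$ are synchronous epistemic transition structures. The argument proceeds by an inner induction on reachability depth, exploiting that every edge in $Y_\nu$ (respectively $Y_\mu$) is produced from an already"=reachable source state by a guarded action of $\Gamma$, so that the level"=wise generation of $\RS(Y_\nu)$ and $\RS(Y_\mu)$ is of exactly the same shape as in the standard interpretations to which $\Gamma$"=synchronicity directly applies; synchronicity therefore propagates to both components of any fixed point, and the main induction outlined above closes.
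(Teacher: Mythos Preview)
Your double induction is set up correctly, and the decidedness subclaim would indeed close the step if synchronicity of $Y_\mu$ and $Y_\nu$ were available. But this is precisely where the argument stalls, and your sketch does not close the gap. By definition, $\Gamma$ is synchronous when every \emph{classical} interpretation $\interp{\Gamma}{M}$ with $M \in \ETS(\ebas)$ is synchronous. Neither $Y_\mu$ nor $Y_\nu$ is of this form a priori: their transition relations are carved out by the constructive predicates $\pmodels$ and $\not\nmodels$ over the two"=sided structure $Y$, not by a single classical $\models$ over some $M$. The phrase ``of exactly the same shape as in the standard interpretations'' does not bridge this. Concretely, if $\Gamma$ contains two actions guarded by $\varphi$ and $\neg\varphi$, then at an undecided state $s$ one may have both $Y, s \not\nmodels \varphi$ and $Y, s \not\nmodels \neg\varphi$, so $Y_\nu$ fires both actions from $s$ --- something no classical $\interp{\Gamma}{M}$ can do. Hence there is no evident $M$ with $\interp{\Gamma}{M} = Y_\nu$, and the hypothesis gives you nothing about $Y_\nu$ directly. (A posteriori, once the lemma is proved one does get $Y_\mu = \interp{\Gamma}{Y_\mu}$ and thus synchrony of $Y_\mu$; using that here would be circular.)

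The paper avoids this trap by a different decomposition. Rather than comparing $Y_\mu$ with $Y_\nu$ level by level, it shows that each component is already a \emph{classical} fixed point, $\interp{\Gamma}{Y_\mu} = Y_\mu$ and $\interp{\Gamma}{Y_\nu} = Y_\nu$, and then invokes \cref{prop:unique}: synchronicity implies provision of epistemic witnesses, hence dependence on the past, hence at most one classical solution, whence $Y_\mu = Y_\nu$. For the first equality one inclusion $Y_\mu \subseteq \interp{\Gamma}{Y_\mu}$ holds in general (\cref{sect:decided}); the reverse is argued by contradiction at a minimal offending depth $k$, and the crucial epistemic step uses synchronicity of $\interp{\Gamma}{Y_\mu}$ --- which \emph{is} immediately granted by the hypothesis, since $Y_\mu \in \ETS(\ebas)$ --- to pull $E_{\ag}$"=accessible counterexamples back to depth $\leq k$. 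That is the missing ingredient: route the argument through the classical interpretations $\interp{\Gamma}{Y_\mu}$ and $\interp{\Gamma}{Y_\nu}$, whose synchrony comes for free, instead of through $Y_\mu$ and $Y_\nu$ themselves.
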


Summing up, the constructive approach to interpreting knowledge"=based programs
subsumes the solutions for synchronous programs and provides a sound procedure
for obtaining lower and upper bounds for the execution of both synchronous and
asynchronous knowledge"=based programs.  The approach, however, is not complete:
If the least constructive fixed point $\mu\Gamma$ is undecided, a system
$\Gamma$ may be contradictory without any solution (see
\cref{ex:interp}\cref{it:ex:interp:vs}), self"=fulfilling with several solutions
(see \cref{ex:iter}\cref{it:ex:iter:maybe}), or it may have a unique solution in
terms of epistemic transition structures (see
\cref{ex:iter}\cref{it:ex:iter:nc}).  One strategy that suggests itself for
analysing $\Gamma$ further is to check whether an interpretation using the lower
bound $(\mu\Gamma)_{\mu}$ of the least fixed point satisfies
$\interp{\Gamma}{(\mu\Gamma)_{\mu}} = (\mu\Gamma)_{\nu} =
\interp{\Gamma}{(\mu\Gamma)_{\nu}}$, which means that when executing according
to what must happen all what can happen is already covered (see
\cref{ex:undecided}).  Another option --- which, however, would assign a meaning
to any knowledge"=based program --- is to re"=interpret each guarded action
$\gact{\varphi}{B}$ as contributing $\{ (s, s') \mid s \in
\RS((\mu\Gamma)_{\nu}) \text{ and } \mu\Gamma, s \cmodels \nnf(\varphi) \}$ to
the overall semantics.

\section{Knowledge"=based Programs as Rule Systems}\label{sect:rules}

The ``executions'' of an epistemically guarded transition system $\Gamma$ can be
captured as derivations of two mutually dependent inductive rule systems, like
used for inductive definitions~\cite{aczel:hml:1977,harper:2013}.  One rule
system defines the reachability in $\Gamma$, the other one the satisfaction of
knowledge formulæ in negation normal form over $\Gamma$.  When $\Gamma$ provides
epistemic witnesses, the mutual dependence can be resolved by stratifying the
rule system for reachability according to the depth of the execution.  In the
general case, the non-monotone dependence of the formula satisfaction system on
the reachability system --- the more states are reachable, the less is known ---
can be mitigated by extending the notion of rule systems to include also
negative premisses: The conclusion of a rule is derivable if all its (positive)
premisses are derivable, but none of its negative premisses.  When applied to
knowledge formulæ, the negative premisses express that no counterexample can be
reachable.

\subsection{Inductive Rule Systems}

An \emph{inductive rule system} $R$ consists of \emph{rules} of the form $X/y$
where the \emph{premisses} $X \subseteq U$ and the \emph{conclusion} $y \in U$
are drawn from some \emph{universe} of \emph{judgements} $U$; a rule $X/y$ is
interpreted as ``if all $X$ can be inferred, then $y$ can be inferred''.  The
\emph{derivations} in $R$ together with their \emph{sets of premisses} and
\emph{conclusions} are inductively defined as follows:
\begin{itemize}
  \item a $y \in U$ is itself a derivation; its set of premisses is $\{ y \}$,
its conclusion is $y$;

  \item if $X/y \in R$ and $(d_x)_{x \in X}$ a family of derivations with
conclusions $(x)_{x \in X}$, then $(d_x)_{x \in X}/y$ is a derivation; its set
of premisses is the union of the premisses of $(d_x)_{x \in X}$, its conclusion
is $y$.
\end{itemize}
A $y \in U$ is \emph{derivable} in $R$ if there is a derivation in $R$ with the
empty set of premisses and conclusion $y$.  The set of derivable conclusions of
$R$ coincides with the least fixed point $\mu \hat{R}$ of $\hat{R} : \powerset U
\to \powerset U$ defined by $\hat{R}(P) = \{ y \in U \mid \text{ex.\ } X/y \in R
\text{ s.\,t.\ } X \subseteq P \}$.

For expressing reachability and the satisfaction of knowledge formulæ in an
epistemically guarded transition system $\Gamma = (S, E, L, S_0, \mathcal{T})$
over $\ESig$ as inductive rule systems, we use two types of judgements, one of
the form $s \in^{\Gamma} \RS$ with $s \in S$ for ``state $s$ is reachable in
$\Gamma$'', and one of the form $s \models^{\Gamma} \varphi$ with $s \in S$ and
$\varphi \in \EFrm$ in negation normal form for ``state $s$ satisfies formula
$\varphi$ in $\Gamma$''.
The rules for reachability read:
\begin{gather*}
  \structrule{}{s_0 \in^{\Gamma} \RS}
\quad\text{if $s_0 \in S_0$}
\qquad
  \structrule{s \in^{\Gamma} \RS}{s' \in^{\Gamma} \RS}
\quad\stackedtext{if ex.\ $(\gact{\varphi}{B}) \in \mathcal{T}$,\\\phantom{if }$(s, s') \in B$, and $s \models^{\Gamma} \varphi$}
\end{gather*}
where $s \models^{\Gamma} \varphi$ in the side condition of the second rule
requires this judgement to be derivable in the rule system for satisfaction.
The rules for this system read:
\begin{gather*}
  \structrule{}{s \models^{\Gamma} \truefrm}
\quad\stackedtext{if $s \in^{\Gamma} \RS$}
\qquad
  \structrule{}{s \models^{\Gamma} p}
\quad\stackedtext{if $s \in^{\Gamma} \RS$,\\\phantom{if }$p \in L(s)$}
\qquad
  \structrule{}{s \models^{\Gamma} \neg p}
\quad\stackedtext{if $s \in^{\Gamma} \RS$,\\\phantom{if }$p \notin L(s)$}
\\
  \structrule{s \models^{\Gamma} \varphi_1\quad s \models^{\Gamma} \varphi_2}{s \models^{\Gamma} \varphi_1 \land \varphi_2}
\qquad
  \structrule{s \models^{\Gamma} \varphi_1}{s \models^{\Gamma} \varphi_1 \lor \varphi_2}
\qquad
  \structrule{s \models^{\Gamma} \varphi_2}{s \models^{\Gamma} \varphi_1 \lor \varphi_2}
\\
  \structrule{s' \models^{\Gamma} \varphi}{s \models^{\Gamma} \M{a}{\varphi}}
\quad\stackedtext{if $(s, s') \in E_a$,\\\phantom{if }$s' \in^{\Gamma} \RS$}
\qquad
  \structrule{(s' \models_{\Gamma} \varphi)_{s' \in^{\Gamma} \RS,\ (s, s') \in E_a}}{s \models^{\Gamma} \K{a}{\varphi}}
\end{gather*}
Here, the last rule for satisfaction in fact is not monotone \wrt reachability:
In order to infer $s \models^{\Gamma} \K{a}{\varphi}$ it is not necessary to
infer $s' \models^{\Gamma} \varphi$ for all $s'$ with $(s, s') \in E_a$, but
only for those for which $s' \in^{\Gamma} \RS$ can be deduced --- and also for
all of those.

The notion of providing epistemic witnesses allows to stratify the inductive
rule systems according to the involved depth $k \geq 0$: We specialise the
judgement $s \in^{\Gamma} \RS$ into $s \in^{\Gamma} \RS[k]$ meaning ``state $s$
is reachable in $\Gamma$ in up to $k$ steps'' and, similarly, the judgement $s
\models^{\Gamma} \varphi$ into $s \models^{\Gamma}_k \varphi$ meaning ``formula
$\varphi$ is satisfied in $\Gamma$ at state $s$ considering states reachable in
up to $k$ steps''.  The rules for reachability become for all $k \geq 0$:
\begin{gather*}
  \structrule{}{s_0 \in^{\Gamma} \RS[k]}
\quad\text{if $s_0 \in^{\Gamma} S_0$}
\qquad
  \structrule{s \in^{\Gamma}  \RS[k]}{s' \in^{\Gamma} \RS[k+1]}
\quad\stackedtext{if ex.\ $(\gact{\varphi}{B}) \in \mathcal{T}$,\\\phantom{if }$(s, s') \in B$, and $s \models^{\Gamma}_k \varphi$}
\end{gather*}
Analogously the rules for satisfaction become for all $k \geq 0$:
\begin{gather*}
  \structrule{}{s \models^{\Gamma}_k \truefrm}
\quad\stackedtext{if $s \in^{\Gamma} \RS[k]$}
\qquad
  \structrule{}{s \models^{\Gamma}_k p}
\quad\stackedtext{if $s \in^{\Gamma} \RS[k]$,\\\phantom{if }$p \in L(s)$}
\qquad
  \structrule{}{s \models^{\Gamma}_k \neg p}
\quad\stackedtext{if $s \in^{\Gamma} \RS[k]$,\\\phantom{if }$p \notin L(s)$}
\\
  \structrule{s \models^{\Gamma}_k \varphi_1\quad s \models^{\Gamma}_k \varphi_2}{s \models^{\Gamma}_k \varphi_1 \land \varphi_2}
\qquad
  \structrule{s \models^{\Gamma}_k \varphi_1}{s \models^{\Gamma}_k \varphi_1 \lor \varphi_2}
\qquad
  \structrule{s \models^{\Gamma}_k \varphi_2}{s \models^{\Gamma}_k \varphi_1 \lor \varphi_2}
\\
  \structrule{s' \models^{\Gamma}_k \varphi}{s \models^{\Gamma}_k \M{a}{\varphi}}
\quad\stackedtext{if $(s, s') \in E_a$,\\\phantom{if }$s' \in^{\Gamma} \RS[k]$}
\qquad
  \structrule{(s' \models^{\Gamma}_k \varphi)_{s' \in^{\Gamma} \RS[k],\ (s, s') \in E_a}}{s \models^{\Gamma}_k \K{a}{\varphi}}
\end{gather*}
In particular, the rules for $s \models^{\Gamma}_k \M{a}{\varphi}$ and $s
\models^{\Gamma}_k \K{a}{\varphi}$ are sound for epistemically guarded
transition systems providing epistemic witnesses.

The notion of ``providing epistemic witnesses'' requires that, if
$\K{\ag}{\varphi}$ does not hold at reachability depth $k$, there is a
counterexample to $\varphi$ at reachability depth $\leq k$.  The general case
can be covered by dropping the depth requirement and taking into account that
$\K{\ag}{\varphi}$ does not hold at some state $s$ if, and only if, there is
some reachable, $\ag$-indistinguishable state $s'$ at which $\varphi$ does not
hold.  Therefore, in order to derive that $\K{\ag}{\varphi}$ indeed holds at
some reachable state $s$, it is necessary and sufficient to show that it is
\emph{not} possible to derive that $\neg\varphi$ holds at some reachable,
$\ag$-indistinguishable state $s'$.

\subsection{General Rule Systems with Positive and Negative Premisses}

For expressing negative information in terms of a rule system, we complement the
positive premisses of the rules by negative ones:
We consider general \emph{rule systems} $R$ over a universe $U$ consisting of
rules of the form $(X, \negat Z)/y$ where $X, Z \subseteq U$ are the
\emph{positive} and \emph{negative premisses}, and $y \in U$ is the
\emph{conclusion}; it is interpreted as ``if all $X$ can be inferred but no $Z$,
then $y$ can be inferred''.
The \emph{derivations} in $R$ together with their \emph{sets of positive and
  negative premisses} and \emph{conclusions} are again inductively defined as
follows:
\begin{itemize}
  \item a $y \in U$ is itself a derivation; its set of positive premisses is $\{
y \}$, its set of negative premisses is $\emptyset$, and its conclusion is $y$;

  \item if $(X, \negat Z)/y \in R$ and $(d_x)_{x \in X}$ a family of derivations
with conclusions $(x)_{x \in X}$, then $((d_x)_{x \in X}, \negat Z)/y$ is a
derivation; its set of positive premisses is the union of the positive premisses
of $(d_x)_{x \in X}$, its set of negative premisses is the union of the negative
premisses of $(d_x)_{x \in X}$ together with $Z$, and its conclusion is $y$.
\end{itemize}
For a $B \subseteq U$, let $\bar{R}(B)$ be all those $y \in U$ such that there
is a derivation of $y$ in $R$ with the empty set of positive premisses and no
negative premisses in $B$.  The set of \emph{derivable conclusions} of $R$ is
given by the least fixed point of $\bar{R}$ if it exists.

With this generalised notion of rule systems we can reformulate and combine the
two inference systems for reachability and satisfaction in an epistemically
guarded transition system $\Gamma = (S, E, L, S_0, \mathcal{T})$ over $\ESig$ by
using a single
judgement $s \models^{\Gamma}_{\omega} \varphi$ for ``state $s$ satisfies
$\varphi$ in $\Gamma$ and state $s$ is reachable in $\Gamma$''.  A negative
premiss $\negat(s \models^{\Gamma}_{\omega} \truefrm)$ thus stands for ``$s
\in^{\Gamma} \RS$ cannot be deduced''.  The new rules with also negative
premisses read:
\begin{gather*}
  \structrule{}{s_0 \models^{\Gamma}_{\omega} \truefrm}
\quad\text{if $s_0 \in S_0$}
\qquad
  \structrule{s \models^{\Gamma}_{\omega} \varphi}{s' \models^{\Gamma}_{\omega} \truefrm}
\quad\stackedtext{if ex.\ $(\gact{\varphi}{B}) \in \mathcal{T}$,\\\phantom{if }$(s, s') \in B$}
\\
  \structrule{s \models^{\Gamma}_{\omega} \truefrm}{s \models^{\Gamma}_{\omega} p}
\quad\text{if $p \in L(s)$}
\qquad
  \structrule{s \models^{\Gamma}_{\omega} \truefrm}{s \models^{\Gamma}_{\omega} \neg p}
\quad\text{if $p \notin L(s)$}
\\
  \structrule{s \models^{\Gamma}_{\omega} \varphi_1\quad s \models^{\Gamma}_{\omega} \varphi_2}{s \models^{\Gamma}_{\omega} \varphi_1 \land \varphi_2}
\qquad
  \structrule{s \models^{\Gamma}_{\omega} \varphi_1}{s \models^{\Gamma}_{\omega} \varphi_1 \lor \varphi_2}
\qquad
  \structrule{s \models^{\Gamma}_{\omega} \varphi_2}{s \models^{\Gamma}_{\omega} \varphi_1 \lor \varphi_2}
\\
  \structrule{s' \models^{\Gamma}_{\omega} \varphi}{s \models^{\Gamma}_{\omega} \M{a}{\varphi}}
\quad\text{if $(s, s') \in E_a$}
\qquad
  \structrule{s \models^{\Gamma}_{\omega} \truefrm \quad \negat (s' \models^{\Gamma}_{\omega} \nnf(\neg\varphi))_{(s, s') \in E_a}}{s \models^{\Gamma}_{\omega} \K{a}{\varphi}}
\end{gather*}
The rule for $s \models^{\Gamma}_{\omega} \K{\ag}{\varphi}$ checks that $s$ is
reachable, but that no counterexample to $\varphi$ can be reached at an
$\ag$-undistinguishable state.

  

Using general rule systems, the solvability of an epistemically guarded
transition system is shifted to computing derivable conclusions.  As for
knowledge"=based programs, it is not obvious from just the rules of a system $R$
whether there are solutions of $\bar{R}(B) = B$ at all, and whether there is a
least one.

\begin{example}
\begin{refpars}
  \item The general rule system
\begin{equation*}
R_0 = \left\{ \structrule{x_1}{x_1}, \structrule{\negat x_1\ \negat x_2}{x_2} \right\}
\quad\text{over}\quad
\{ x_1, x_2 \}
\end{equation*}
has no set of derivable conclusions, since $\bar{R}_0$ has no fixed point; in
particular, $\bar{R}_0(\emptyset) = \{ x_2 \}$ and $\bar{R}_0(\{ x_1 \}) =
\emptyset = \bar{R}_0(\{ x_2 \})$.  $R_0$ also demonstrates that the set of
derivable conclusions of a general rule system $R$ need not coincide with the
least fixed point of the operator $\hat{R} : \powerset U \to \powerset U$ when
transferred from inductive rule systems by now setting $\hat{R}(P) = \{ y \in U
\mid \text{ex.}\ (X, \negat Z)/y \in R\ \text{s.t.}\ X \subseteq P\text{,}\ P
\cap Z = \emptyset \}$: $\mu\hat{R}_0 = \{ x_1 \}$.  On the other hand, in view
of the general rule system for epistemically guarded transition systems $R_0$
can also be rephrased as a knowledge"=based program with a single agent $\ag[a]$
and a single variable $\prop[x] \in \{ 0, 1, 2 \}$, which $\ag[a]$ cannot
observe, started with $\prop[x] = 0$:
\begin{equation*}
\begin{array}[t]{@{}r@{\ }l@{\ }r@{}}
  \textbf{\texttt{if}} & \M{\ag[a]}{\prop[x] = 1} \rightarrowtriangle \prop[x] \gets 1
\\
  \talloblong          & \K{\ag[a]}{(\prop[x] \neq 1 \land \prop[x] \neq 2)} \rightarrowtriangle \prop[x] \gets 2 & \textbf{\texttt{fi}}
\end{array}
\end{equation*}

Similarly, the second rule alone, \ie, the general rule system 
\begin{equation*}
R_1 = \left\{ \structrule{\negat x_1\ \negat x_2}{x_2} \right\}
\quad\text{over}\quad
\{ x_1, x_2 \}
\end{equation*}
does not show any solution, that is a fixed point of $\bar{R}_1$: $x_1$ cannot
be contained in any solution as it cannot be derived, and $x_2$ leads to a
contradiction.  The same holds for
\begin{equation*}
R_2 = \left\{ \structrule{\negat x_1}{x_2},\
              \structrule{\negat x_2}{x_3},\
              \structrule{\negat x_3}{x_1} \right\}
\quad\text{over}\quad
\{ x_1, x_2, x_3 \}
\end{equation*}
where if $x_2$ were derivable, then $x_3$ could not be derived, and thus $x_1$
were derivable, which implies that $x_2$ could not be derived --- and vice
versa;

  \item There may be several solutions of a general rule system, but no least
one:
\begin{equation*}
R_3 = \left\{ \structrule{\negat x_1}{x_3},\ \structrule{\negat x_3}{x_1} \right\}
\quad\text{over}\quad
\{ x_1, x_3 \}
\end{equation*}
has the solutions $\{ x_1 \}$ and $\{ x_3 \}$, but $\emptyset$ is no solution.
It corresponds to the ``variable setting'' knowledge"=based program
\begin{equation*}
\begin{array}[t]{@{}r@{\ }l@{\ }r@{}}
  \textbf{\texttt{if}} & \K{\ag[a]}{\prop[x] \neq 1} \rightarrowtriangle \prop[x] \gets 3
\\
  \talloblong          & \K{\ag[a]}{\prop[x] \neq 3} \rightarrowtriangle \prop[x] \gets 1 & \textbf{\texttt{fi}}
\end{array}
\end{equation*}
of the introduction, see \cref{ex:egts}\cref{it:ex:egts:vs}.  For a general rule
system with an infinite universe and infinitely many rules, consider
\begin{equation*}
R_4 = \left\{ \structrule{\negat x_{n+1}}{x_n} \;\Big|\; n \in \NZ \right\}
\quad\text{over}\quad 
\{ x_n \mid n \in \NZ \}
\end{equation*}
that has the incomparable solutions $\{ x_{2n} \mid n \in \NZ \}$ and $\{
x_{2n+1} \mid n \in \NZ \}$.

  \item Combining the rule systems $R_1$ showing a contradictory rule and $R_3$
with its non"=determined rules, we obtain
\begin{equation*}
R_5 = \left\{ \structrule{\negat x_1}{x_3},\ \structrule{\negat x_3}{x_1},\ \structrule{\negat x_1\ \negat x_2}{x_2} \right\}
\quad\text{over}\quad
\{ x_1, x_2, x_3 \}
\end{equation*}
which has the unique solution $\{ x_1 \}$: if $x_3$ were inferrable, \ie, $x_1$
not inferrable, this would trigger the contradictory rule for $x_2$ (see
\cref{ex:iter}\cref{it:ex:iter:nc}).\qedhere
\end{refpars}
\end{example}

\subsection{Solving General Rule Systems}

The observations and definitions for epistemic must/can transition structures
and constructive interpretation, see \cref{sect:constr}, can now readily be
transferred to a more abstract account for general rule systems.  Define, for a
universe $U$, the set $\powerset^{\pm} U$ as $\{ (P, Q) \in \powerset U \times
\powerset U \mid P \subseteq Q \}$ and the relation ${\subseteq^{\pm}} \subseteq
\powerset^{\pm} U \times \powerset^{\pm} U$ as $(P, Q) \subseteq^{\pm} (P', Q')$
if, and only if, $P \subseteq P'$ and $Q \supseteq Q'$.

\begin{restatable}{lemma}{LEMmustcanpartial}\label{lem:must-can-po}
$(\powerset^{\pm} U, {\subseteq^{\pm}}, \bot^{\pm}_U)$ with $\bot^{\pm}_U =
(\emptyset, U)$ is an inductive partial order.
\end{restatable}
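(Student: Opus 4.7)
The plan is to verify the three constituents of being an inductive partial order: that $\subseteq^{\pm}$ is a partial order, that $\bot^{\pm}_U$ is the least element, and that every directed subset admits a supremum.

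First, I would dispatch the easy parts quickly. Reflexivity, transitivity, and antisymmetry of $\subseteq^{\pm}$ follow componentwise from those properties of $\subseteq$ and $\supseteq$ on $\powerset U$. For the bottom element, any $(P, Q) \in \powerset^{\pm} U$ satisfies $\emptyset \subseteq P$ and $U \supseteq Q$, hence $\bot^{\pm}_U = (\emptyset, U) \subseteq^{\pm} (P, Q)$; note also $\emptyset \subseteq U$, so $\bot^{\pm}_U$ lies in $\powerset^{\pm} U$.

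The main task is the existence of suprema for directed subsets. Given a directed $\Delta \subseteq \powerset^{\pm} U$, I would propose
\[
  \textstyle
  \bigsqcup \Delta \;=\; \Bigl(\bigcup_{(P, Q) \in \Delta} P,\; \bigcap_{(P, Q) \in \Delta} Q\Bigr)
\]
as the candidate and check three things. Upper-boundedness is immediate: for every $(P_0, Q_0) \in \Delta$ we have $P_0 \subseteq \bigcup P$ and $\bigcap Q \subseteq Q_0$. Leastness among upper bounds is equally routine: if $(P', Q')$ dominates every $(P, Q) \in \Delta$, then $P \subseteq P'$ for all $(P, Q)$, whence $\bigcup P \subseteq P'$; dually $Q' \subseteq Q$ for all $(P, Q)$, giving $Q' \subseteq \bigcap Q$.

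The one step that actually uses directedness, and which I expect to be the only non-trivial point, is showing the candidate lies in $\powerset^{\pm} U$, i.e.\ $\bigcup_{(P, Q) \in \Delta} P \subseteq \bigcap_{(P, Q) \in \Delta} Q$. Pick an element $x$ of some $P_1$ with $(P_1, Q_1) \in \Delta$ and an arbitrary $(P_2, Q_2) \in \Delta$; I must show $x \in Q_2$. By directedness there is $(P, Q) \in \Delta$ with $(P_1, Q_1), (P_2, Q_2) \subseteq^{\pm} (P, Q)$, i.e.\ $P_1 \subseteq P$, $Q \subseteq Q_1$, $P_2 \subseteq P$, and $Q \subseteq Q_2$. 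Then $x \in P_1 \subseteq P \subseteq Q \subseteq Q_2$, using the defining inclusion $P \subseteq Q$ of $(P, Q) \in \powerset^{\pm} U$. Since $(P_2, Q_2)$ was arbitrary, $x \in \bigcap_{(P, Q) \in \Delta} Q$, which concludes the proof.
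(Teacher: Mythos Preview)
Your proof is correct and follows essentially the same approach as the paper's: both propose $(\bigcup_{(P,Q)\in\Delta} P,\ \bigcap_{(P,Q)\in\Delta} Q)$ as the supremum and use directedness via the chain $P_1 \subseteq P'' \subseteq Q'' \subseteq Q_2$ to show this pair lies in $\powerset^{\pm} U$. The only cosmetic difference is that you argue elementwise with a witness $x$, whereas the paper phrases the same step as the set inclusion $P \subseteq Q'$ for all pairs $(P,Q),(P',Q') \in \Delta$.
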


For a general rule system $R$ over $U$ with positive and negative premisses
define the operator $\check{R} : \powerset^{\pm} U \to \powerset^{\pm} U$ that
describes what \emph{must} and what \emph{can} be derived given what is assumed
to be definitely and potentially derivable:
\begin{equation*}
\check{R}(P, Q) = (\stacked{%
  \{ y \in U \mid \text{ex.\ } (X, \negat Z)/y \in R \text{ s.\,t.\ } X \subseteq P,\ Q \cap Z = \emptyset \},\\
  \{ y \in U \mid \text{ex.\ } (X, \negat Z)/y \in R \text{ s.\,t.\ } X \subseteq Q,\ P \cap Z = \emptyset \})}
\end{equation*}
This is well"=defined: if $(P, Q) \in \powerset^{\pm} U$, then $\check{R}(P, Q)
\in \powerset^{\pm} U$, since for $P \subseteq Q$ and each $(X, \negat Z)/y \in
R$ with $X \subseteq P$ and $Q \cap Z = \emptyset$ it holds that $X \subseteq Q$
and $P \cap Z = \emptyset$.  The operator is always monotone:

\begin{restatable}{lemma}{LEMmustcanmonotone}\label{lem:must-can-monotone}
Let $R$ be a rule system over $U$.  If $(P_1, Q_1) \subseteq^{\pm} (P_2, Q_2)$,
then $\check{R}(P_1, Q_1) \subseteq^{\pm} \check{R}(P_2, Q_2)$.
\end{restatable}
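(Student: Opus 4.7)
The plan is to unfold the definition of $\subseteq^{\pm}$ and verify the two conjuncts of $\check{R}(P_1,Q_1) \subseteq^{\pm} \check{R}(P_2,Q_2)$ separately. Write $\check{R}(P,Q) = (M(P,Q), C(P,Q))$ where $M$ collects the ``must'' conclusions and $C$ the ``can'' conclusions, and recall from the hypothesis that $P_1 \subseteq P_2$ and $Q_2 \subseteq Q_1$. The goal splits into showing $M(P_1,Q_1) \subseteq M(P_2,Q_2)$ and $C(P_2,Q_2) \subseteq C(P_1,Q_1)$.

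For the must-part, I would pick an arbitrary $y \in M(P_1,Q_1)$, so there is a rule $(X, \negat Z)/y \in R$ with $X \subseteq P_1$ and $Q_1 \cap Z = \emptyset$. Monotonicity of $P$ gives $X \subseteq P_2$, and from $Q_2 \subseteq Q_1$ I obtain $Q_2 \cap Z \subseteq Q_1 \cap Z = \emptyset$, so the same rule witnesses $y \in M(P_2,Q_2)$. The can-part is dual: given $y \in C(P_2,Q_2)$ with witnessing rule $(X, \negat Z)/y$ such that $X \subseteq Q_2$ and $P_2 \cap Z = \emptyset$, the inclusion $Q_2 \subseteq Q_1$ yields $X \subseteq Q_1$, and $P_1 \subseteq P_2$ yields $P_1 \cap Z \subseteq P_2 \cap Z = \emptyset$, so $y \in C(P_1,Q_1)$.

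There is no real obstacle here; the argument is a symmetric four-line bookkeeping check. The only thing worth stating carefully is the role played by the two opposing directions of the ordering (positive premisses grow monotonely with $P$, while negative premisses tolerate a shrinking of $Q$), which is exactly the asymmetry that makes $\subseteq^{\pm}$ the right order for the must/can analysis.
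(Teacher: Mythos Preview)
Your proof is correct and follows essentially the same approach as the paper: unfold the definition of $\check{R}$, split into the must and can components, and verify each inclusion by chasing the witnessing rule through the hypotheses $P_1 \subseteq P_2$ and $Q_2 \subseteq Q_1$. The paper's proof is virtually identical, just using notation $(P_i', Q_i') = \check{R}(P_i, Q_i)$ instead of your $M$ and $C$.
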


As for constructive interpretation, Pataraia's fixed"=point theorem now
guarantees that the monotone operator $\check{R}$ on the inductive partial order
$(\powerset^{\pm} U, {\subseteq^{\pm}}, \bot_U^{\pm})$ has a least fixed point.
Again, it can be ``computed'' by possibly transfinite iterated application of
$\check{R}$ to $\bot_U^{\pm}$.  If, however, $\check{R}$ is even continuous,
then, by Kleene's fixed-point theorem, it suffices to consider all finite
approximations, \ie, $\mu\check{R} = \bigcup^{\pm}_{n \in \NZ}
\check{R}^{n}(\bot_U^{\pm})$; that $\check{R}$ is \emph{continuous} means that
if $\Delta \subseteq \powerset^{\pm} U$ is directed, then $\bigcup^{\pm}
\check{R}(\Delta) = \check{R}(\bigcup^{\pm} \Delta)$.

\begin{restatable}{lemma}{LEMmustcancontinuous}\label{lem:must-can-continuous}
Let $R$ be a rule system over $U$ such that every rule of $R$ shows only
finitely many positive and negative premisses.  Then $\check{R}$ is continuous.
\end{restatable}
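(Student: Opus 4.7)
The plan is to combine the already-established monotonicity of \cref{lem:must-can-monotone} with a compactness argument that leverages the finiteness of each rule's premisses. Since monotonicity gives one inclusion $\bigcup^\pm \check{R}(\Delta) \subseteq^\pm \check{R}(\bigcup^\pm \Delta)$ for any directed $\Delta \subseteq \powerset^\pm U$, the remaining task is to establish the reverse.

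First I would describe directed joins concretely: for $\Delta = \{(P_i, Q_i)\}_{i \in I}$ directed, the lub is $(P, Q) = (\bigcup_i P_i, \bigcap_i Q_i)$, and $P \subseteq Q$ is guaranteed by directedness, since any upper bound $k$ of a pair $i, j$ gives $P_i \subseteq P_k \subseteq Q_k \subseteq Q_j$, so $\bigcup_i P_i \subseteq Q_j$ for every $j$. Writing $\check{R}(P_i, Q_i) = (A_i, B_i)$ and $\check{R}(P, Q) = (A, B)$, what remains to verify is $A \subseteq \bigcup_i A_i$ (for the \emph{must} component) and $\bigcap_i B_i \subseteq B$ (for the \emph{can} component).

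The must-component is a standard compactness argument. For $y \in A$ with witnessing rule $(X, \negat Z)/y$, finiteness of $X$ yields indices $(i_x)_{x \in X}$ such that $x \in P_{i_x}$, and finiteness of $Z$ yields indices $(j_z)_{z \in Z}$ such that $z \notin Q_{j_z}$. Directedness applied to this finite collection of indices produces a single $k$ above all of them; the very same rule then witnesses $y \in A_k$, hence $y \in \bigcup_i A_i$.

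The can-component is the hard part. If $y \in \bigcap_i B_i$, each $i$ supplies a witnessing rule $\rho_i = (X_i, \negat Z_i)/y$, but these rules may differ from one $i$ to another, so one cannot immediately transfer a witness to the limit. The plan is to extract a single rule $\rho = (X, \negat Z)/y$ that occurs as $\rho_i$ cofinally in $\Delta$, via a pigeonhole argument applied to the collection of rules concluding $y$: if no single rule were cofinal, then for each candidate rule there would be a cut-off above which it is not used, and directedness combined with the (necessarily finite) collection of such cut-offs would yield some index $k$ at which $\rho_k$ must be one of the discarded rules, a contradiction. Once such a cofinal $\rho$ has been extracted, $X \subseteq Q$ follows because for every $j$ one finds $i \geq j$ with $\rho_i = \rho$, giving $X \subseteq Q_i \subseteq Q_j$; the disjointness $P \cap Z = \emptyset$ follows analogously from $P_j \subseteq P_i$, since $P_i \cap Z = \emptyset$ passes downward. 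The main obstacle is this extraction: it rests on the finiteness of the set of rules concluding each fixed $y$, which must be read into the hypothesis alongside the finite-premiss condition for the pigeonhole step to go through.
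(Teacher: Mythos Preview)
Your treatment of the must component is exactly the paper's: monotonicity for one inclusion, then a finite-cover argument using directedness for $A \subseteq \bigcup_i A_i$.

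For the can component the paper does something much shorter than you propose: it takes a rule $(X, \negat Z)/y$ with $X \subseteq \bigcap_{(P,Q)\in\Delta} Q$ and $(\bigcup_{(P,Q)\in\Delta} P)\cap Z = \emptyset$ and simply notes that this yields $X \subseteq Q$ and $P \cap Z = \emptyset$ for every $(P,Q)\in\Delta$.  But look at which direction this is: it shows $B \subseteq \bigcap_i B_i$, i.e., precisely the inclusion already delivered by monotonicity, not the required $\bigcap_i B_i \subseteq B$ that makes up the second coordinate of $\check{R}(\bigcup^{\pm}\Delta) \subseteq^{\pm} \bigcup^{\pm}\check{R}(\Delta)$.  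So your instinct that the can component is the delicate point is correct, and you have not overlooked an easy argument that the paper supplies.

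In fact, under the hypothesis as literally stated the missing inclusion can fail: over $U = \NZ$ with rules $(\{n\}, \negat\emptyset)/0$ for all $n \geq 1$ (each with one positive and no negative premiss) and the chain $(\emptyset,\ \{0\}\cup\{n, n{+}1, \ldots\})_{n \geq 1}$, one gets $B_n = \{0\}$ for every $n$ but $B = \emptyset$ at the limit $(\emptyset,\{0\})$.  Your proposed fix---additionally assuming only finitely many rules conclude each $y$, then extracting a single cofinally recurring witness via the pigeonhole step you describe---does repair the argument; the cofinality claim (a finite partition of a directed set has a cofinal piece) is exactly the standard one.  So your analysis is sharper than the paper's here: the paper's second paragraph is vacuous, and some extra hypothesis along the lines you identify is genuinely needed.
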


The rule system for an epistemically guarded transition system $\Gamma = (S, E,
L, S_0, \mathcal{T})$ over $\ESig$ always shows only finitely many positive
premisses; if for each $s \in S$ and each $\ag \in \Ag$ the set $\{ s' \in S
\mid (s, s') \in E_{\ag} \}$ is finite, then there are also only finitely many
negative premisses, such that the corresponding must/can operator is continuous.

\section{Reasoning About Knowledge"=based Programs}\label{sect:temic}

We have implemented the constructive interpretation of knowledge"=based programs
in the prototypical ``Temporal Epistemic Model Interpreter and Checker''
(\TEMIC\footnote{\url{https://www.dropbox.com/s/ofnatog8ualazdh/temic.zip?dl=0}}).
The tool first computes the least constructive fixed point of a (finite state)
epistemically guarded transition system.  If the least fixed point is decided,
the least solution in terms of epistemic transition structures has been found;
otherwise it is checked whether the re-interpretation using the lower bound of
the undecided least fixed point yields a solution.  If either succeeds,
properties of the resulting model can be checked.  These properties can be
expressed in \CTLK, the combination of the branching ``Computation Tree Logic''
(CTL) and epistemic logic~\cite{lomuscio-penczek:hel:2015}.  What is more, \CTLK
can also be used in \TEMIC for the action guards.  The constructive
interpretation just evaluates each universal quantifier of a CTL formula ---
$\Aop$ for ``on all paths'' --- over the upper bound and each existential
quantifier --- $\Eop$ for ``on some path'' --- over the lower bound.  This adds
the temporal dimension to the domain of application of knowledge"=based
programs.  For the run"=based interpreted systems of Fagin et
al.~\cite{fagin-et-al:dc:1997}, Van der Hoek and
Woolridge~\cite{van-der-hoek-woolridge:spin:2002} and Su~\cite{su:aaai:2004}
provide transformations for linear-time model checking based on local
propositions, though for a fixed set of runs that does not depend on the
evaluation of knowledge guards.  The \CTLK-model checker
MCMAS~\cite{lomuscio-penczek:hel:2015} similarly operates on a fixed,
predetermined model.  In dynamic epistemic logic and its model checker
DEMO~\cite{van-ditmarsch-van-der-hoek-kooi:2008}, the transition structure is
given by epistemic actions.

We first recapitulate briefly \CTLK and then show its constructive evaluation
over epistemic must/can transition structures.  We next describe \TEMIC by means
of the bit transmission problem and the small paradoxical exercise of the
``unexpected examination''; the \TEMIC distribution also contains specifications
for the well"=known problems ``Muddy
Children''~\cite[pp.~93sqq.]{van-ditmarsch-van-der-hoek-kooi:2008} and
``Sum-and-Product''~\cite[pp.~96sq.]{van-ditmarsch-van-der-hoek-kooi:2008}.
Finally, we proceed to an application where \CTLK is also used in the action
guards: the Java memory model.

\subsection{\CTLK}

The \emph{\CTLK-formulæ} over $\ESig$ are defined by the following grammar:
\begin{align*}
  \varphi
&{\;\cln\cln=\;}
  \prop \;\mid\;
  \falsefrm \;\mid\;
  \neg\varphi \;\mid\;
  \varphi_1 \land \varphi_2\;\mid\;
  \K{a}{\varphi} \;\mid\;
  \EX{\varphi} \;\mid\;
  \EG{\varphi} \;\mid\;
  \EU{\varphi_1}{\varphi_2}
\end{align*}
where $p \in P$ and $a \in A$.  The path quantifier $\Eop$ is interpreted as
``there is a path'', the temporal modality $\Xop$ as ``in the next step'',
$\Gop$ as ``always'', and $\Uop$ as ``until''.  We also consider the path
quantifier $\Aop$ for ``all paths'' and the modalities $\Fop$ for ``eventually''
and $\Rop$ for ``release'', such that $\neg\EG{\neg\varphi}$ is abbreviated by
$\AF{\varphi}$ and $\neg\EU{\neg\varphi_1}{\neg\varphi_2}$ by
$\AR{\varphi_1}{\varphi_2}$.  The \emph{satisfaction relation} $M, s \models
\varphi$ of a \CTLK-formula $\varphi$ over $\ESig$ at state $s \in S$ of an
epistemic transition structure $M = (S, E, L, S_0, T)$ over $\ESig$
conservatively extends the satisfaction relation of epistemic formulæ by
\begin{gather*}
  M, s \models \EX{\varphi} \iff \text{ex.\ $s_0, s_1, \ldots \in \Paths(M, s)$ s.\,t.\ $M, s_1 \models \varphi$}
\\
  M, s \models \EG{\varphi} \iff
\begin{array}[t]{@{}l@{}}
  \text{ex.\ $s_0, s_1, \ldots \in \Paths(M,s)$ s.\,t.\ $M, s_i \models \varphi$ f.\,a.\ $i \in \NZ$}
\end{array}
\\
  M, s \models \EU{\varphi_1}{\varphi_2} \iff
\begin{array}[t]{@{}l@{}}
  \text{ex.\ $s_0, s_1, \ldots \in \Paths(M, s)$ and $l \in \NZ$ s.\,t.}\\
  \text{$M, s_i \models \varphi_1$ f.\,a.\ $0 \leq i < l$ and $M, s_l \models \varphi_2$}
\end{array}
\end{gather*}
where $\Paths(M, s)$ denotes all \emph{paths} of $M$, \ie, the infinite state
sequences $s_0, s_1, \ldots \in S$ with $s_0 = s$ and $(s_i, s_{i+1}) \in T$ for
all $i \in \NZ$.  A CTLK"=formula $\varphi$ is \emph{valid} in $M$, written $M
\models \varphi$, if it is satisfied in all initial states, \ie, $M, s_0 \models
\varphi$ for all $s_0 \in \St_0(M)$.

For a direct definition of the satisfaction of CTLK"=formulæ with an $\Aop$, the
existential path quantification for $\Eop$ has to be replaced by universal path
quantification.  As for simple epistemic logic, CTLK including $\AX{\varphi}$,
$\AG{\varphi}$ etc.\ admits a negation normal form (see, \eg,
\cite[pp.~333sq.]{baier-katoen:2008}).  The \emph{constructive satisfaction
  relation} of a CTLK"=formula in negation normal form over an epistemic
must/can transition structure $Y = (S, E, L, S_0, \mathcal{T})$ over $\ESig$ at
a state $s \in \RS(Y_{\nu})$, written $Y, s \models \varphi$, conservatively
extends the constructive satisfaction relation of epistemic formulæ and
interprets $\Eop$ over the lower bound $Y_{\mu}$ and $\Aop$ over the upper bound
$Y_{\nu}$ such that, in particular,
\begin{gather*}
  Y, s \models \EF{\varphi} \iff
\begin{array}[t]{@{}l@{}}
  \text{ex.\ $s_0, s_1, \ldots \in \Paths(Y_{\mu}, s)$ and $i \in \NZ$ s.\,t.\ $Y, s_i \models \varphi$}
\end{array}
\\
  Y, s \models \AF{\varphi} \iff
\begin{array}[t]{@{}l@{}}
  \text{f.\,a.\ $s_0, s_1, \ldots \in \Paths(Y_{\nu}, s)$ ex.\ $i \in \NZ$ s.\,t.\ $Y, s_i \models \varphi$}
\end{array}
\end{gather*}
This extension is well"=defined, as demonstrated in \cref{app:ctlk}.

\subsection{\TEMIC}

\TEMIC is a symbolic model interpreter and checker for epistemically guarded
transition systems using \CTLK.  It is written in Java and uses binary decision
diagrams for state space representation~\cite{jdd}; it also supports bounded
integers and their arithmetic.  Given a specification, \TEMIC first computes the
least constructive fixed point by iterated must/can interpretation.  If this
fixed point is not decided it checks whether another interpretation using the
lower bound of the fixed point yields a solution.  If either succeeds, \TEMIC
proceeds with model checking given properties; these statements can be specified
as \CTLK-formulæ which have to hold in all initial states or as a reachability
query.  Reachable deadlock states without outgoing transitions result in a
warning.

For example, the bit transmission problem of the introduction as formalised in
\cref{ex:egts}\cref{it:ex:egts:bt} can be represented as a \TEMIC specification
as follows:

\begin{lstlisting}[language=TEMIC]
var sbit, ack, rbit, snt : boolean initial (ack | rbit | snt) <-> false;$\rule[-6pt]{0pt}{0pt}$
agent S = { sbit, ack };  agent R = { rbit, snt };
let R_knows_bit = exists bit:boolean . K[R] sbit <-> bit;$\rule[-6pt]{0pt}{0pt}$
action S_sends_bit_ok
guard not K[S] R_knows_bit do rbit := sbit, snt := true; 
action S_sends_bit_failed
guard not K[S] R_knows_bit do ;
action R_sends_ack_ok    
guard R_knows_bit and not K[R] K[S] R_knows_bit do ack := true;
action R_sends_ack_failed
guard R_knows_bit and not K[R] K[S] R_knows_bit do ;
\end{lstlisting}

Constructive interpretation yields in a few milliseconds the decided least fixed
point of \cref{ex:interp-bt}, over which some \CTLK-properties can be checked:

\begin{lstlisting}[language=TEMIC]
check initial EF           R_knows_bit;
check initial EF      K[S] R_knows_bit;
check initial EF K[R] K[S] R_knows_bit;
\end{lstlisting}

The first two are reported to hold, but the last does not since agent $\ag[R]$
cannot gather enough information to be sure that the bit has been received by
agent $\ag[S]$.

For another example, consider the ``unexpected examination''
paradox~\cite[Sect.~4.7, there called ``unexpected
hanging'']{de-haan-et-al:fundinf:2004} (for a detailed account see, \eg,
\cite[Sects.~5.2sq.]{sainsbury:2009}): A class is told that within the next week
there will be an exam, but it will be a surprise.  The class might reason that
the exam cannot happen on Friday, because if there has been no exam up to
Thursday it will not be a surprise on Friday anymore; by backward induction it
might reason that there cannot be a surprise exam in the next week at all.  This
problem statement can be readily expressed as a \TEMIC specification:

\begin{lstlisting}[language=TEMIC]
var day : 0..5 initial day = 0;
var exam : 0..4;
var written : boolean initial written <-> false;$\rule[-6pt]{0pt}{0pt}$
agent P = { day, written };$\rule[-6pt]{0pt}{0pt}$
action act1
guard day < 5 and (day = exam) and (not K[P] day = exam) and not written
do written := true, day := day+1;
action act2
guard day < 5 and (day != exam) do day := day+1;
action stutter
do ;
\end{lstlisting}

Again, constructive interpretation yields in a few milliseconds a decided least
fixed point.  Over this epistemic transition structure we can check that on,
\eg, Wednesday the exam can be written and still is indeed a surprise:

\begin{lstlisting}[language=TEMIC]
check reachable exam = 2 & written;
\end{lstlisting}

For such a reachability check \TEMIC also provides a witness that tells
that \lstinline|act2| is executed twice after which \lstinline|act1| follows.
The following \CTLK-property, however, is not satisfied, as it would have to
hold in all initial states --- and with \lstinline|exam| being 4 the class
cannot be surprised anymore:

\begin{lstlisting}[language=TEMIC]
check initial EF written;
\end{lstlisting}

\subsection{Memory Models}

Memory models regulate the interaction between threads, their caches, and the
main memory~\cite{manson-et-al:jmm:2005}.  The original Java memory model ---
one of the first formal such models --- has been harshly criticised for making
several compiler optimisations impossible and has subsequently been superseded
by a more liberal model~\cite[Ch.~17]{gosling-et-al:2004}.  Keeping strong
guarantees for sequentially consistent, well"=synchronised programs, reorderings
of data"=independent statements or early, ``prescient'' reads from other threads
are allowed for programs with data races.  Still, some limits, like consistency
with data or control flow dependencies or no ``out-of-thin-air'' values, should
be in force~\cite{pugh:jmm,aspinall-sevcik:vamp:2007}.

For example, in the following two"=threaded Java"=like program to the left it
should be possible that both thread"=local registers \code{r1} and \code{r2} are
assigned the value $1$ when reading the global, shared variables \code{x} and
\code{y}: A compiler could reorder the data"=independent statements in both
threads.  This behaviour, however, should be forbidden in the example to the
right, since there is a symmetric data dependence.
\begin{center}\fontsize{8pt}{9.5pt}\selectfont
\begin{tabular}[t]{@{}l@{\quad}||@{\quad}l@{}}
\multicolumn{2}{c}{$\code{x} = \code{y} = 0$}\\
\multicolumn{2}{c}{$\code{r1} = \code{r2} = 0$}
\\\midrule
\code{r1 = x;} & \code{r2 = y;}\\
\code{y = 1;} & \code{x = 1;}
\\\midrule
\multicolumn{2}{c}{$\code{r1} = \code{r2} = 1$?}
\end{tabular}
\qquad\qquad
\begin{tabular}[t]{@{}l@{\quad}||@{\quad}l@{}}
\multicolumn{2}{c}{$\code{x} = \code{y} = 0$}\\
\multicolumn{2}{c}{$\code{r1} = \code{r2} = 0$}
\\\midrule
\code{r1 = x;} & \code{r2 = y;}\\
\code{if (r1 == 1)} & \code{if (r2 == 1)}\\
\quad\code{y = 1;} & \quad\code{x = 1;}
\\\midrule
\multicolumn{2}{c}{$\code{r1} = \code{r2} = 1$?}
\end{tabular}
\end{center}

We want to capture the behaviour of a multi"=threaded (Java) program with a
liberal memory model without having to check all possible compiler
transformations --- the correctness of such transformations would actually
depend on the program semantics including the memory model.  In fact, in the
current Java memory model out-of-order executions have to be justified by other
legal executions.  We interpret these justifications as witnesses in terms of
knowledge"=based programs; our current exposition, however, neglects
synchronisation.  We first represent the state space of a two"=threaded (Java)
program like the ones above by the following \TEMIC declarations:
\begin{lstlisting}[language=TEMIC]
var x, y, r1, r2 : 0..2 initial x = 0 & y = 0 & r1 = 0 & r2 = 0;
var step1, step2 : 1..3 initial step1 = 1 & step2 = 1;$\rule[-6pt]{0pt}{0pt}$
agent t1 = { step1, r1 };  agent t2 = { step2, r2 };
\end{lstlisting}
The thread agents \lstinline|t1| and \lstinline|t2| can only observe their local
registers and their program counters.  The program steps for both threads are
turned into actions like
\begin{lstlisting}[language=TEMIC]
action t1_1 guard step1 = 1 do r1 := x, step1 := step1+1;
action t1_2 guard step1 = 2 do y := 1, step1 := step1+1;
\end{lstlisting}

Additionally, we allow for a ``prescient reading'' of the value $v$ from the
main memory variable $x$ by thread $\theta$ into the local variable $r$ at step
$s$ by the following action:
\begin{lstlisting}[language=TEMIC]
action read$\theta$_$x$_$v$_$r$_$s$
guard step$\theta$ = $s$ and K[$\theta$] (EF ($r$ = 0 & $x$ = $v$) and EF ($r$ = $v$ & $x$ = $v$))
do $r$ := $v$, step$\theta$ := step$\theta$+1;
\end{lstlisting}
The thread $\theta$ can read $v$ from $x$ into $r$ early on if it \emph{knows}
that \emph{there is an execution} where $x$ has value $v$ without dependence on
already setting $r$ to $v$, and, furthermore, that \emph{there is an execution}
where the early setting is confirmed.  The statement \lstinline|r1 = x;| of the
first thread is expanded into three read
actions \lstinline|read1_x_0_r1_1|, \lstinline|read1_x_1_r1_1|,
and \lstinline|read1_x_2_r1_1| plus the plain reading action \lstinline|t1_1|.
With this encoding, \TEMIC reports that for the first example to the left it is
indeed possible to obtain $\code{r1} = \code{r2} = 1$ in the least constructive
fixed point, but that this is impossible for the example to the right.

A more intriguing case is presented by the following two examples: According to
Manson et al.~\cite[pp.~35sq.]{manson-et-al:jmm:2005} (cf.\
also~\cite{aspinall-sevcik:vamp:2007}), the program to the left can result in
$\code{r1} = \code{r2} = \code{r3} = 1$:
\begin{center}\fontsize{8pt}{9.5pt}\selectfont
\begin{tabular}[t]{l@{\quad}||@{\quad}l}
\multicolumn{2}{c}{$\code{x} = \code{y} = 0$}\\
\multicolumn{2}{c}{$\code{r1} = \code{r2} = \code{r3} = 0$}
\\\midrule
\code{r1 = x;} & \code{r3 = y;}\\
\code{if (r1 == 0)} & \code{x = r3;}\\
\quad\code{x = 1;} &\\
\code{r2 = x;} &\\
\code{y = r2;} &
\\\midrule
\multicolumn{2}{c}{$\code{r1} = \code{r2} = \code{r3} = 1$?}
\end{tabular}
\qquad\qquad
\begin{tabular}[t]{l@{\quad}||@{\quad}l@{\quad}||@{\quad}l}
\multicolumn{3}{c}{$\code{x} = \code{y} = 0$}\\
\multicolumn{3}{c}{$\code{r1} = \code{r2} = \code{r3} = 0$}
\\\midrule
\code{r1 = x;} & \code{r2 = x;} & \code{r3 = y;}\\
\code{if (r1 == 0)} & \code{y = r2;} & \code{x = r3;}\\
\quad\code{x = 1;} & &
\\\midrule
\multicolumn{3}{c}{$\code{r1} = \code{r2} = \code{r3} = 1$?}
\end{tabular}
\end{center}
A compiler could see that only $0$ and $1$ are possible for \code{x} and
\code{y} and ``can then replace \code{r2 = x} by \code{r2 = 1}, because either
$1$ was read from \code{x} on line~1 and there is no intervening write, or $0$
was read from \code{x} on line~1, $1$ was assigned to \code{x} on line~3, and
there was no intervening write''; this definite assignment can be used to
transform the last line to \code{y = 1;} which finally can be made the first
action of the first thread, as there are no dependencies.  But the same
transformation is not possible for the program to the right, and there the same
behaviour should be disallowed.  Still, the left program is the result of
inlining the second thread into the first.  Our encoding of the two programs in
\TEMIC confirms these considerations and the witness for the left program indeed
first sets \lstinline|r3| to $1$ and confirms this only in the last step
setting \lstinline|y| to $1$.


\section{Conclusions and Future Work}

We have introduced a must/can analysis for the interpretation of
knowledge"=based programs inspired by the constructive semantics of synchronous
programming languages.  The resulting constructive interpretation provides lower
and upper bounds for the possible executions.  This interpretation has been
shown to be monotone and to yield a least fixed point.  We have also extended
the must/can approach to general rule systems with positive and negative
premisses.  Finally, we have described our tool \TEMIC for constructive
interpretation and temporal"=epistemic model checking over \CTLK and
demonstrated some applications of interpreting knowledge"=based programs
including \CTLK-guards.

Our epistemic logic could be complemented by group
knowledge~\cite[Ch.~6]{fagin-et-al:2003}, like common or distributed knowledge.
The temporal dimension could be extended to ``Linear-Time Logic'' (LTL), and,
more importantly, to include some notion of fairness.
Criteria for ensuring decided least fixed points for the must/can interpretation
beyond synchronicity would be desirable.  This also applies to the solutions of
general rule systems where a comparison with non"=monotone inductive
definitions, see, \eg, \cite{denecker-ternovska:acmtcl:2008}, would be of
interest.  On the other hand, the general constructive approach may be useful to
complement existing intuitionistic approaches to the semantics of synchronous
programming languages~\cite{luettgen-mendler:acmtcl:2002}.  Finally,
the domain of memory models should be covered more comprehensively by
interpreting knowledge"=based programs.


\bibliographystyle{splncs04}
\bibliography{bibliography}

\begin{appendix}
\counterwithin{lemma}{section}

\section{Dependence on the Past and Epistemic Witnesses}\label{app:sect:kbp}

\begin{lemma}\label{lem:reachable}
Let $M_i \in \ETS[\esig](S, E, L, S_0)$ for $1 \leq i \leq 2$.
\begin{enumerate}
  \item\label{it:lem:reachable:1} If $\RT[k](M_1) = \RT[k](M_2)$ for some $k
\geq 0$, then $\RS[k](M_1) = \RS[k](M_2)$.

  \item\label{it:lem:reachable:2} $\RT[k](M_1) = \RT[k](M_2)$ for all $k \geq
0$ if, and only if, $\RT(M_1) = \RT(M_2)$.
\end{enumerate}
\end{lemma}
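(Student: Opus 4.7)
The plan is to establish, as the key stepping stone, the characterisation
\[
  \RS[k](M) \;=\; S_0 \,\cup\, \{s' \in S \mid \text{ex.\ } s \in S: (s, s') \in \RT[k](M)\}
\]
for all $M \in \ETS[\esig](S, E, L, S_0)$ and all $k \geq 0$. This follows by a routine induction on $k$ from the mutual inductive definitions of $\RS[k]$ and $\RT[k]$: the base $k = 0$ is immediate since $\RT[0](M) = \emptyset$ and $\RS[0](M) = S_0$; for the step, the defining equalities $\RT[k+1](M) = \RT[k](M) \cup \{(s, s') \in T \mid s \in \RS[k](M)\}$ and $\RS[k+1](M) = \RS[k](M) \cup \{s' \mid \text{ex.\ } s \in \RS[k](M): (s, s') \in T\}$ together with the induction hypothesis give the equality for $k+1$ by taking codomains on both sides. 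Part~\cref{it:lem:reachable:1} is then immediate: since $M_1$ and $M_2$ share $S_0$, any $k$ with $\RT[k](M_1) = \RT[k](M_2)$ forces $\RS[k](M_1) = \RS[k](M_2)$.

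For Part~\cref{it:lem:reachable:2}, the forward direction follows at once from $\RT(M_i) = \bigcup_{k \geq 0} \RT[k](M_i)$. For the backward direction, assuming $\RT(M_1) = \RT(M_2)$, I would prove $\RT[k](M_1) = \RT[k](M_2)$ by induction on $k$, with the trivial base $\RT[0](M_i) = \emptyset$. In the step, the induction hypothesis together with Part~\cref{it:lem:reachable:1} yield $\RS[k](M_1) = \RS[k](M_2)$. For any $(s, s') \in \RT[k+1](M_1)$ one has $(s, s') \in \Trans(M_1)$ with $s \in \RS[k](M_1)$; since $(s, s') \in \RT[k+1](M_1) \subseteq \RT(M_1) = \RT(M_2) \subseteq \Trans(M_2)$ and $s \in \RS[k](M_2)$, also $(s, s') \in \RT[k+1](M_2)$, and the reverse inclusion is symmetric.

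The only subtle point --- and therefore the main obstacle to guard against --- is that the underlying transition relations $\Trans(M_1)$ and $\Trans(M_2)$ may \emph{differ} a priori, so the edges in $\RT[k+1](M_1)$ cannot be matched directly with edges in $\Trans(M_2)$. The hypothesis $\RT(M_1) = \RT(M_2)$ must be invoked to transport an individual edge from the one transition relation to the other, after which Part~\cref{it:lem:reachable:1} applied to the induction hypothesis closes the argument.
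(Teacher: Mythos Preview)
Your proof is correct and follows essentially the same approach as the paper. Your explicit characterisation $\RS[k](M) = S_0 \cup \{s' \mid \exists s: (s, s') \in \RT[k](M)\}$ is a slightly cleaner packaging of what the paper does directly in its case $k > 0$ for Part~\cref{it:lem:reachable:1} (the paper argues by picking a predecessor witnessing $s_1' \in \RS[k](M_1)$ and observing that the corresponding edge lies in $\RT[k](M_1) = \RT[k](M_2)$); your Part~\cref{it:lem:reachable:2} argument, including the use of $\RT(M_1) = \RT(M_2)$ to transport an edge into $\Trans(M_2)$, matches the paper's induction step almost verbatim.
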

\begin{proof}
\cref{it:lem:reachable:1}~Let $\RT[k](M_1) = \RT[k](M_2)$ for $k \geq 0$.  If $k
= 0$, then $\RT[0](M_1) = \emptyset = \RT[0](M_2)$ and $\RS[0](M_1) = S_0 =
\RS[0](M_2)$.  If $k > 0$, let $s_1' \in \RS[k](M_1)$ and let $s_1 \in
\RS[k-1](M_1)$ with $(s_1, s_1') \in \Trans(M_1)$.  Then $(s_1, s_1') \in
\RT[k](M_1) = \RT[k](M_2)$ such that $s_1 \in \RS[k-1](M_2)$ and $s_1' \in
\RS[k](M_2)$.  The converse inclusion is symmetric.

\smallskip\noindent%
\cref{it:lem:reachable:2}~If $\RT[k](M_1) = \RT[k](M_2)$ for all $k \geq 0$,
then the definition yields $\RT(M_1) = \bigcup_{0 \leq k} \RT[k](M_1) =
\bigcup_{0 \leq k} \RT[k](M_2) = \RT(M_1)$.~--- Conversely, let $\RT(M_1) =
\RT(M_2)$ hold.  We proceed by induction over $k \geq 0$: For $k = 0$,
$\RT[0](M_1) = \emptyset = \RT(M_2)$.  Now $(s_1, s_1') \in \RT[k+1](M_1)$ if,
and only if, by \cref{it:lem:reachable:1}, $s_1 \in \RS[k](M_1) = \RS[k](M_2)$
and $(s_1, s_1') \in \RT(M_1) = \RT(M_2)$ if, and only if, $(s_1, s_1') \in
\RT[k+1](M_2)$.
\end{proof}

\begin{lemma}\label{lem:limit-ex}
Let $(M_k)_{0 \leq k} \subseteq \ETS[\esig](\ebas)$ such that $\RT[k](M_{k'}) =
\RT[k](M_k)$ for all $k' \geq k \geq 0$, and $M_{\omega} = (\ebas, \bigcup_{0
  \leq k} \RT[k](M_k))$.  Then $\RT[k](M_{\omega}) = \RT[k](M_k)$ for all $k
\geq 0$.
\end{lemma}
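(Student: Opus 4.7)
The plan is to proceed by induction on $k$. The base case $k = 0$ is immediate, since both $\RT[0](M_\omega)$ and $\RT[0](M_k)$ are empty by definition.

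For the inductive step, I assume the claim at depth $k$, namely $\RT[k](M_\omega) = \RT[k](M_k)$, and aim to prove the claim at depth $k+1$, namely $\RT[k+1](M_\omega) = \RT[k+1](M_{k+1})$. Combining the induction hypothesis with the stability hypothesis yields $\RT[k](M_\omega) = \RT[k](M_{k+1})$, and \cref{lem:reachable}\cref{it:lem:reachable:1} then delivers $\RS[k](M_\omega) = \RS[k](M_{k+1})$. Unfolding the recursive definition of $\RT[k+1]$ on both sides and cancelling the common $\RT[k]$-summand, the step reduces to showing that, for every $s \in \RS[k](M_{k+1})$ and every $s'$, $(s, s') \in \Trans(M_\omega)$ if and only if $(s, s') \in \Trans(M_{k+1})$.

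The direction from right to left is straightforward: if $(s, s') \in \Trans(M_{k+1})$ and $s \in \RS[k](M_{k+1})$, then $(s, s') \in \RT[k+1](M_{k+1})$, which lies inside $\bigcup_{j \geq 0} \RT[j](M_j) = \Trans(M_\omega)$ by definition of $M_\omega$.

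The main obstacle is the converse direction, which I would settle by a case split on the witnessing index. Suppose $(s, s') \in \Trans(M_\omega)$, so $(s, s') \in \RT[j](M_j)$ for some $j \geq 0$. If $j \leq k+1$, the stability hypothesis at depth $j$ gives $\RT[j](M_j) = \RT[j](M_{k+1})$, whence $(s, s') \in \Trans(M_{k+1})$. If $j > k+1$, I first use the stability hypothesis at depth $k$ together with \cref{lem:reachable}\cref{it:lem:reachable:1} to conclude $s \in \RS[k](M_{k+1}) = \RS[k](M_j)$; combined with $(s, s') \in \RT[j](M_j) \subseteq \Trans(M_j)$ this puts $(s, s') \in \RT[k+1](M_j)$, and a final application of the stability hypothesis at depth $k+1$ (legal because $j \geq k+1$) transfers this to $\RT[k+1](M_{k+1}) \subseteq \Trans(M_{k+1})$. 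No individual move is deep, but careful bookkeeping of which $M_i$ each transition relation refers to will be essential.
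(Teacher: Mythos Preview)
Your proposal is correct and follows essentially the same approach as the paper: induction on $k$, with the base case trivial, the inclusion $\RT[k+1](M_{k+1}) \subseteq \RT[k+1](M_{\omega})$ immediate, and the reverse inclusion handled by a case split on the witnessing index $j$ (the paper calls it $k'$), using the stability hypothesis and \cref{lem:reachable}\cref{it:lem:reachable:1} in exactly the way you describe. The only cosmetic difference is that you phrase the reduction at the level of $\Trans$ while the paper works directly with $\RT[k+1]$, and you anchor the reachable-state equality at $M_{k+1}$ rather than $M_k$; both are equivalent via the stability hypothesis.
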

\begin{proof}
We proceed by induction on $k \geq 0$: For $k = 0$, $\RT[0](M_{\omega}) =
\emptyset = \RT[0](M_0)$ by definition.  Let the claim hold for $k \geq 0$.
$\RT[k+1](M_{k+1}) \subseteq \RT[k+1](M_{\omega})$ is immediate from the
definition.  For the converse inclusion, let $(s, s') \in \RT[k+1](M_{\omega})$
such that there is an $s \in \RS[k](M_{\omega})$, and hence $s \in \RS[k](M_k)$
by \cref{lem:reachable}\cref{it:lem:reachable:1}, with $(s, s') \in
\Trans(M_{\omega})$, i.e., $(s, s') \in \RT[k'](M_{k'})$ for some $k' > 0$.  If
$k' \leq k+1$, then $(s, s') \in \RT[k'](M_{k'}) = \RT[k'](M_{k+1}) \subseteq
\RT[k+1](M_{k+1})$.  If $k' > k+1$, then $\RS[k](M_k) = \RS[k](M_{k'})$ by
\cref{lem:reachable}\cref{it:lem:reachable:1}.  Thus $s \in \RS[k](M_{k'})$ and
$(s, s') \in \Trans(M_{k'})$ such that $(s, s') \in \RT[k+1](M_{k'}) =
\RT[k+1](M_{k+1})$.
\end{proof}

\begin{lemma}\label{lem:unique}
Let $\Gamma = (\ebas, \mathcal{T})$ be an epistemically guarded transition
system over $\esig$ and let $M_1, M_2 \in \ETS[\esig](\ebas)$ such that all
$\RT[k](M_1) = \RT[k](M_2)$ for all $k \geq 0$.  Then $\interp{\Gamma}{M_1} =
\interp{\Gamma}{M_2}$.
\end{lemma}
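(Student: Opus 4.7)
The strategy is to reduce the equality of interpretations to the equality of the associated epistemic structures, which follows from the reachability hypothesis together with \cref{lem:reachable}.

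First I would observe that the hypothesis $\RT[k](M_1) = \RT[k](M_2)$ for all $k \geq 0$, combined with \cref{lem:reachable}\cref{it:lem:reachable:1} applied level by level, yields $\RS[k](M_1) = \RS[k](M_2)$ for every $k$. Taking the union over $k$ gives $\RS(M_1) = \RS(M_2)$. Since $M_1$ and $M_2$ share the same epistemic state basis $\ebas = (S, E, L, S_0)$, their associated epistemic structures $\estr(M_i) = (\RS(M_i), E \cap \RS(M_i)^2, \restrict{L}{\RS(M_i)})$ are literally identical.

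Next, since epistemic satisfaction $M_i, s \models \varphi$ is defined purely in terms of $\estr(M_i)$, it follows that for every $\varphi \in \EFrm$ and every $s \in \RS(M_1) = \RS(M_2)$, we have $M_1, s \models \varphi \iff M_2, s \models \varphi$. Applying this to each guard, for every guarded action $(\gact{\varphi}{B}) \in \mathcal{T}$ we compute
\begin{align*}
  \interp{(\gact{\varphi}{B})}{M_1}
  &= \{ (s, s') \in B \mid s \in \RS(M_1) \text{ and } M_1, s \models \varphi \}\\
  &= \{ (s, s') \in B \mid s \in \RS(M_2) \text{ and } M_2, s \models \varphi \}
  = \interp{(\gact{\varphi}{B})}{M_2}\ \text{.}
\end{align*}
Taking the union over $\tau \in \mathcal{T}$ and pairing with the common basis $\ebas$ then delivers $\interp{\Gamma}{M_1} = \interp{\Gamma}{M_2}$.

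I do not expect any real obstacle here: the lemma is essentially a bookkeeping consequence of \cref{lem:reachable}, exploiting that the knowledge guards see only the associated epistemic structure, which in turn depends only on the set of reachable states (fixed by the hypothesis) together with the common basis $\ebas$. The one subtlety worth checking is that $\RS(M_i) = \bigcup_k \RS[k](M_i)$ is the correct thing to compare (rather than some stratified version), but this is immediate from the inductive definition of the reachable states given in \cref{sect:ets}.
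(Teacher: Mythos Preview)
Your proposal is correct and follows essentially the same route as the paper: both reduce to showing $\RS(M_1) = \RS(M_2)$, whence the associated epistemic structures $\estr(M_i)$ coincide and all guards evaluate identically. The only cosmetic difference is that you invoke \cref{lem:reachable}\cref{it:lem:reachable:1} level by level, whereas the paper first passes through \cref{lem:reachable}\cref{it:lem:reachable:2} to $\RT(M_1) = \RT(M_2)$ before concluding $\RS(M_1) = \RS(M_2)$.
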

\begin{proof}
Let $(s, s') \in \Trans(\interp{\Gamma}{M_1})$.  There is a $(\gact{\varphi}{B})
\in \mathcal{T}$ such that $(s, s') \in B$, $s \in \RS(M_1)$, and $M_1, s
\models \varphi$.  Since $\RT(M_1) = \RT(M_2)$ by
\cref{lem:reachable}\cref{it:lem:reachable:2}, it holds that $s \in \RS(M_1) =
\RS(M_2)$, and thus $M_2, s \models \varphi$.  In particular, $(s, s') \in
\Trans(\interp{\Gamma}{M_2})$.  The reverse inclusion
$\Trans(\interp{\Gamma}{M_2}) \subseteq \Trans(\interp{\Gamma}{M_1})$ is
symmetric.
\end{proof}

\begin{lemma}\label{lem:step}
Let $\Gamma = (\ebas, \mathcal{T})$ be an epistemically guarded transition
system over $\esig$ that depends on the past \wrt $\{ M_1, M_2 \} \subseteq
\ETS[\esig](\ebas)$.
\begin{enumerate}
  \item\label{it:lem:step:1} If $\RT[k](M_1) = \RT[k](\interp{\Gamma}{M_1}) =
\RT[k](\interp{\Gamma}{M_2}) = \RT[k](M_2)$ for some $k \geq 0$, then
$\RT[k+1](\interp{\Gamma}{M_1}) = \RT[k+1](\interp{\Gamma}{M_2})$.

  \item\label{it:lem:step:2} If $\RT(\interp{\Gamma}{M_i}) = \RT(M_i)$ for $1 \leq
i \leq 2$, then $\RT[k](\interp{\Gamma}{M_1}) = \RT[k](\interp{\Gamma}{M_2})$ for all
$k \geq 0$.
\end{enumerate}
\end{lemma}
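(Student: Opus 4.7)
The plan is to prove the two parts in sequence, since part (2) is a routine induction whose inductive step invokes part (1). For part (1), I would argue by symmetry and show only the inclusion $\RT[k+1](\interp{\Gamma}{M_1}) \subseteq \RT[k+1](\interp{\Gamma}{M_2})$. Take an arbitrary $(s, s') \in \RT[k+1](\interp{\Gamma}{M_1})$; unfolding the definitions of reachable transitions and of the interpretation yields $s \in \RS[k](\interp{\Gamma}{M_1})$ together with a guarded action $(\gact{\varphi}{B}) \in \mathcal{T}$ satisfying $(s, s') \in B$, $s \in \RS(M_1)$, and $M_1, s \models \varphi$.

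The key step is to translate each of these conditions to $M_2$ and $\interp{\Gamma}{M_2}$. Applying \cref{lem:reachable}\cref{it:lem:reachable:1} to the four assumed equalities of $k$-step transition relations gives the corresponding equalities of reachable-state sets, in particular $\RS[k](M_1) = \RS[k](M_2)$ and $\RS[k](\interp{\Gamma}{M_1}) = \RS[k](\interp{\Gamma}{M_2})$. Hence $s \in \RS[k](M_1) \cap \RS[k](M_2)$ and $s \in \RS[k](\interp{\Gamma}{M_2})$. Because $\Gamma$ depends on the past w.r.t.\ $\{M_1, M_2\}$ and $\RT[k](M_1) = \RT[k](M_2)$, the guard transfers: $M_2, s \models \varphi$. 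Together with $s \in \RS[k](M_2) \subseteq \RS(M_2)$ this puts $(s, s')$ into $\Trans(\interp{\Gamma}{M_2})$, and combined with $s \in \RS[k](\interp{\Gamma}{M_2})$ it lands in $\RT[k+1](\interp{\Gamma}{M_2})$, as required.

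For part (2), I would induct on $k \geq 0$. The base case $k = 0$ reduces to $\emptyset = \emptyset$. For the induction step, observe that the hypothesis $\RT(\interp{\Gamma}{M_i}) = \RT(M_i)$ together with \cref{lem:reachable}\cref{it:lem:reachable:2} gives $\RT[k](\interp{\Gamma}{M_i}) = \RT[k](M_i)$ at every level, so the inductive hypothesis $\RT[k](\interp{\Gamma}{M_1}) = \RT[k](\interp{\Gamma}{M_2})$ supplies exactly the four-fold chain $\RT[k](M_1) = \RT[k](\interp{\Gamma}{M_1}) = \RT[k](\interp{\Gamma}{M_2}) = \RT[k](M_2)$ demanded by part (1); invoking it concludes $\RT[k+1](\interp{\Gamma}{M_1}) = \RT[k+1](\interp{\Gamma}{M_2})$.

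There is no genuinely hard step here; the delicate point is merely bookkeeping. When transferring the guarded action from $M_1$ to $M_2$, one must simultaneously witness that $s$ is reachable in $M_2$ (so the action is enabled there) and that $s$ lies at depth $k$ in $\interp{\Gamma}{M_2}$ (so the new transition lives in $\RT[k+1]$, not merely in $\Trans(\interp{\Gamma}{M_2})$). Both witnesses fall out of \cref{lem:reachable}\cref{it:lem:reachable:1} once all four transition-relation equalities are in hand, which explains the somewhat redundant-looking four-fold premise of part (1).
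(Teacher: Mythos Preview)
Your proposal is correct and follows essentially the same route as the paper's proof: part~(1) by a symmetric inclusion argument that transfers the guarded action via dependence on the past and \cref{lem:reachable}\cref{it:lem:reachable:1}, and part~(2) by induction on $k$ using \cref{lem:reachable}\cref{it:lem:reachable:2} to feed part~(1). The only cosmetic difference is that the paper first splits off the subcase $(s,s') \in \RT[k](\interp{\Gamma}{M_1})$ (handled directly by the hypothesis) before treating the remainder, whereas you argue uniformly for all of $\RT[k+1](\interp{\Gamma}{M_1})$; both are fine.
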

\begin{proof}
\cref{it:lem:step:1}~Let $\RT[k](M_1) = \RT[k](\interp{\Gamma}{M_1}) =
\RT[k](\interp{\Gamma}{M_2}) = \RT[k](M_2)$ for some $k \geq 0$.  It holds that
$\RT[k](M) \subseteq \RT[k+1](M)$ for every $M \in \ETS[\esig](\ebas)$.  Let
thus $(s, s') \in \RT[k+1](\interp{\Gamma}{M_1}) \setminus
\RT[k](\interp{\Gamma}{M_1})$.  Then, by
\cref{lem:reachable}\cref{it:lem:reachable:1}, $s \in
\RS[k](\interp{\Gamma}{M_1}) = \RS[k](M_1)$ and there is a $(\gact{\varphi}{B})
\in \mathcal{T}$ with $(s, s') \in B$ and $M_1, s \models \varphi$.  Since
$\Gamma$ depends on the past \wrt $\{ M_1, M_2 \}$ and $\RT[k](M_1) =
\RT[k](M_2)$, it holds that $M_2, s \models \varphi$ with $s \in \RS[k](M_2) =
\RS[k](\interp{\Gamma}{M_2})$ and thus $(s, s') \in
\RT[k+1](\interp{\Gamma}{M_2})$.  The converse inclusion is symmetric.

\smallskip\noindent%
\cref{it:lem:step:2}~Let $\RT(\interp{\Gamma}{M_i}) = \RT(M_i)$ for $1 \leq i
\leq 2$.  We proceed by induction on $k \geq 0$: For $k = 0$,
$\RT[0](\interp{\Gamma}{M_1}) = \emptyset = \RT[0](\interp{\Gamma}{M_2})$.  Let
$\RT[k](\interp{\Gamma}{M_1}) = \RT[k](\interp{\Gamma}{M_2})$ hold for $k \geq
0$.  Since $\RT(\interp{\Gamma}{M_i}) = \RT(M_i)$ for $1 \leq i \leq 2$, it
holds using \cref{lem:reachable}\cref{it:lem:reachable:2} that $\RT[k](M_1) =
\RT[k](\interp{\Gamma}{M_1}) = \RT[k](\interp{\Gamma}{M_2}) = \RT[k](M_2)$ such
that $\RT[k+1](\interp{\Gamma}{M_1}) = \RT[k+1](\interp{\Gamma}{M_2})$ by
\cref{it:lem:step:1}.
\end{proof}

\uniqueness*
\begin{proof}
If $\Gamma$ has no solution, then we may choose $\mathcal{M} = \emptyset$; if
$\Gamma$ has a single solution $M$, we may choose $\mathcal{M} = \{ M \}$.~---
For the converse, let $\mathcal{M}$ be given and let $\interp{\Gamma}{M_1} =
M_1$ and $\interp{\Gamma}{M_2} = M_2$ for $M_1, M_2 \in \mathcal{M}$.  By
\cref{lem:step}\cref{it:lem:step:1} and $\RT[0](M_1) =
\RT[0](\interp{\Gamma}{M_1}) = \emptyset = \RT[0](\interp{\Gamma}{M_2}) =
\RT[0](M_2)$, we obtain $\RT[k](M_1) = \RT[k](\interp{\Gamma}{M_1}) =
\RT[k](\interp{\Gamma}{M_2}) = \RT[k](M_2)$ for all $k \geq 0$.  Thus $M_1 =
\interp{\Gamma}{M_1} = \interp{\Gamma}{M_2} = M_2$ by \cref{lem:unique}.
\end{proof}

\existence*
\begin{proof}
We first prove that $\RT[k](M_k) = \RT[k](M_{k'})$ for all $0 \leq k
\leq k'$ by induction over $0 \leq k$.  For $k = 0$ we
have $\RT[0](M_0) = \emptyset = \RT[0](M_{k'})$ for all $0 \leq k'$.  For $k >
0$, from $\RT[k-1](M_{k-1}) = \RT[k-1](M_{k'})$ for all $k-1 \leq k'$ it follows
that $\RT[k-1](M_{k-1}) = \RT[k-1](M_k) = \RT[k-1](\interp{\Gamma}{M_{k-1}}) =
\RT[k-1](\interp{\Gamma}{M_{k'-1}}) = \RT[k-1](M_{k'})$, and thus, by
\cref{lem:step}\cref{it:lem:step:1}, $\RT[k](M_k) =
\RT[k](\interp{\Gamma}{M_{k-1}}) = \RT[k](\interp{\Gamma}{M_{k'-1}}) =
\RT[k](M_{k'})$ for all $k \leq k'$.

\smallskip\noindent%
Now let $\interp{\Gamma}{M_{\omega}} \in \mathcal{M}$.  We next show that
$\RT(\interp{\Gamma}{M_{\omega}}) = \RT(M_{\omega}) =
\RT(\interp{\Gamma}{\interp{\Gamma}{M_{\omega}}})$.  By
\cref{lem:reachable}\cref{it:lem:reachable:2} it suffices to demonstrate that
$\RT[k](\interp{\Gamma}{M_{\omega}}) = \RT[k](M_{\omega}) =
\RT[k](\interp{\Gamma}{\interp{\Gamma}{M_{\omega}}})$ for all $k \geq 0$.  From
this $\interp{\Gamma}{M_{\omega}} =
\interp{\Gamma}{\interp{\Gamma}{M_{\omega}}}$ immediately follows by
\cref{lem:unique}.  By the previous remark, it holds that $\RT[k](M_{k}) =
\RT[k](M_{k'})$ for all $0 \leq k \leq k'$ such that $\RT[k](M_{\omega}) =
\RT[k](M_k)$ for all $k \geq 0$ by \cref{lem:limit-ex}.  We obtain
$\RT[k](\interp{\Gamma}{M_{\omega}}) = \RT[k](M_k)$ by induction on $k \geq 0$:
For $k = 0$, the claim is immediate.  Let $\RT[k](\interp{\Gamma}{M_{\omega}}) =
\RT[k](M_k)$ hold for $k \geq 0$.  From $\RT[k](M_{\omega}) = \RT[k](M_k)$ and
$\RT[k](M_k) = \RT[k](\interp{\Gamma}{M_{k-1}})$ it follows from
\cref{lem:step}\cref{it:lem:step:1} that $\RT[k+1](\interp{\Gamma}{M_{\omega}})
= \RT[k+1](\interp{\Gamma}{M_k}) = \RT[k+1](M_{k+1})$.  By the same argument,
but replacing $\interp{\Gamma}{M_{\omega}}$ by
$\interp{\Gamma}{\interp{\Gamma}{M_{\omega}}}$, we obtain
$\RT[k](\interp{\Gamma}{\interp{\Gamma}{M_{\omega}}}) = \RT[k](M_k)$ by
induction on $k \geq 0$.
\end{proof}

\providing*
\begin{proof}
Let $\gact{\varphi}{B} \in \mathcal{T}$, $M_1, M_2\in \mathcal{M}$, and $k\in
\NZ$ such that $\RT[k](M_1) = \RT[k](M_2)$.  We need to show $M_1, s \models
\varphi \mathrel{\iff} M_2, s \models \varphi$ for all $s \in \RS[k](M_1) \cap
\RS[k](M_2)$.  We proceed by induction on the structure of the epistemic formula
$\varphi$; let $s\in \RS[k](M_1)\cap \RS[k](M_2)$.

\smallskip\noindent%
The cases $\truefrm$, $\falsefrm$, $p$, $\neg\psi$, $\varphi_1 \lor \varphi_2$,
and $\varphi_1 \land \varphi_2$ either hold immediately or follow directly from
the induction hypothesis.

\smallskip\noindent%
\emph{Case $\varphi = \K{a}{\psi}$}: If both $M_1, s \models \K{a}{\psi}$ and
$M_2, s \models \K{a}{\psi}$, the claim holds.  If $M_1, s \not\models
\K{a}{\psi}$, there is, since $M_1$ provides epistemic witnesses, an $s' \in
\RS[k](M_1)$ with $(s, s') \in E_a$ and $M_1, s' \not\models \psi$.  In
particular, using \cref{lem:reachable}\cref{it:lem:reachable:1}, $s' \in
\RS[k](M_1) \cap \RS[k](M_2)$.  By the induction hypothesis, $M_2, s'
\not\models \psi$, and hence $M_2, s \not\models \K{a}{\psi}$.  The converse,
that $M_2, s \not\models \K{a}{\psi}$ implies $M_1, s \not\models \K{a}{\psi}$,
is symmetric.

\smallskip\noindent%
\emph{Case $\varphi = \M{a}{\psi}$}: Analogous to $\K{a}{\psi}$.
\end{proof}

\begin{lemma}
Let $(M_k)_{0 \leq k} \subseteq \ETS(S, E, L, S_0)$ such that $\RT[k](M_{k'}) =
\RT[k](M_k)$ for all $k' \geq k \geq 0$ and let $M_{\omega} = (S, E, L, S_0,
\bigcup_{0 \leq k} \RT[k](M_k))$.  If each $M_k$ provides epistemic witnesses,
then so does $M_{\omega}$.
\end{lemma}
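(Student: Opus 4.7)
The plan is to fix a knowledge formula $\K{\ag}{\varphi}$ for which each $M_k$ is assumed to provide witnesses---together, implicitly, with all its knowledge subformulas---and to show the same for $M_\omega$. First invoke the preceding \cref{lem:limit-ex} to obtain $\RT[k](M_\omega) = \RT[k](M_k)$ for every $k \geq 0$; with \cref{lem:reachable}\cref{it:lem:reachable:1} this yields $\RS[k](M_\omega) = \RS[k](M_k)$, and combined with the stability assumption, $\RS[k](M_K) = \RS[k](M_\omega)$ for every $K \geq k$.

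The core of the argument is a simultaneous induction on the structure of subformulas $\psi$, establishing two claims jointly: \textbf{(a)} for every $k \leq K$ and every $s \in \RS[k](M_\omega)$, $M_\omega, s \models \psi$ iff $M_K, s \models \psi$; and \textbf{(b)} if $\psi = \K{\agb}{\chi}$, then $M_\omega$ itself provides epistemic witnesses for $\psi$. Propositional cases of (a) are immediate from the common labelling, Boolean cases follow directly from the induction hypothesis, and (b) is vacuous for non-knowledge $\psi$.

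The heart of the induction is the case $\psi = \K{\agb}{\chi}$, where I first establish (b) and then derive (a). For (b): given $s \in \RS[k](M_\omega)$ with $M_\omega, s \not\models \K{\agb}{\chi}$, pick a counterexample $s^* \in \RS(M_\omega)$ with $(s, s^*) \in E_\agb$ and $M_\omega, s^* \not\models \chi$, locate it at $s^* \in \RS[k^*](M_\omega) = \RS[k^*](M_{k^*})$, and set $K = \max(k, k^*)$. Stability then places both $s \in \RS[k](M_K)$ and $s^* \in \RS[k^*](M_K) \subseteq \RS(M_K)$; applying IH(a) to $\chi$ turns $M_\omega, s^* \not\models \chi$ into $M_K, s^* \not\models \chi$, so that $M_K, s \not\models \K{\agb}{\chi}$. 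The witness property of $M_K$ at depth $k$ then produces some $s' \in \RS[k](M_K) = \RS[k](M_\omega)$ with $(s, s') \in E_\agb$ and $M_K, s' \not\models \chi$, and IH(a) transfers this back to $M_\omega, s' \not\models \chi$. For (a) on $\K{\agb}{\chi}$: both $M_K$ (by the hypothesis) and $M_\omega$ (by the just-proved (b)) provide witnesses for $\K{\agb}{\chi}$, so both truth values reduce to the universal restriction over $\RS[k](M_K) = \RS[k](M_\omega)$, and IH(a) applied to $\chi$ equates the inner evaluations.

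The main obstacle is the genuine asymmetry between $\RS(M_\omega)$ and $\RS(M_K)$ at depths beyond $k$: a naive global semantic equivalence across the two structures cannot be expected to hold. The simultaneous induction sidesteps this by always relocating counterexamples into the common stabilized prefix where the reachable states agree, and by using the witness property at both $M_K$ and $M_\omega$ to confine each universal quantification inside a knowledge modality to that bounded region.
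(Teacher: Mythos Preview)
Your proof is correct and follows the same overall strategy as the paper: locate a global counterexample at some depth $k^*$, pass to $M_K$ with $K = \max(k, k^*)$, invoke the witness property of $M_K$ at depth $k$, and transfer the resulting witness back to $M_\omega$. The paper's version asserts the two semantic transfers---from $M_\omega, s' \not\models \varphi$ to $M_{k'}, s' \not\models \varphi$ and from $M_{k'}, s'' \not\models \varphi$ back to $M_\omega$---without further argument, whereas your simultaneous structural induction (claim~(a)) makes these steps explicit and rigorous, which is genuinely needed once $\varphi$ contains nested knowledge modalities.
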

\begin{proof}
Let $k \geq 0$, $s \in \RS[k](M_{\omega})$, and $M_{\omega}, s \not\models
\K{\ag}{\varphi}$.  Then there is a $k' \geq 0$ and an $s' \in
\RS[k'](M_{\omega})$ with $(s, s') \in E_{\ag}$ and $M_{\omega}, s' \not\models
\varphi$.  Thus $M_{k'}, s' \not\models \varphi$, and since $\RT[k'](M_{k'}) =
\RT[k'](M_{\omega})$, also $M_{k'}, s \not\models \K{\ag}{\varphi}$ by
\cref{lem:limit-ex} and \cref{lem:reachable}\cref{it:lem:reachable:1}.  But
$M_{k'}$ provides epistemic witnesses and $s \in \RS[k](M_{k'})$, such that
there is an $s'' \in \RS[k](M_k') \subseteq \RS[k](M_{\omega})$.
\end{proof}

\section{Constructive Interpretation}\label{app:sect:interp}

\LEMposnegconstructive*
\begin{proof}
We proceed by induction over the structure of $\varphi \in \EFrm$.

\smallskip\noindent%
\emph{Case $\varphi = \prop$}: Then $Y, s \pmodels \prop$ iff $p \in L(s)$ iff
$Y, s \not\nmodels \prop$.

\smallskip\noindent%
\emph{Case $\varphi = \falsefrm$}: Then $Y, s \pmodels \falsefrm$ iff $Y, s
\not\nmodels \falsefrm$.

\smallskip\noindent%
\emph{Case $\varphi = \neg\psi$}: Then $Y, s \pmodels \neg\psi$ iff $Y, s
\nmodels \psi$ which implies $Y, s \not\pmodels \psi$ by the induction
hypothesis and thus $Y, s \not\nmodels \neg\psi$.

\smallskip\noindent%
\emph{Case $\varphi = \K{\ag}{\psi}$}: Then $Y, s \pmodels \K{\ag}{\psi}$ iff
$Y, s' \pmodels \psi$ for all $s' \in \RS(Y_{\nu})$ with $(s, s') \in E_{\ag}$
which implies $Y, s \not\nmodels \psi$ for all $s' \in \RS(Y_{\nu})$ with $(s,
s') \in E_{\ag}$ by the induction hypothesis and thus $Y, s \not\nmodels
\K{\ag}{\psi}$ since $\RS(Y_{\mu}) \subseteq \RS(Y_{\nu})$.
\end{proof}

\LEMposnegnnf*
\begin{proof}
We proceed by induction over the structure of $\varphi \in \EFrm$.

\smallskip\noindent%
\emph{Case $\varphi = \prop$}: Then $Y, s \pmodels \prop$ iff $\prop \in L(s)$ iff $Y, s \cmodels \prop = \nnf(\prop)$; and $Y, s \nmodels \prop$ iff $\prop \notin L(s)$ iff $Y, s \cmodels \neg\prop = \nnf(\neg\prop)$.

\smallskip\noindent%
\emph{Case $\varphi = \falsefrm$}: Then $Y, s \pmodels \falsefrm$ iff $Y, s \cmodels \falsefrm = \nnf(\falsefrm)$; and $Y, s \nmodels \falsefrm$ iff $Y, s \cmodels \truefrm = \nnf(\neg\falsefrm)$.

\smallskip\noindent%
\emph{Case $\varphi = \neg\psi$}: Then $Y, s \pmodels \neg\psi$ iff $Y, s
\nmodels \psi$ iff $Y, s \cmodels \nnf(\neg\psi)$ by the induction hypothesis;
and $Y, s \nmodels \neg\psi$ iff $Y, s \pmodels \psi$ iff $Y, s \cmodels
\nnf(\psi) = \nnf(\neg\neg\psi)$ by the induction hypothesis.

\smallskip\noindent%
\emph{Case $\varphi = \varphi_1 \land \varphi_2$}: Then $Y, s \pmodels \varphi_1
\land \varphi_2$ iff $Y, s \pmodels \varphi_1$ and $Y, s \pmodels \varphi_2$ iff
$Y, s \cmodels \nnf(\varphi_1)$ and $Y, s \cmodels \nnf(\varphi_2)$ by the
induction hypothesis iff $Y, s \cmodels \nnf(\varphi_1) \land \nnf(\varphi_2) =
\nnf(\varphi_1 \land \varphi_2)$; and $Y, s \nmodels \varphi_1 \land \varphi_2$
iff $Y, s \nmodels \varphi_1$ or $Y, s \nmodels \varphi_2$ iff $Y, s \cmodels
\nnf(\neg\varphi_1)$ or $Y, s \cmodels \nnf(\neg\varphi_2)$ by the induction
hypothesis iff $Y, s \cmodels \nnf(\neg\varphi_1) \lor \nnf(\neg\varphi_2) =
\nnf(\neg(\varphi_1 \land \varphi_2))$.

\smallskip\noindent%
\emph{Case $\varphi = \K{\ag}{\psi}$}: Then $Y, s \pmodels \K{\ag}{\psi}$ iff
$Y, s' \pmodels \psi$ for all $s' \in \RS(Y_{\nu})$ with $(s, s') \in E_{\ag}$
iff $Y, s \cmodels \nnf(\psi)$ for all $s' \in \RS(Y_{\nu})$ with $(s, s') \in
E_{\ag}$ by the induction hypothesis iff $Y, s \cmodels \K{\ag}{\nnf(\psi)} =
\nnf(\K{\ag}{\psi})$; and $Y, s \nmodels \K{\ag}{\psi}$ iff $Y, s' \nmodels
\psi$ for some $s' \in \RS(Y_{\mu})$ with $(s, s') \in E_{\ag}$ iff $Y, s
\cmodels \nnf(\neg\psi)$ for some $s' \in \RS(Y_{\mu})$ with $(s, s') \in
E_{\ag}$ by the induction hypothesis iff $Y, s \cmodels \M{\ag}{\nnf(\neg\psi)} =
\nnf(\neg\K{\ag}{\psi})$.
\end{proof}

\LEMextepistemic*
\begin{proof}
We proceed by induction over the structure of $\nnf(\varphi)$ in negation normal
form.  The cases $\truefrm$ and $\falsefrm$ are immediate; $\prop$ and $\neg
\prop$ for $\prop \in \Prop$ follow directly from the invariance of the state
labelling of $Y$ and $Y'$; and $\varphi_1 \lor \varphi_2$ and $\varphi_1 \land
\varphi_2$ follow straightforwardly by applying the induction hypothesis.

\smallskip\noindent%
\emph{Case $\nnf(\varphi) = \K{a}{\psi}$}: Let $Y, s \cmodels \K{a}{\psi}$ hold
for an $s \in \RS(Y_{\nu}') \subseteq \RS(Y_{\nu})$.  Then $Y, s' \cmodels \psi$
for all $s' \in \RS(Y_{\nu})$ with $(s, s') \in E_a$.  Thus $Y, s' \cmodels
\psi$ for all $s' \in \RS(Y_{\nu}')$ with $(s, s') \in E_a$, as $\RS(Y_{\nu})
\supseteq \RS(Y_{\nu}')$.  By induction hypothesis, $Y', s' \cmodels \psi$ for
all $s' \in \RS(Y_{\nu}')$ with $(s, s') \in E_a$, and thus $Y', s \cmodels
\K{a}{\psi}$.

\smallskip\noindent%
\emph{Case $\nnf(\varphi) = \M{a}{\psi}$}: Let $Y, s \cmodels \M{a}{\psi}$ hold
for an $s \in \RS(Y_{\nu}') \subseteq \RS(Y_{\nu})$.  Then $Y, s' \cmodels \psi$
for some $s' \in \RS(Y_{\mu})$ with $(s, s') \in E_a$.  Thus $Y, s' \cmodels
\psi$ for some $s' \in \RS(Y_{\mu}')$ with $(s, s') \in E_a$, as $\RS(Y_{\mu})
\subseteq \RS(Y_{\mu}')$.  Since $\RS(Y_{\mu}') \subseteq \RS(Y_{\nu}')$, it
follows by induction hypothesis that $Y', s' \cmodels \psi$ for some $s' \in
\RS(Y_{\mu}')$ with $(s, s') \in E_a$, and thus $Y', s \cmodels \M{a}{\psi}$.
\end{proof}

\PROPconstructivemonotone*
\begin{proof}
Let $(\gact{\varphi}{B}) \in \mathcal{T}$.  Then
$\interp{(\gact{\varphi}{B})}{Y, \mu} \subseteq \interp{(\gact{\varphi}{B})}{Y',
  \mu}$ and $\interp{(\gact{\varphi}{B})}{Y, \nu} \supseteq
\interp{(\gact{\varphi}{B})}{Y', \nu}$ follow from the implications of $Y', s
\cmodels \nnf(\varphi)$ from $Y, s \cmodels \nnf(\varphi)$ for all $s \in
\RS(Y_{\nu}') \supseteq \RS(Y_{\mu}') \supseteq \RS(Y_{\mu})$, and of $Y, s
\not\cmodels \nnf(\neg\varphi)$ from $Y', s \not\cmodels \nnf(\neg\varphi)$ for
all $s \in \RS(Y_{\nu}) \supseteq \RS(Y_{\nu}')$ which are given by
\cref{lem:ext-epistemic}.
\end{proof}

\PROPconstructiveinductive*
\begin{proof}
Let $\Delta \subseteq \EMCTS[\esig](\ebas)$ be directed.  Then $Z$ given by
$Z_{\mu} = \bigcup_{Y \in \Delta} Y_{\mu}$ and $Z_{\nu} = \bigcap_{Y \in \Delta}
Y_{\nu}$ is in $\EMCTS[\esig](\ebas)$, \ie, $\bigcup_{Y \in \Delta} Y_{\mu}
\subseteq \bigcap_{Y \in \Delta} Y_{\nu}$ or, equivalently, $Y_{\mu} \subseteq
Y'_{\nu}$ for all $Y, Y' \in \Delta$.  Indeed, for each $Y, Y' \in \Delta$ there
is, since $\Delta$ is directed, a $Y'' \in \Delta$ with $Y, Y' \sqsubseteq Y''$,
for which then $Y_{\mu} \subseteq Y''_{\mu} \subseteq Y''_{\nu} \subseteq
Y'_{\nu}$ holds.  Moreover, union and intersection form the supremum \wrt
$\subseteq$ and $\supseteq$ for epistemic transition systems, such that
$\bigsqcup \Delta = Z$.  Finally, $\bot_{\esig, \ebas} \sqsubseteq Y$ for all $Y
\in \EMCTS[\esig](\ebas)$.
\end{proof}

\PROPconstructiveunique*
\begin{proof}
$\Gamma$ has the solution $(\mu\Gamma)_{\mu} = (\mu\Gamma)_{\nu} \in
\ETS[\esig](\ebas)$.  For showing uniqueness, let $M \in \ETS[\esig](\ebas)$
satisfy $M = \interp{\Gamma}{M}$.  Let $Y_M = (\ebas, (\Trans(M), \Trans(M)) \in
\ETS[\esig](\ebas)$ such that $Y_M = \interp{\Gamma}{Y_M} \in
\EMCTS[\esig](\ebas)$ (cf.~\cref{sect:decided}).  Since $\mu\Gamma$ is the least
constructive fixed point, it holds that $\mu\Gamma \sqsubseteq Y_M$ and thus
$(\mu\Gamma)_{\mu} \subseteq M \subseteq (\mu\Gamma)_{\nu}$.  As $\mu\Gamma$ is
decided, we obtain $(\mu\Gamma)_{\mu} = M = (\mu\Gamma)_{\nu}$.
\end{proof}

\LEMconstructivesync*
\begin{proof}
We show that $Y_{\mu} = \interp{\Gamma}{Y_{\mu}}$ and $\interp{\Gamma}{Y_{\nu}}
= Y_{\nu}$ from which $Y_{\mu} = Y_{\nu}$ follows from \cref{prop:unique}, since
synchronicity implies the provision of witnesses which in turn implies
dependence on the past, see \cref{sect:unique}.  We only show $Y_{\mu} =
\interp{\Gamma}{Y_{\mu}}$, the other equation is analogous.

By \cref{sect:decided}, it holds that $Y_{\mu} \subseteq
\interp{\Gamma}{Y_{\mu}}$.  Assume for a contradiction to
$\interp{\Gamma}{Y_{\mu}} \subseteq Y_{\mu}$ that there is an $(s, s') \in
\Trans(\interp{\Gamma}{Y_{\mu}}) \setminus \Trans(Y_{\mu})$ and let $k \geq 0$
for $s \in \RS[k](Y_{\mu})$ be minimal.  Then there is a $(\gact{\varphi}{B})
\in \mathcal{T}$ such that $Y, s \not\models \varphi$ and
$\interp{\Gamma}{Y_{\mu}}, s \models \varphi$.  Furthermore, by the minimality
of $k$, $\RT[k](Y_{\mu}) = \RT[k](\interp{\Gamma}{Y_{\mu}})$.  We proceed by
induction over $\nnf(\varphi)$.  The cases $\prop$, $\neg\prop$, $\falsefrm$,
$\truefrm$, $\varphi_1 \land \varphi_2$, and $\varphi_1 \lor \varphi_2$ are
obvious.

\smallskip\noindent%
\emph{Case $\nnf(\varphi) = \K{\ag}{\psi}$}: If $Y, s \not\models
\K{\ag}{\psi}$, then there is some $s'' \in \RS(Y_{\nu})$ with $(s, s'') \in
E_{\ag}$ and $Y, s'' \not\models \psi$.  But $\RS(Y_{\nu}) \subseteq
\RS(\interp{\Gamma}{Y_{\mu}})$ and hence $s'' \in
\RS[k](\interp{\Gamma}{Y_{\mu}}) = \RS[k](Y_{\mu})$ by the synchrony of
$\interp{\Gamma}{Y_{\mu}}$.  Thus, $\interp{\Gamma}{Y_{\mu}}, s'' \not\models
\psi$ and therefore $\interp{\Gamma}{Y_{\mu}}, s \not\models \K{\ag}{\psi}$
would follow by the induction hypothesis.

\smallskip\noindent%
\emph{Case $\nnf(\varphi) = \M{\ag}{\psi}$}: If $\interp{\Gamma}{Y_{\mu}}, s
\not\models \M{\ag}{\psi}$, then there is, by the synchrony of
$\interp{\Gamma}{Y_{\mu}}$, an $s'' \in \RS[k](\interp{\Gamma}{Y_{\mu}}) =
\RS[k](Y_{\mu})$ with $(s, s'') \in E_{\ag}$ and $\interp{\Gamma}{Y_{\mu}}, s''
\models \psi$, from which $Y, s'' \not\models \psi$ and thus $Y, s \not\models
\M{\ag}{\psi}$ would follow by the induction hypothesis.
\end{proof}

\section{Solving General Rule Systems}\label{app:sect:rules}

\LEMmustcanpartial*
\begin{proof}
Let $\Delta \subseteq \powerset^{\pm} U$ be directed.  Then $(\bigcup_{(P, Q)
  \in \Delta} P, \bigcap_{(P, Q) \in \Delta} Q) \in \powerset^{\pm} U$, \ie,
$\bigcup_{(P, Q) \in \Delta} P \subseteq \bigcap_{(P, Q) \in \Delta} Q$ or,
equivalently, $P \subseteq Q'$ for all $(P, Q), (P', Q') \in \Delta$.  Indeed,
for each $(P, Q), (P', Q') \in \Delta$ there is, since $\Delta$ is directed, a
$(P'', Q'') \in \Delta$ with $(P, Q), (P', Q') \subseteq^{\pm} (P'', Q'')$, for
which then $P \subseteq P'' \subseteq Q'' \subseteq Q'$ holds.  Moreover, union
and intersection form the supremum \wrt $\subseteq$ and $\supseteq$, such that
$\bigcup^{\pm} \Delta = (\bigcup_{(P, Q) \in \Delta} P, \bigcap_{(P, Q) \in
  \Delta} Q)$.  Finally, $(\emptyset, U) \subseteq^{\pm} (P, Q)$ for all $(P, Q)
\in \powerset^{\pm} U$.
\end{proof}

\LEMmustcanmonotone*
\begin{proof}
Let $P_1 \subseteq Q_1$, $P_2 \subseteq Q_2$, $P_1 \subseteq P_2$, and $Q_1
\supseteq Q_2$ hold.  Let $(P_i', Q_i') = \check{R}(P_i, Q_i)$ for $1 \leq i
\leq 2$ such that $P_i' \subseteq Q_i'$ for $1 \leq i \leq 2$.  Then $P_1'
\subseteq P_2'$, since if $(X, \negat Z)/y \in R$ such that $X \subseteq P_1$
and $Q_1 \cap Z = \emptyset$, then $X \subseteq P_2$ by $P_1 \subseteq P_2$ and
$Q_2 \cap Z = \emptyset$ by $Q_1 \supseteq Q_2$; and $Q_1' \supseteq Q_2'$,
since if $(X, \negat Z)/y \in R$ such that $X \subseteq Q_2$ and $P_2 \cap Z =
\emptyset$, then $X \subseteq Q_1$ by $Q_1 \supseteq Q_2$ and $P_1 \cap Z =
\emptyset$ by $P_1 \subseteq P_2$.
\end{proof}

\LEMmustcancontinuous*
\begin{proof}
Since $\check{R}$ is monotone \wrt $\subseteq^{\pm}$, it suffices to prove that
$\check{R}(\bigcup^{\pm} \Delta) \subseteq^{\pm} \bigcup^{\pm}
\check{R}(\Delta)$ for $\Delta \subseteq \powerset^{\pm} U$ directed.  Let first
$(X, \negat Z)/y \in R$ such that $X \subseteq \bigcup_{(P, Q) \in \Delta} P$
and $(\bigcap_{(P, Q) \in \Delta} Q) \cap Z = \emptyset$.  Since $X$ and $Z$ are
assumed to be finite and $\Delta$ is directed, there is a $(P', Q') \in \Delta$
with $X \subseteq P'$ and $Q' \cap Z = \emptyset$.  Let now $(X, \negat Z)/y \in
R$ with $X \subseteq \bigcap_{(P, Q) \in \Delta} Q$ and $(\bigcup_{(P, Q) \in
  \Delta} P) \cap Z = \emptyset$.  Then $X \subseteq Q$ and $P \cap Z =
\emptyset$ for all $(P, Q) \in \Delta$.
\end{proof}

\section{Must/Can Interpretation of \CTLK}\label{app:ctlk}

\begin{lemma}
Let $Y \in \EMCTS(S, E, L, S_0)$ and $\varphi$ a CTLK-formula
over $\ESig$.  Then for all $s \in \RS(Y_{\nu})$ it holds that $Y, s
\models \nnf(\varphi)$ implies $Y, s \not\models \nnf(\neg\varphi)$.
\end{lemma}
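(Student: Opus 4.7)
The plan is to proceed by structural induction on $\nnf(\varphi)$. The propositional cases ($\prop$, $\neg\prop$, $\falsefrm$, $\truefrm$, conjunction, disjunction) are immediate or routine by the induction hypothesis, and the epistemic cases reproduce the reasoning already behind \cref{lem:pos-neg-epistemic} (transported through \cref{lem:pos-neg-constructive}): since $\K{\ag}{}$ quantifies over $\RS(Y_\nu)$ while $\M{\ag}{}$ quantifies over $\RS(Y_\mu) \subseteq \RS(Y_\nu)$, any $\Mop$-witness also sits in the scope of the dual $\Kop$-clause, so the induction hypothesis at that witness delivers the contradiction.

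For the novel temporal cases my key structural observation is that $\Trans(Y_\mu) \subseteq \Trans(Y_\nu)$ entails $\Paths(Y_\mu, s) \subseteq \Paths(Y_\nu, s)$ for every $s \in \RS(Y_\nu)$: every infinite path witnessing an $\Eop$-modality interpreted over the lower bound is automatically among the paths that the dual $\Aop$-modality interpreted over the upper bound must account for. I would then discharge the three primitive cases $\EX{\psi}$, $\EG{\psi}$, and $\EU{\psi_1}{\psi_2}$ in parallel with their De~Morgan duals $\AX{\nnf(\neg\psi)}$, $\AF{\nnf(\neg\psi)}$, and $\AR{\nnf(\neg\psi_1)}{\nnf(\neg\psi_2)}$. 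In each case I assume for a contradiction that both the $\Eop$-formula and its dual $\Aop$-formula hold at $s$, pick the $\Eop$-witness path, note that it lies in $\Paths(Y_\nu, s)$ too, and apply the $\Aop$-clause to locate a state on that path at which the induction hypothesis is violated: the immediate successor for $\EX$ against $\AX$; any position of the path for $\EG$ against $\AF$ (since $\AF$ forces eventual $\nnf(\neg\psi)$ whereas $\EG$ forces $\psi$ everywhere); and, for $\EU$ against $\AR$, the $\EU$-witness position $l$ together with the $\AR$-clause at $l$, which either gives $\nnf(\neg\psi_2)$ at $s_l$ (contradicting $\psi_2$ at $s_l$ via the induction hypothesis) or produces an earlier position $i < l$ carrying $\nnf(\neg\psi_1)$ while $\psi_1$ holds there (again contradicting the induction hypothesis).

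The only mild obstacle is tracking the asymmetry between the lower and upper bounds: the $\Eop$-witness lives in $Y_\mu$ while the $\Aop$-clause ranges over $Y_\nu$, so one must verify at each step that the witness can be re-used inside the $\Aop$-scope. The inclusions $\Trans(Y_\mu) \subseteq \Trans(Y_\nu)$ and $\RS(Y_\mu) \subseteq \RS(Y_\nu)$, together with their consequence $\Paths(Y_\mu, s) \subseteq \Paths(Y_\nu, s)$, handle this uniformly; no fixed-point unfolding or induction on path depth is required, because the $\EG$- and $\EU$-witnesses are full paths that can be inspected directly. Since negation normal form pairs $\Eop$ with $\Aop$ through De~Morgan, the symmetric obligation, starting from an $\Aop$-hypothesis and refuting its $\Eop$-dual, is discharged simultaneously in each inductive case by the same path-transport argument.
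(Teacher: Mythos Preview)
Your proposal is correct and follows essentially the same approach as the paper: structural induction on $\nnf(\varphi)$, reusing \cref{lem:pos-neg-epistemic}/\cref{lem:pos-neg-constructive} for the epistemic cases, and handling the temporal cases via the path inclusion $\Paths(Y_{\mu}, s) \subseteq \Paths(Y_{\nu}, s)$ so that an $\Eop$-witness can be replayed under the dual $\Aop$-clause and the induction hypothesis applied at the relevant position. The paper only spells out the $\EX{}$/$\AX{}$ pair (phrased directly rather than by contradiction) and declares the remaining temporal cases analogous, whereas you sketch $\EG{}$/$\AF{}$ and $\EU{}{}$/$\AR{}{}$ explicitly; both are the same argument.
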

\begin{proof}
We extend the inductive reasoning of \cref{lem:pos-neg-epistemic} (in the form
of \cref{lem:pos-neg-constructive}) by the temporal cases.

\smallskip\noindent%
\emph{Case $\nnf(\varphi) = \EX{\psi}$}: Then $\nnf(\neg\varphi) =
\AX{\nnf(\neg\psi)}$.  Let $Y, s \cmodels \EX{\psi}$ hold for $s \in
\RS(Y_{\nu})$.  Then $Y, s_1 \cmodels \psi$ for some $s_0, s_1, \ldots \in
\Paths(Y_{\mu}, s)$ such that, in particular, $s \in \RS(Y_{\mu})$.
Since $\Paths(Y_{\mu}, s) \subseteq \Paths(Y_{\nu}, s)$ for $s \in
\RS(Y_{\mu}) \subseteq \RS(Y_{\nu})$, $Y, s_1 \not\models
\nnf(\neg\psi)$ for some $s_0, s_1, \ldots \in \Paths(Y_{\nu}, s)$ by the
induction hypothesis.  Thus $Y, s \not\models \AX{\nnf(\neg\psi)}$.

\smallskip\noindent%
\emph{Case $\nnf(\varphi) = \AX{\psi}$}: Then $\nnf(\neg\varphi) =
\EX{\nnf(\neg\psi)}$.  Let $Y, s \models \AX{\psi}$ hold for $s \in
\RS(Y_{\nu})$.  Then $Y, s_1 \models \psi$ for all $s_0, s_1, \ldots \in
\Paths(Y_{\nu}, s)$.  If $s \in \RS(Y_{\nu}) \setminus \RS(Y_{\mu})$, then there
is no path in $Y_{\mu}$ starting from $s$, and thus $Y, s \not\models
\EX{\nnf(\neg\psi)}$.  Let thus $s \in \RS(Y_{\mu})$.  Since $\Paths(Y_{\mu}, s)
\subseteq \Paths(Y_{\nu}, s)$ for $s \in \RS(Y_{\mu}) \subseteq \RS(Y_{\nu})$,
$Y, s_1 \not\models \nnf(\neg\psi)$ for all $s_0, s_1, \ldots \in
\Paths(Y_{\mu}, s)$ by the induction hypothesis.  Thus again $Y, s \not\models
\EX{\nnf(\neg\psi)}$.

\smallskip\noindent%
All other cases are analogous.
\end{proof}

\begin{lemma}
Let $Y, Y' \in \EMCTS(S, E, L, S_0)$ with $Y \sqsubseteq Y'$ and let $\varphi$
be a CTLK-formula over $\ESig$.  Then $Y, s \models \nnf(\varphi)$ implies $Y',
s \models \nnf(\varphi)$ for all $s \in \RS(Y'_{\nu})$.
\end{lemma}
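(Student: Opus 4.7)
The plan is to prove this by structural induction on $\nnf(\varphi)$ viewed as a CTLK-formula in negation normal form, reusing \cref{lem:ext-epistemic} for everything except the temporal operators. For the epistemic, propositional, and boolean connective cases, the earlier lemma either applies verbatim or the argument given there transfers unchanged (it only used the inclusions $\RS(Y_{\mu}) \subseteq \RS(Y'_{\mu})$ and $\RS(Y'_{\nu}) \subseteq \RS(Y_{\nu})$ that are immediate from $Y \sqsubseteq Y'$). So the real work is in handling the new temporal cases $\EX{\psi}$, $\EG{\psi}$, $\EU{\psi_1}{\psi_2}$, $\AX{\psi}$, $\AG{\psi}$, and $\AR{\psi_1}{\psi_2}$.

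First I would record two facts about paths that follow directly from $\Trans(Y_{\mu}) \subseteq \Trans(Y'_{\mu})$ and $\Trans(Y'_{\nu}) \subseteq \Trans(Y_{\nu})$: for every $s$,
\begin{equation*}
\Paths(Y_{\mu}, s) \subseteq \Paths(Y'_{\mu}, s)
\qquad\text{and}\qquad
\Paths(Y'_{\nu}, s) \subseteq \Paths(Y_{\nu}, s)\,.
\end{equation*}
I would also observe the closure property that if $s \in \RS(Y'_{\nu})$ and $(s, s') \in \Trans(Y'_{\mu}) \cup \Trans(Y'_{\nu}) = \Trans(Y'_{\nu})$ (using $Y'_{\mu} \subseteq Y'_{\nu}$), then $s' \in \RS(Y'_{\nu})$; by iteration, every state occurring on any path in $Y'_{\nu}$ or in $Y'_{\mu}$ starting from $s \in \RS(Y'_{\nu})$ again lies in $\RS(Y'_{\nu})$. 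This is the bookkeeping needed to apply the induction hypothesis at every intermediate state.

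For the $\Eop$-flavoured cases, given a witnessing path $s_0, s_1, \ldots \in \Paths(Y_{\mu}, s)$ for $Y, s \models \nnf(\varphi)$, the first path inclusion gives the same sequence as a path in $\Paths(Y'_{\mu}, s)$. Since each $s_i$ lies in $\RS(Y'_{\nu})$ by the closure property, the induction hypothesis promotes each assertion $Y, s_i \models \chi$ to $Y', s_i \models \chi$ for the relevant subformulae $\chi$ (for $\EU$ both $\psi_1$ and $\psi_2$, for $\EG$ the body $\psi$, for $\EX$ just $\psi$ at $s_1$), yielding the desired path witness for $Y'$. For the $\Aop$-flavoured cases, given any path $s_0, s_1, \ldots \in \Paths(Y'_{\nu}, s)$, the second path inclusion makes it a path in $\Paths(Y_{\nu}, s)$ as well, so the universal assumption $Y, s \models \nnf(\varphi)$ delivers the required subformula satisfactions along it in $Y$; again each $s_i$ is in $\RS(Y'_{\nu})$, and the induction hypothesis transfers these to $Y'$, establishing $Y', s \models \nnf(\varphi)$.

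The only mildly delicate point is $\AR{\psi_1}{\psi_2}$, whose semantics interleaves a universal quantification with a sequence-indexed implication; here I would be careful to unfold the definition so that the induction hypothesis is invoked on $\psi_1$ and $\psi_2$ separately at the appropriate prefix indices, and rely on the path closure observation to keep every $s_i$ inside $\RS(Y'_{\nu})$. I do not anticipate any real obstacle — the argument is a routine lifting of the epistemic proof, and the only thing that could go wrong is orienting one of the two path inclusions the wrong way, which is ruled out by the definition of $\sqsubseteq$ (lower bound grows, upper bound shrinks).
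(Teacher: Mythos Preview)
Your proposal is correct and follows essentially the same route as the paper: structural induction on the NNF formula, reusing \cref{lem:ext-epistemic} for the non-temporal cases and handling the temporal ones via the two path inclusions $\Paths(Y_{\mu},s)\subseteq\Paths(Y'_{\mu},s)$ and $\Paths(Y'_{\nu},s)\subseteq\Paths(Y_{\nu},s)$; the paper spells out only $\EX{}$ and $\AX{}$ and declares the rest analogous, while you are a bit more explicit about the closure bookkeeping needed to keep intermediate states in $\RS(Y'_{\nu})$. One small slip: the dual of $\EG{}$ in NNF is $\AF{}$, not $\AG{}$, so your list of temporal cases should read $\EX{}$, $\EG{}$, $\EU{}{}$, $\AX{}$, $\AF{}$, $\AR{}{}$.
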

\begin{proof}
We extend the inductive reasoning of \cref{lem:ext-epistemic} by the temporal cases.

\smallskip\noindent%
\emph{Case $\nnf(\varphi) = \EX{\psi}$}: Let $Y, s \models \EX{\psi}$ hold for
an $s \in \RS(Y_{\nu}') \subseteq \RS(Y_{\nu})$.  Then $Y, s_1
\models \psi$ for some $s_0, s_1, \ldots \in \Paths(Y_{\mu}, s)$, such that, in
particular, $s \in \RS(Y_{\mu})$.  Thus $Y, s_1 \models \psi$ for some
$s_0, s_1, \ldots \in \Paths(Y_{\mu}', s)$, as $\Paths(Y_{\mu}, s) \subseteq
\Paths(Y_{\mu}', s)$.  Since $\RS(Y_{\mu}) \subseteq
\RS(Y_{\mu}') \subseteq \RS(Y_{\nu}')$, it follows by induction
hypothesis that $Y', s_1 \models \psi$ for some $s_0, s_1, \ldots \in
\RS(Y_{\mu}', s)$, and thus $Y', s \models \EX{\psi}$.

\smallskip\noindent%
\emph{Case $\nnf(\varphi) = \AX{\psi}$}: Let $Y, s \models \AX{\psi}$ hold for
an $s \in \RS(Y_{\nu}') \subseteq \RS(Y_{\nu})$.  Then $Y, s_1
\models \psi$ for all $s_0, s_1, \ldots \in \Paths(Y_{\nu}, s)$.  Thus $Y, s_1
\models \psi$ for all $s_0, s_1, \ldots \in \Paths(Y_{\nu}', s)$, as
$\Paths(Y_{\nu}, s) \supseteq \Paths(Y_{\nu}', s)$.  By induction hypothesis,
$Y', s_1 \models \psi$ for all $s_0, s_1, \ldots \in \Paths(Y_{\nu}', s)$, and
thus $Y', s \models \AX{\psi}$.

\smallskip\noindent%
All other cases are analogous.
\end{proof}

\end{appendix}

\end{document}